\titleformat{\subsection}[runin]{\normalfont\bfseries}{\thesubsection.}{.5em}{}[.]\titlespacing{\subsection}{0pt}{2ex plus .1ex minus .2ex}{.8em}
\titleformat{\subsubsection}[runin]{\normalfont\itshape}{\thesubsubsection.}{.3em}{}[.]\titlespacing{\subsubsection}{0pt}{1ex plus .1ex minus .2ex}{.5em}
\titleformat{\paragraph}[runin]{\normalfont\itshape}{\theparagraph.}{.3em}{}[.]\titlespacing{\paragraph}{0pt}{1ex plus .1ex minus .2ex}{.5em}
\newcommand{{\small \input{.pdf_tex}}}[1]{{\small \input{#1.pdf_tex}}}
\definecolor{darkred}{rgb}{0.9,0,0.3}
\definecolor{darkblue}{rgb}{0,0.3,0.9}
\definecolor{orange}{rgb}{0.98, 0.6, 0.01}
\definecolor{vdarkred}{rgb}{0.6,0,0.2}
\definecolor{vdarkblue}{rgb}{0,0.2,0.6}
\numberwithin{equation}{section}
\numberwithin{figure}{section}
\theoremstyle{plain} 
\newtheorem{theorem}{Theorem}[section]
\newtheorem*{theorem*}{Theorem}
\newtheorem{lemma}[theorem]{Lemma}
\newtheorem*{lemma*}{Lemma}
\newtheorem{corollary}[theorem]{Corollary}
\newtheorem*{corollary*}{Corollary}
\newtheorem{proposition}[theorem]{Proposition}
\newtheorem*{proposition*}{Proposition}
\newtheorem*{conjecture*}{Conjecture}
\theoremstyle{definition} 
\newtheorem{definition}[theorem]{Definition}
\newtheorem*{definition*}{Definition}
\newtheorem*{example*}{Example}
\newtheorem{remark}[theorem]{Remark}
\newtheorem*{remark*}{Remark}
\newtheorem*{assumption*}{Assumption}
\renewcommand{\b}[1]{\boldsymbol{\mathrm{#1}}} 
\newcommand{\bb}{\mathbb} 
\renewcommand{\cal}{\mathcal} 
\newcommand{\fra}{\mathfrak} 
\newcommand{\ol}[1]{\overline{#1} \!\,} 
\newcommand{\wh}{\widehat}
\newcommand{\f}[1]{\bm{\mathrm{#1}}}
\renewcommand{\P}{\mathbb{P}}
\newcommand{\E}{\mathbb{E}}
\newcommand{\R}{\mathbb{R}}
\newcommand{\C}{\mathbb{C}}
\newcommand{\N}{\mathbb{N}}
\newcommand{\Z}{\mathbb{Z}}
\newcommand{\ee}{\mathrm{e}}
\newcommand{\ii}{\mathrm{i}}
\newcommand{\dd}{\mathrm{d}}
\newcommand{\col}{\mathrel{\vcenter{\baselineskip0.75ex \lineskiplimit0pt \hbox{.}\hbox{.}}}}
\newcommand*{\deq}{\mathrel{\vcenter{\baselineskip0.65ex \lineskiplimit0pt \hbox{.}\hbox{.}}}=}
\newcommand*{\eqd}{=\mathrel{\vcenter{\baselineskip0.65ex \lineskiplimit0pt \hbox{.}\hbox{.}}}}
\renewcommand{\leq}{\leqslant}
\renewcommand{\le}{\leqslant}
\renewcommand{\geq}{\geqslant}
\renewcommand{\ge}{\geqslant}
\renewcommand{\epsilon}{\varepsilon}
\newcommand{\qq}[1]{[\![{#1}]\!]}
\newcommand{\ind}[1]{\b 1 (#1)}
\newcommand{\p}[1]{({#1})}
\newcommand{\pb}[1]{\bigl({#1}\bigr)}
\newcommand{\pbb}[1]{\biggl({#1}\biggr)}
\newcommand{\pBB}[1]{\Biggl({#1}\Biggr)}
\newcommand{\qB}[1]{\Bigl[{#1}\Bigr]}
\newcommand{\qbb}[1]{\biggl[{#1}\biggr]}
\newcommand{\hb}[1]{\bigl\{{#1}\bigr\}}
\newcommand{\abs}[1]{\lvert #1 \rvert}
\newcommand{\absb}[1]{\bigl\lvert #1 \bigr\rvert}
\newcommand{\norm}[1]{\lVert #1 \rVert}
\newcommand{\avg}[1]{\langle #1 \rangle}
\newcommand{\avgbb}[1]{\biggl\langle #1 \biggr\rangle}
\newcommand{\scalar}[2]{\langle{#1} \mspace{2mu}, {#2}\rangle}
\DeclareMathOperator{\tr}{Tr}
\DeclareMathOperator{\re}{Re}
\DeclareMathOperator{\im}{Im}
\title{Diffusion Profile for Random Band Matrices: a Short Proof}
\author{Yukun He\footnote{University of Z\"{u}rich, Institute of Mathematics, {\tt yukun.he@math.uzh.ch}.} 
\and Matteo Marcozzi\footnote{University of Geneva, Section of Mathematics, {\tt matteo.marcozzi@unige.ch}.}}
\begin{document}
 
\maketitle

\begin{abstract}
Let $H$ be a Hermitian random matrix whose entries $H_{xy}$ are independent, centred random variables with variances $S_{xy} = \E|H_{xy}|^2$, where $x, y \in (\Z/L\Z)^d$ and $d \geq 1$. The variance $S_{xy}$ is negligible if $|x - y|$ is bigger than the band width $W$. 

For $ d = 1$ we prove that if $L \ll W^{1 + \frac{2}{7}}$ then the eigenvectors of $H$ are delocalized and that an averaged version of $|G_{xy}(z)|^2$ exhibits a diffusive behaviour, where $ G(z) = (H-z)^{-1}$ is the resolvent of $ H$. This improves the previous assumption $L \ll W^{1 + \frac{1}{4}}$ of \cite{EKYY13}. In higher dimensions $d \geq 2$, we obtain similar results that improve the corresponding ones from \cite{EKYY13}. Our results hold for general variance profiles $S_{xy}$ and distributions of the entries $H_{xy}$.

The proof is considerably simpler and shorter than that of \cite{EKY13,EKYY13}. It relies on a detailed Fourier space analysis combined with isotropic estimates for the fluctuating error terms. It is completely self-contained and avoids the intricate fluctuation averaging machinery from \cite{EKY13}.
\end{abstract}

\section{Introduction}

Given a large finite graph $\Gamma$, random band matrices $H=(H_{xy})_{x,y\in \Gamma}$ are matrices whose entries $ H_{xy}$ are independent and centred
random variables and the variance $S_{xy} \deq \E |H_{xy}|^2$ typically decays with the distance on a characteristic 
length scale $W$, called the \emph{band width} of $H$.

This name is due to the simplest one-dimensional
model where $\Gamma = \{1, 2, \dots, N\}$ and $H_{xy}=0$ if $|x-y|\geq W$, where $ 1 \leq W \leq L$. As an example of higher-dimensional models, one can take  
$\Gamma$ to be the box 
of linear size $L$ in $\Z^d$, so that the dimension of the matrix is
$N=L^d$. For a more general and extensive presentation of random band matrix models, we refer to \cite{Spe}.

From the physics view point, random band matrices turn out to be very useful to study the disordered systems. In fact, it is conjectured that, depending on the level of energy and disorder strength, all these systems belong to two universality classes: in the strong disorder regime (as for the random Schr\"odinger
operator models such as the Anderson model \cite{And}), the eigenfunctions are localized and the local spectral statistics are Poisson, while in the weak disorder regime (as for the mean-field models such as Wigner matrices \cite{Wig}), the
eigenfunctions are delocalized and the local statistics are those of a mean-field Gaussian matrix ensemble.

As $ W$ varies, random band matrices interpolate between these two classes: in particular, we recover the Wigner matrices by setting $W=N$ and all
variances equal, while for $W = O(1)$ we essentially obtain the Anderson model.  
The \emph{delocalization} property is expressed in the term of the \emph{localization length} $\ell$,
which, in the framework of random matrices, describes the typical length scale of the eigenvectors of $H$: if the localization length is comparable with the system size, $\ell\sim L$, the system is \emph{delocalized}
and it is {\it localized} otherwise. The direct physical interpretation of the delocalization is that delocalized systems describe electric conductors,
while localized systems insulators.
Therefore, random band matrices represent a good model to investigate the \emph{Anderson
metal-insulator phase transition}.

According to nonrigorous results \cite{FM}, for random band matrices the localization length is expected to be $\ell\sim W^2$ in $d=1$, exponentially growing in $W$ in $d=2$, 
and $\ell\sim L$ in $d\ge 3$, i.e.\
the system is delocalized. Notice that the claim that $ \ell\sim W^2$ in $ d = 1$ is equivalent to have the delocalization of the eigenvectors for $ W \geq L^{\frac{1}{2}}$: this is the formulation used in the abstract.  For more details on these conjectures, see \cite{Spe, Sp, Sch}. 

%
%
%
%

Up to now, only lower and upper bounds have been rigorously established for $\ell$. Since the Anderson transition can be studied via random matrices directly in $ d = 1$ by varying $ W$ in the interval $ 1 \leq W \leq L$, here we focus on the results in the one-dimensional setting, but analogous theorems were proved for higher dimensions.

Schenker \cite{Sch} showed that $\ell\le W^8$,
uniformly in the system size. The lower bound $\ell\ge W$ was proved in \cite{EYY1} by using
a self-consistent equation for the diagonal matrix entries $G_{xx}$
of the \emph{resolvent} (or \emph{Green function}) $G=G(z)=(H-z)^{-1}$, $z = E + \ii \eta \in \C$. In a series of papers Erd\H{o}s and collaborators improved this lower bound: in \cite{EK1, EK2} it was proved that $\ell \ge W^{1 + \frac{1}{6}}$ 
by using diagrammatic perturbation theory and then in \cite{EKYY13} the authors showed that $\ell \ge W^{1 + \frac{1}{4}}$  by employing a self consistent equation for an averaged version of $G$ and the so called \emph{fluctuation averaging estimates} \cite{EKY13}.

In \cite{EK1, EK2} the unitary time evolution, $\ee^{\ii tH}$, was analysed and it was shown to  
behave diffusively on the spatial scale $W$, i.e.\ the typical propagation
distance is $\sqrt{t} W$. From this, one deduces that a (superposition of) random walk with step size of order $W$ is responsible for the delocalization of random band matrices. The barrier at $W^{1 + \frac{1}{6}}$ is due to the fact, in order to see that the localization length $ \ell$ is greater than the naive size $W$, a control of the random walk for large times is needed, but the diagrammatic expansion used in \cite{EK1, EK2} allows to control the time evolution only up to time $t\ll W^{1/3}$. Hence, the delocalization occurs on the scale $\sqrt{t} W = W^{1 + \frac{1}{6}}$.

In \cite{EKYY13, EKY13} the same result was translated in term of the \emph{resolvent} $G$: as explained in Remark 2.7 in \cite{EKYY13}, controlling $\ee^{\ii tH}$ up to $t \ll W^{\nu}$ for some exponent $\nu > 0$ is equivalent to controlling $G(z)$ for $\eta \gg W^{-\nu}$. In \cite{EKYY13, EKY13} the authors managed to overcome the technical barrier of \cite{EK1, EK2} and they proved the diffusion behaviour of $G(z)$ when $\eta \gg W^{-1/2}$. 

We emphasize that the delocalization results \cite{EK1, EK2, EKYY13, EKY13} hold for general variance profiles $S_{xy}$ and distributions of the entries $H_{xy}$. Our results hold under similarly general assumptions.

In this paper we improve the results of \cite{EKYY13, EKY13}: we prove that $\ell \gg W^{1 + \frac{2}{7}}$ and that $G(z)$ exhibits a diffusive behaviour for $\eta \gg W^{-4/7}$. As in \cite{EKYY13}, the main object of interest is the matrix $T$, whose entries are local averages of $|G_{xy}|^2$:
$$
   T_{xy} \;\deq\; \sum_i S_{xi} |G_{iy}|^2.
$$
The band structure of $H$ is obtained by assuming that $S_{xy} : = \E |H_{xy}|^2 \approx W^{-d} f((i - j) / W)$ where $f$ is a symmetric probability density on $\R^d$. Note that in this way $ S$ is translation invariant.

We will show that $T$ satisfies a self-consistent equation of the form
\begin{equation} \label{self-const intro}
   T_{xy} = |\mathfrak{m}|^2\sum_{i} S_{xi} T_{iy} + |\mathfrak{m}|^2 S_{xy} + \cal E_{xy}\,,
\end{equation}
where 
$\mathfrak m \equiv \mathfrak m(z)$ is an explicit function of the spectral parameter $z\in \C$
and $\cal E$ is an error term. 

Translation invariance of $S$ implies that its Fourier transform $\wh s(p)$ for $\abs{p} \ll W^{-1}$ reads
\begin{equation} \label{low-p exp}
\wh s(p) \;\approx\; 1 - W^2 (p \cdot D p) + \cdots\,,
\end{equation}
where $D$ is the matrix of second moments of $f$ (see \eqref{def D}).

Neglecting the error term $\cal E$, we get from \eqref{self-const intro} that
\begin{equation} \label{T_approx}
  T \;\approx\; \frac{|\mathfrak m|^2S}{1-|\mathfrak m|^2S}\,.
\end{equation}
By using $|\fra m|^2= 1-\alpha\eta+ O(\eta^2)$ (see \eqref{m2} below), where
\begin{equation} \label{def alpha}
\alpha \;\equiv\; \alpha(E) \;\deq\; \frac{2}{\sqrt{4 - E^2}} \qquad (E = \re z)\,,
\end{equation} 
we obtain that the Fourier transform of $T$ is approximately given by
\begin{equation} \label{approx T hat}
    \frac{\alpha^{-1}}{\eta +  W^2(p \cdot D_{\rm eff} \, p)} \qquad \text{where} \qquad  D_{\rm eff} \;\deq\;
  \frac{D}{\alpha}\,,
\end{equation}
for $\abs{p}\ll W^{-1}$  and $\eta\ll 1$. 
This corresponds to the  diffusion approximation on scales larger than $W$
with  an effective diffusion constant  $D_{\rm eff}$.
As in \cite{EKYY13,EKY13}, the main work of the proof is to estimate the error term $\cal E_{xy}$ from \eqref{self-const intro}, in order to make the approximation \eqref{T_approx} rigorous.

The crucial difference between our approach and that of \cite{EKYY13} is how this error term is estimated. In \cite{EKYY13} it is bounded by the fluctuation averaging estimates relying on very intricate expansions derived using the Schur's complement formula \cite{EKY13}. In contrast, in this paper we first perform a careful analysis of the error term in Fourier space, and then apply isotropic error estimates to the resulting Fourier coefficients. Here, \emph{isotropic} refers to generalized matrix entries $\scalar{\b v}{\cal E \b w}$ where the vectors $\b v, \b w$ do not lie in the direction of the standard coordinate axes. A trivial but essential observation behind our proof is that the Fourier basis is completely delocalized. The isotropic error estimates are derived using the \emph{cumulant expansion formula} (see Lemma \ref{lem:cumulant_expansion} below) in the formulation given in \cite{HK17, HKR17}. We do not use Schur's complement formula at all. Thus, apart from sharing the basic starting point and self-consistent equation \eqref{self-const intro} with \cite{EKYY13, EKY13}, our proof differs fundamentally from that of \cite{EKYY13, EKY13}. For more details on the proof strategy, see Section \ref{sec:fourier}. We would also like to emphasis that our proof is considerably simpler than that of \cite{EKYY13, EKY13}: the arguments in \cite{EKYY13} need \cite{EKY13} as an input, and they are altogether 120 pages; on the other hand, our proof is completely self-contained. 

More recently, the series of works \cite{BYY18,BYYY18,YY18} further improves the result of this paper. For $W \gg L^{\frac{3}{4}}$, it was proved that the $L^{\infty}$-norm of the bulk eigenvectors of $H$ are all simultaneously bounded by $N^{-\frac{1}{2}+\varepsilon}$ with overwhelming probability. The authors also prove bulk universality under the same condition. The proof uses a strong (high probability) version of the quantum unique ergodicity property of random matrices, as well as estimates on the ``generalized resolvent" of band matrices. The later is obtained by the intricate fluctuation averaging machinery similar to \cite{EYY1}, but the estimates are done much more carefully. As a result, \cite{BYY18,BYYY18,YY18} are also more lengthy and complicated than this paper.

\subsection*{Acknowledgments}
We thank Antti Knowles, who motivated us to study this problem,
for useful discussions and suggestions on the topic. This project has received funding from the European Research Council (ERC) under the European Union’s Horizon 2020 research and innovation programme (grant agreement No.\ 715539\_RandMat) and the Swiss National Science Foundation (SNF).

\section{The main results}

Let $f:\R^d \to \R$ be a smooth and symmetric probability density for some fixed $d \geq 1$.
Let $L, W \in \N$ such that
\begin{equation} \label{lower bound on W}
L^\delta \;\leq\; W \;\leq\; L
\end{equation}
for some fixed $\delta > 0$. 
We set $\bb T^d_L \deq [-L/2, L/2)^d \cap \Z^d $ to be the $d$-dimensional discrete torus, 
so that $\bb T^d_L$ has $N \deq L^d$ lattice points. For the following we fix an (arbitrary) ordering of $\bb T_L^d$, which allows us to identify it with $\{1, \dots, N\}$. Let the canonical representative of $i \in \Z^d$ be
\begin{equation*}
[i]_L \;\deq\; (i + L \Z^d) \cap \bb T^d_L\,,
\end{equation*}
and the periodic distance $\abs{i}_L \;\deq\; \absb{[i]_L}$,
where $\abs{\cdot}$ denotes Euclidean distance in $\R^d$.

Moreover, define the $N \times N$ matrix $S(L,W) \equiv S = (S_{ij} \col i,j \in \bb T_L^d)$ through
\begin{equation}\label{sij}
S_{ij} \;\deq\; \frac{1}{Z_{L,W}} \, f \pbb{\frac{[i - j]_L}{W}}\,,
\end{equation}
where $Z_{L,W}$ is a normalization constant such that for all $i \in \bb T_L^d$
\begin{equation}\label{stoch}
\sum_j S_{ij} \;=\; 1\,,
\end{equation}
i.e.\ $S$ is a stochastic matrix. Here we adopted the convention that summations
are always over the set $\bb T_L^d$, unless specified otherwise.
The symmetry of $f$ implies that $S$ is also symmetric: $S_{ij} = S_{ji}$. 

Let $(\zeta_{ij} \col i \leq j)$, where $i,j \in \bb T_L^d$, be a family of independent, 
complex-valued, centred random variables $\zeta_{ij} \equiv \zeta_{ij}^{(N)}$ satisfying
\begin{equation}\label{zetacond}
\E \zeta_{ij} \;=\; 0\,, \qquad \E \abs{\zeta_{ij}}^2 \;=\; 1\,, \qquad \zeta_{ii}\in \R, \qquad \zeta_{ij} \deq \bar \zeta_{ji} \mbox{ for } i > j.
\end{equation}
The band matrix $H = (H_{ij})_{i,j\in \bb T_L^d}$ is defined as
\begin{equation}\label{def:band}
H_{ij} \;\deq\; (S_{ij})^{1/2} \, \zeta_{ij}\,,
\end{equation}
By definition we have $H = H^*$ and $\E \abs{H_{ij}}^2 \;=\; S_{ij}$. Moreover, we assume that the random variables $\zeta_{ij}$ have finite moments, uniformly in $N$, $i$, and $j$, in the sense that for all $p \in \N$ there is a constant $\mu_p$ such that
\begin{equation} \label{finite moments}
\E \abs{\zeta_{ij}}^p \;\leq\; \mu_p
\end{equation}
for all $N$, $i$, and $j$. Furthermore, we set the parameter
\begin{equation}  \label{s leq W}
M \;\equiv\; M_N \;\deq\; \frac{1}{\max_{i, j} S_{ij}}\,.
\end{equation}
Note that $M = \pb{W^d + O(W^{d - 1})} / \norm{f}_\infty$ since the definition of $S$ implies that $Z_{N,W} = W^d +O(W^{d-1})$. 
Here we used  the  usual $O(\cdot)$ notation. Furthermore, we will write $O_{\alpha}(\cdot)$ if  the  implicit  constant
depends on the parameter $\alpha$ which can never depend on $N$. 

The following definition introduces a notion of a high-probability bound that will be used throughout the whole paper.
\begin{definition}[Stochastic domination]\label{sdI}
Let 
$$
X = \pb{X^{(N)}(u) \col N \in \N, u \in U^{(N)}}, \ \ \ Y = \pb{Y^{(N)}(u) \col N \in \N, u \in U^{(N)}}
$$
be two families of random variables, where the $Y^{(N)}(u)$ are nonnegative and $U^{(N)}$ is a possibly $N$-dependent parameter set.  We say that $X$ is \emph{stochastically dominated by $Y$, uniformly in $u$,} if for all $\epsilon > 0$ and  $D > 0$ we have
\begin{equation*}
\sup_{u \in U^{(N)}} \P \qB{|X^{(N)}(u)| > N^\epsilon Y^{(N)}(u)} \;\leq\; N^{-D}
\end{equation*}
for large enough $N\ge N_0(\epsilon, D)$. Unless stated otherwise, 
throughout this paper the stochastic 
domination will always be uniform in all parameters apart from the parameter $\delta$ in \eqref{lower bound on W} and the sequence of constants $\mu_p$ in \eqref{finite moments}; thus, $N_0(\epsilon, D)$ also depends on $\delta$ and $\mu_p$.
If $X$ is stochastically dominated by $Y$, uniformly in $u$, we use the equivalent notations
\begin{equation*}
X \;\prec\; Y \qquad \text{and} \qquad X \;=\; O_\prec(Y)\,.
\end{equation*}
\end{definition}

As stated in Lemma \ref{lemma:basic_properties_of_prec} below, the relation $\prec$ satisfies the familiar algebraic rules of order relations. Moreover, note that for deterministic $X$ and $Y$, $X = O_{\prec}(Y)$ implies $X = O_{\epsilon}(N^{\epsilon}Y )$ for any $\epsilon > 0$. 

We note that, if \eqref{finite moments} only holds for some large but fixed $p$, Definition \ref{sdI} should be slightly modified so that our results would still remain true, even though in a slightly weaker sense. In fact, in this case the exponents $\epsilon$ and $D$ in the definition of $\prec$ must depend on $p$, therefore one should keep track of the $p$ dependencies in all the stochastic domination bounds involved in the proof. 

We use the spectral parameter $z = E + \ii \eta \in \C$ with $ \eta := \mathrm{Im} \,z > 0$ and we fix two arbitrary (small) global constants $\gamma > 0$ and $\kappa > 0$.
 All of our estimates will depend on $\kappa$ and $\gamma$, and we shall often 
omit the explicit mention of this dependence. We set
\begin{equation}\label{def:S}
\mathbf S \;\equiv\; \mathbf S^{(N)}(\kappa, \gamma) \;\deq\; \hb{E + \ii \eta \col -2 + \kappa \leq E \leq 2 - \kappa \,,\, M^{-1 + \gamma} \leq \eta \leq 10}\,.
\end{equation}
and we shall always assume that $z \in \mathbf S(\kappa, \gamma)$. 


We will use the Stieltjes transform of Wigner's semicircle law, which is defined by
\begin{equation} \label{definition of msc}
\mathfrak m(z) \;\deq\; \frac{1}{2 \pi} \int_{-2}^2 \frac{\sqrt{4 - \xi^2}}{\xi - z} \, \dd \xi\,
\end{equation}
and is characterized by the unique solution of
\begin{equation} \label{eq:m_equation}
\mathfrak m(z) + \frac{1}{\mathfrak m(z)} + z \;=\; 0
\end{equation}
with $\im \mathfrak m(z) > 0$ for $\im z > 0$, i.e.\ 
\begin{equation} \label{explicit m}
\mathfrak m(z) \;=\; \frac{-z + \sqrt{z^2 - 4}}{2}\,.
\end{equation}
We define the \emph{resolvent} of $H$ through
\begin{equation*}
G \;\equiv\; G(z) \;\deq\; (H - z)^{-1}\,,
\end{equation*}
and denote its entries by $G_{ij}(z)$. 

Moreover, throughout the paper $C$ and $c$ will denote a generic large and small positive constant respectively, which may depend on some fixed parameters and whose value may change from one expression to the next. Given two positive quantities $A_N$ and $B_N$, the notation $A_N \asymp B_N$ means $c A_N \leq B_N \leq C A_N$, while $A_N \ll B_N$ means that there exists a constant $c > 0$ such that $A_N \leq N^{-c} B_N$; we also use $A_N \gg B_N$ to denote $B_N \ll A_N$. We also set $\avg{x} \deq \sqrt{1 + \abs{x}^2}$. Notice that we will often drop the $z$ dependence from the notation, even though most quantities 
in this paper depend on it.

Finally, we assume the following decay condition on $f$ (and therefore on $S$)
\begin{equation} \label{decay of f}
\abs{f(x)} \;\leq\; C_n \avg{x}^{-n} \qquad \text{for all } n \in \N\,.
\end{equation}
We are now ready to state our main theorems in $d=1$. The analogous results for $d >1$ are discussed in Section \ref{sec:high_dim}.

\subsection{Local semicircle law and delocalization}
We introduce the control parameter $ \Phi$ through
\begin{align}\label{eq:def_Phi}
\Phi^2 := \frac{1}{L\eta} + \frac{1}{W\sqrt{\eta}}\,.
\end{align}
Our first theorem is a local semicircle law for the resolvent entries: more precisely we get a bound for the random variable 
\begin{equation*}
\Lambda(z) \;\deq\; \max_{x,y} \absb{G_{xy}(z) - \delta_{xy} \mathfrak m(z)}\,.
\end{equation*}

\begin{theorem}[Local semicircle law]\label{th:local_law_d=1}
Assume \eqref{decay of f} and that
\begin{equation} \label{cond on N eta}
L \ll W^{1 + \frac{1}{3}}, \ \ \ \   \eta \gg \frac{L^{3/2}}{W^{5/2}}\,.
\end{equation}
Then for $z\in \mathbf S$ we have
\begin{equation}\label{decc}
 \Lambda^2 \;\prec\; \Phi^2\,.
\end{equation}
\end{theorem}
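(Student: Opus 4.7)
The strategy is the classical bootstrap for local laws, combined with the Fourier/isotropic machinery highlighted in the introduction. Assume inductively that $\Lambda \prec \Psi$ for some deterministic control parameter $\Psi$ with $\Phi \leq \Psi \ll 1$; the goal is to improve this to $\Lambda \prec \Phi$. Continuity of $z \mapsto \Lambda(z)$ together with a trivial bound at the initial scale $\eta \asymp 1$ (where $\Lambda \prec M^{-c}$ follows from the Wigner-type argument since $\mathrm{Im}\,\mathfrak{m} \asymp 1$) will let us iterate down to the smallest $\eta$ satisfying \eqref{cond on N eta}.

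The first ingredient is a self-consistent equation for $T_{xy} = \sum_i S_{xi} |G_{iy}|^2$. Writing $T_{xy} = \sum_i S_{xi} G_{iy}\overline{G_{iy}}$, I apply the cumulant expansion formula (Lemma \ref{lem:cumulant_expansion}) to the $H$-entries hidden in one copy of $G$, expanding via the identity $H G = I + z G$. The second-order cumulant produces the main term $|\mathfrak{m}|^2 S T + |\mathfrak{m}|^2 S$ of \eqref{self-const intro}, after replacing diagonal entries $G_{ii}$ by $\mathfrak{m}$ using the inductive hypothesis $\Lambda \prec \Psi$. The remainder is
\begin{equation*}
\mathcal{E} \;=\; \mathcal{E}_{\text{fluc}} + \mathcal{E}_{\text{cum}} + \mathcal{E}_{\text{repl}},
\end{equation*}
where $\mathcal{E}_{\text{fluc}}$ collects the fluctuating parts (which upon a further cumulant expansion are small), $\mathcal{E}_{\text{cum}}$ collects the contributions from cumulants of order $\geq 3$ (small powers of $M^{-1/2}$), and $\mathcal{E}_{\text{repl}}$ accounts for replacing $G_{ii}\mapsto\mathfrak{m}$. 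The crucial point is to bound $\mathcal{E}$ \emph{isotropically}: for any deterministic unit vectors $\mathbf{v},\mathbf{w}$, a direct cumulant expansion of $\langle \mathbf{v}, \mathcal{E}\,\mathbf{w}\rangle$ combined with the Ward identity $\sum_y |G_{xy}|^2 = \eta^{-1}\mathrm{Im}\,G_{xx}$ will yield a bound of the form
\begin{equation*}
|\langle \mathbf{v}, \mathcal{E}\,\mathbf{w}\rangle| \;\prec\; \Psi\sqrt{\|T\|_\infty}\,\|\mathbf{v}\|_\infty\|\mathbf{w}\|_\infty^{1/2} + (\text{subleading}).
\end{equation*}

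The second, and most important, step is the Fourier analysis. Since $S$ is translation invariant, it is diagonalised by the Fourier basis $\mathbf{e}_p(x) = L^{-1/2} e^{2\pi i p\cdot x/L}$, with eigenvalues $\widehat{s}(p)$ obeying \eqref{low-p exp}. Projecting the self-consistent equation for $T$ onto this basis gives
\begin{equation*}
\widehat{T}(p,q) \;=\; \frac{|\mathfrak{m}|^2\widehat{s}(p)\,\delta_{p,q}}{1-|\mathfrak{m}|^2\widehat{s}(p)} + \frac{\langle \mathbf{e}_p,\mathcal{E}\,\mathbf{e}_q\rangle}{1-|\mathfrak{m}|^2\widehat{s}(p)}.
\end{equation*}
Because the Fourier vectors $\mathbf{e}_p$ are maximally delocalized ($\|\mathbf{e}_p\|_\infty = L^{-1/2}$), the isotropic bound on $\mathcal{E}$ gives a very strong estimate on the matrix elements $\langle\mathbf{e}_p,\mathcal{E}\,\mathbf{e}_q\rangle$ — this is precisely the pay-off of the isotropic estimate over an entrywise one. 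Writing $|\mathfrak{m}|^2 = 1-\alpha\eta + O(\eta^2)$ and using \eqref{low-p exp}, the denominator behaves like $\alpha\eta + W^2(p\cdot D p)$, and $T_{xx} = L^{-1}\sum_p \widehat{T}(p,p)$ is computed by splitting the sum into the zero mode (contributing $\asymp 1/(L\eta)$) and the nonzero modes (which give $\asymp 1/(W\sqrt{\eta})$ in $d=1$ by comparison with the continuum Riemann sum). This shows $T_{xx} \prec \Phi^2$ modulo the Fourier-space error.

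Finally, I translate this back into a bound on $\Lambda$. For off-diagonal entries, since $|G_{xy}|^2 \leq C T_{xy}/S_{\min}$ on the diagonal-dominant scale, or more cleanly $T_{xx} \geq c W^{-1}\max_y|G_{xy}|^2$ for $y$ close to $x$, together with a Combes-Thomas type decay for far $y$, one gets $\max_{x,y}|G_{xy}|^2 \prec T_{xx} \prec \Phi^2$. For diagonal entries a parallel cumulant expansion for $G_{xx}$ yields
\begin{equation*}
G_{xx} = \mathfrak{m} + \mathfrak{m}^2\sum_j S_{xj}(G_{jj}-\mathfrak{m}) + O_\prec\pb{\Phi + \Psi^2},
\end{equation*}
and inverting the operator $1-\mathfrak{m}^2 S$ in Fourier (again using \eqref{low-p exp}) gives $|G_{xx}-\mathfrak{m}|\prec\Phi$. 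The gain $\Psi\mapsto\Phi$ closes the bootstrap provided the Fourier error term in Step 3 is genuinely smaller than $\Phi^2$, which is where the quantitative assumption $\eta\gg L^{3/2}/W^{5/2}$ enters: it ensures that $\Psi^2/(\eta+W^2 p^2)$ summed over $p$ stays below $\Phi^2$. The main obstacle is of course the isotropic control of $\mathcal{E}$ at Fourier test vectors; obtaining it without the fluctuation-averaging machinery of \cite{EKY13} requires a careful iterated cumulant expansion whose remainders are themselves evaluated using the Ward identity and the hypothesis $\Lambda\prec\Psi$, so that the argument is genuinely self-improving rather than circular.
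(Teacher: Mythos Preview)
Your outline has the right architecture --- self-consistent equation for $T$, Fourier inversion, isotropic error bounds, bootstrap --- but two steps are genuinely missing and would not work as written.

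First, the isotropic estimate on $\cal E$. You propose to get $|\langle\mathbf v,\cal E\,\mathbf w\rangle|$ by ``a direct cumulant expansion'' plus Ward. The paper explicitly explains why this fails: applying the cumulant expansion to $\cal E_{xy}$ (or its isotropic version) produces no cancellation even at the level of the expectation. The fix is to split $\cal E=\cal P+\cal R$, where $\cal P$ is engineered so that $\E\,\cal P_{xy}=0$ in the Gaussian case; only then does a high-moment cumulant expansion of $\cal P_{\mathbf v y}$ (one isotropic index, one coordinate index --- not two vectors) yield the precise bounds $\cal P_{\mathbf v y}\prec \Psi^3/\sqrt\eta+\Psi/\sqrt{L\eta}$ and $\cal R_{\mathbf v y}\prec\Psi^2/(\eta\sqrt L)+\Psi/\sqrt L$ (Proposition~\ref{prop:main_estimates}). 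These specific powers are what produce the conditions $L\ll W^{4/3}$ and $\eta\gg L^{3/2}W^{-5/2}$; your schematic bound $\Psi\sqrt{\|T\|_\infty}\,\|\mathbf v\|_\infty\|\mathbf w\|_\infty^{1/2}$ does not have the right scaling and would not recover them. The actual moment expansion (Section~\ref{sec:proof_main_estimates}) is the technical heart of the paper and requires tracking chains and loops of resolvent entries through repeated expansions.

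Second, converting a bound on $T$ into a bound on $\Lambda$. Your suggestion $|G_{xy}|^2\le CT_{xy}/S_{\min}$ loses a factor $W$ and cannot close; Combes--Thomas decay is not available here. The paper instead proves Lemma~\ref{lemma:improv_bound}: if $\Lambda\prec\Psi$ and $T_{ab},T'_{ab}\prec\Omega_{ab}^2$ then $|G_{ab}-\fra m\delta_{ab}|^2\prec\Omega_{ab}^2+\Psi^4$. This is itself a nontrivial cumulant-expansion argument (Appendix), in which $T_{ab}$ emerges naturally from the derivative terms. Combined with the recursive Lemma~\ref{lemma:self-improving_bound} and the a~priori input $\Lambda\prec(M\eta)^{-1/2}$ from Lemma~\ref{lm:lsc} (no continuity-in-$z$ argument is used), this closes the bootstrap.
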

This improves Theorem 2.2 in \cite{EKYY13} where the assumptions are 
$ L \ll W^{1 + \frac{1}{4}}$ and $\eta \gg \frac{L^{2}}{W^3} = \frac{L^{3/2}}{W^{5/2}} \frac{L^{1/2}}{W^{1/2}}$. 
%
Note also that, as one can see from \eqref{prof1} and \eqref{Tprec} below, \eqref{decc} is optimal. 

We observe that by spectral decomposition of $G$ one gets that for any $ d$
\begin{align}\label{eq:average_G}
\frac{1}{N^2}  \sum_{x,y} |G_{xy}|^2 \;=\; \frac{1}{N^2} \tr G^* G \;=\; \frac{1}{N\eta} \im \frac{\tr G}{N} 
   \;=\; \frac{\im m}{N\eta} + O_\prec \pbb{ \frac{\Lambda}{N\eta}}\,.
\end{align}
Furthermore, from \eqref{eq:def_Phi} we have $\Phi^2 \leq C (L\eta)^{-1}$ for $\eta \le (W/L)^2$, thus \eqref{decc} shows that all off-diagonal entries of $G$ have a magnitude comparable with the average of their magnitudes computed in \eqref{eq:average_G}: in other words, $|G_{xy}|^2$ is essentially flat. In this case we say that the resolvent is \emph{completely delocalized} and this implies that the eigenvectors of $H$ are delocalized.

More precisely, let us denote the eigenvalues of $H$ by $\lambda_1\leq \lambda_2 \leq \cdots \leq \lambda_N$, and the associated normalized eigenvectors by $\b{u}_1,\b{u}_2...,\b{u}_N$. We use the notation $\b{u}_{\alpha}=(u_{\alpha}(x))_{x=1}^N$. We shall only consider eigenvectors associated with eigenvalues lying in the interval
$$\cal I \deq [-2 + \kappa, 2 - \kappa]\,,$$ 
where $\kappa > 0$ is fixed. For $\ell\equiv\ell(N)$, define the characteristic function $P_{x,\ell}$ projecting onto the complement of the $\ell$-neighborhood of $x$,
\[
P_{x,\ell}(y)\deq  \mathbf{1}(|x-y|\geq \ell)\,.
\]
Let $\varepsilon>0$ and define the random subset of eigenvector indices through
\[
\cal A_{\varepsilon,\ell}\deq \Big\{\alpha:\lambda_{\alpha}\in \cal I, \sum_{x}|u_{\alpha}(x)|\|P_{x,\ell}\b{u}_{\alpha}\|\leq \varepsilon  \Big\}
\]
which indexes the set of eigenvectors localized on scale $\ell$ up to an error $\varepsilon$. 

As stated in \cite[Proposition 7.1]{EKYY13}, given the interval  the condition for the eigenvector delocalization is that
\begin{equation} \label{large eta estimate}
\sup_{E \in \cal I} \abs{G_{xy}(E + \ii \eta)}^2 \;\prec\; \frac{1}{N \eta} + \delta_{xy}\,
\end{equation}
for $\eta $ such that $M^{-1+\gamma} \leq \eta \ll 1$.  Hence by choosing $\eta$ such that $L^{3/2} W^{-5/2} \ll \eta\le W^2 L^{-2}$, \eqref{large eta estimate} and Theorem \ref{th:local_law_d=1} imply the following corollary.

\begin{corollary}[Eigenvector delocalization]\label{cor:deloc_d=1}
Assume \eqref{decay of f} and $ L \ll W^{1 + \frac{2}{7}}$. Then the eigenvectors of $H$ are completely delocalized in the sense that for any $\ell \ll N$ and fixed $\varepsilon>0$, we have
\[
\frac{|A_{\varepsilon,\ell}|}{N}\leq C\sqrt{\varepsilon}+O_{\prec}(N^{-c})\,.
\]
\end{corollary}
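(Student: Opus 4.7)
The plan is to apply Theorem \ref{th:local_law_d=1} at a well-chosen spectral parameter and then feed the resulting bound into \cite[Proposition 7.1]{EKYY13}, exactly as the discussion preceding the corollary signposts. The heart of the argument is a one-line arithmetic compatibility check: the hypothesis $L \ll W^{1+2/7}$ is precisely the condition that makes this feeding possible.

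First I would choose $\eta$. From \eqref{eq:def_Phi}, the first summand $1/(L\eta)$ exceeds the second $1/(W\sqrt{\eta})$ whenever $\eta \leq (W/L)^2$; in this regime $\Phi^2 \leq 2/(L\eta)$. Combined with the hypothesis $\eta \gg L^{3/2}/W^{5/2}$ of Theorem \ref{th:local_law_d=1} (whose assumption $L \ll W^{1+1/3}$ is weaker than ours, since $2/7 < 1/3$), this yields $\Lambda^2 \prec 1/(L\eta)$, and hence via $|G_{xy}| \leq |\mathfrak m|\delta_{xy} + \Lambda$,
\begin{equation}\label{plan-Gpw}
|G_{xy}(E+\ii\eta)|^2 \;\prec\; \frac{1}{L\eta} + \delta_{xy}.
\end{equation}
A value of $\eta$ satisfying $L^{3/2}/W^{5/2} \ll \eta \leq (W/L)^2$ exists if and only if $L^{3/2}/W^{5/2} \ll W^2/L^2$, i.e.\ $L^{7/2} \ll W^{9/2}$, i.e.\ $L \ll W^{9/7} = W^{1+2/7}$. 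This is exactly the corollary's hypothesis, and it is the precise origin of the exponent $2/7$. The side condition $\eta \geq M^{-1+\gamma}$ built into $z \in \b S$ is trivially compatible since $M \asymp W$; fix any such $\eta$.

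Next I would promote the pointwise-in-$E$ bound \eqref{plan-Gpw} to the uniform bound \eqref{large eta estimate}. This is a routine net argument: take a grid $\cal I_0 \subset \cal I$ of cardinality $N^C$, apply \eqref{plan-Gpw} and a union bound on $\cal I_0$, and then interpolate between grid points using the deterministic Lipschitz estimate $|\partial_E G_{xy}(E+\ii\eta)| \leq \|G(E+\ii\eta)\|^2 \leq \eta^{-2}$, which is polynomial in $N$. The polynomial Lipschitz loss is absorbed by the $N^\epsilon$ slack built into $\prec$.

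Finally, \cite[Proposition 7.1]{EKYY13} takes \eqref{large eta estimate} as input and outputs precisely the bound $|\cal A_{\varepsilon,\ell}|/N \leq C\sqrt{\varepsilon} + O_\prec(N^{-c})$ claimed in the corollary, so the proof is complete. There is no genuine obstacle here: all the analytic work is packaged inside Theorem \ref{th:local_law_d=1}, and the only nontrivial observation is the arithmetic in the first paragraph that identifies $L \ll W^{1+2/7}$ as exactly the threshold at which the admissible window for $\eta$ becomes nonempty. Any future improvement of this corollary must therefore come either from enlarging this $\eta$-window in the local law or from beating the bound $\Phi^2 \lesssim 1/(L\eta)$ in the regime $\eta \leq (W/L)^2$.
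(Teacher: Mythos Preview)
Your proposal is correct and follows essentially the same approach as the paper: apply Theorem \ref{th:local_law_d=1} in the regime $L^{3/2}W^{-5/2} \ll \eta \leq (W/L)^2$, observe that this window is nonempty precisely when $L \ll W^{9/7}$, and feed the resulting bound into \cite[Proposition 7.1]{EKYY13}. The paper's own proof (Section 5.3) is even more terse, omitting the net argument for uniformity in $E$ that you spell out, but the logic is identical.
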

This improves \cite[Corollary 2.4]{EKYY13}, where the condition is $ L \ll W^{1 + \frac{1}{4}}$. 

\subsection{Diffusion profile}\label{sec:diff_d=1}
Set $\bb T_L^1 \equiv \bb T_L$.
In the previous section we saw that for $\eta\le (W/L)^2$ the profile of $|G_{xy}|^2$
 is essentially flat, while we will see that for $\eta\ge (W/L)^2$, an averaged version of $|G_{xy}|^2$ is well approximated by a diffusion profile
\begin{equation}\label{Ydet}
\Theta_{xy} \;\deq\; \pbb{\frac{\abs{\fra m}^2 S}{1 - \abs{\fra m}^2 S}}_{xy}, \qquad x,y \in \bb T_L \,.
\end{equation}
Note that the matrix $\Theta $ is the solution of 
$$ \Theta = |\fra m|^2 S \Theta + |\fra m|^2S, 
$$
which is obtained from \eqref{self-const intro} by dropping the error term $\cal E$.
For a precise formulation of the result, we consider the weighted average
\begin{equation}\label{def:T}
    T_{xy} \;\deq\; \sum_{i} S_{xi}|G_{iy}|^2\,.
\end{equation}

\begin{theorem}[Diffusion profile]\label{th:diffusion_d=1}
Assume \eqref{decay of f} and  
$$
(W/L)^2 \le \eta \le 1\,, \ \ \ L \ll W^{1 + \frac{2}{7}}.
$$
Then
\begin{align}
\label{Tprec}
 T_{xy} -\Theta_{xy} & \;\prec\; \frac{1}{L\eta}\,, \\
 \label{Tfin}
  T_{xy} & \;\prec\;  \Upsilon_{xy}\,, \\ 
\label{Tfin1}
\absb{G_{xy}-\delta_{xy} \mathfrak m}^2 & \;\prec\;  \Upsilon_{xy}\,,
\end{align}
where we defined
\begin{equation}\label{Tdetbound}
\Upsilon_{xy}\; \equiv \; \Upsilon_{xy}^{(K)} \;\deq\; \frac{1}{L\eta} +\frac{1}{W\sqrt{\eta}}
\exp \qbb{ -\frac{\sqrt{\alpha \eta}}{W\sqrt{D}}|x-y|_L} 
   +\frac{1}{W} \avgbb{ \frac{\sqrt{\eta}|x-y|_L}{W}}^{-K}\,.
\end{equation}
Here  $K$ is an arbitrary, fixed, positive integer.
All estimates are uniform in $z\in \mathbf S$ and $x,y \in \bb T_L$.
\end{theorem}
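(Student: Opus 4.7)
The strategy rests on the self-consistent equation \eqref{self-const intro}, which after subtracting the defining relation $\Theta = |\fra m|^2 S \Theta + |\fra m|^2 S$ gives
\begin{equation*}
T - \Theta \;=\; (1 - |\fra m|^2 S)^{-1} \cal E.
\end{equation*}
Since $S$ is translation-invariant, $1 - |\fra m|^2 S$ is diagonalized in the Fourier basis $\{\b e_p\}_{p \in (2\pi/L)\bb T_L}$, and the whole analysis naturally takes place in Fourier space. The plan is: (a) derive \eqref{self-const intro} with an explicit error $\cal E$; (b) prove $\scalar{\b e_p}{\cal E \b e_q} \prec (\text{small})$ uniformly in $p,q$; (c) translate this back to entry-wise bounds via the known spectral behavior of $1 - |\fra m|^2 S$.

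To derive \eqref{self-const intro}, I would start from the identity $\sum_j H_{xj} G_{jy} = z G_{xy} + \delta_{xy}$ and apply the cumulant expansion formula (Lemma \ref{lem:cumulant_expansion}) to $\E[H_{xi} G_{iy} \bar G_{xy}]$, then sum over $i$ weighted by the variances and use the local semicircle law (Theorem \ref{th:local_law_d=1}) to identify the leading piece $|\fra m|^2 S T + |\fra m|^2 S$. The remainder $\cal E$ is a sum of terms of the form (cumulant) $\times$ (products of resolvent entries), each of which can be bounded isotropically: for any unit vectors $\b v, \b w$,
\begin{equation*}
\absb{\scalar{\b v}{\cal E \b w}} \;\prec\; \frac{1}{L\eta} \cdot (\text{polynomial of } \Lambda, \Phi).
\end{equation*}
Because the Fourier basis vectors have entries of size $L^{-1/2}$ (they are completely delocalized), applying these isotropic bounds with $\b v = \b e_p$ and $\b w = \b e_q$ produces an extra smallness factor compared with the trivial entry-wise bound; this is the key mechanism absent from the fluctuation averaging approach of \cite{EKY13, EKYY13}.

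Once the Fourier coefficients of $\cal E$ are under control, I would invert $1 - |\fra m|^2 S$ using $|\fra m|^2 = 1 - \alpha\eta + O(\eta^2)$ together with the low-momentum expansion \eqref{low-p exp}, so that the eigenvalue at momentum $p$ equals $\alpha\eta + W^2 (p \cdot D p) + O(\eta^2 + W^4|p|^4)$. Combined with the $\scalar{\b e_p}{\cal E \b e_q}$ bound, this yields \eqref{Tprec}. To obtain \eqref{Tfin} it then suffices to show $\Theta_{xy} \prec \Upsilon_{xy}$, which follows directly by Fourier inversion of $\widehat \Theta(p) = |\fra m|^2 \widehat s(p)/(1 - |\fra m|^2 \widehat s(p))$: the low-$p$ regime contributes the exponential term via contour deformation, and the high-$p$ regime gives the $W^{-1}\avg{\sqrt{\eta}|x-y|_L/W}^{-K}$ tail through $K$ successive integrations by parts, using the smoothness of $f$ inherited from \eqref{decay of f}.

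Finally, for \eqref{Tfin1} I would run one more resolvent expansion: for $x \neq y$, $G_{xy} = -\fra m \sum_i H_{xi} G_{iy} + (\text{small})$, and applying the cumulant expansion to $\E|G_{xy}|^{2p}$ bounds $|G_{xy} - \delta_{xy}\fra m|^{2p}$ in terms of $T_{xy}^p$ plus negligible errors, which by \eqref{Tfin} is controlled by $\Upsilon_{xy}^p$. The main obstacle of the whole argument is step (b): the naive entry-wise bound on $\cal E_{xy}$ loses a factor of $L/W$, so extracting the $(L\eta)^{-1}$ gain in Fourier space requires carefully tracking each term of the cumulant expansion and using Ward identities $\sum_i |G_{xi}|^2 = \eta^{-1} \im G_{xx}$ to reduce quadratic sums to $\im G$, where the improved local law \eqref{decc} is what pushes the exponent in $L$ past the previous $W^{1+1/4}$ barrier.
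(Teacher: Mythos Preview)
Your high-level strategy—Fourier decomposition of $(I-|\fra m|^2 S)^{-1}\cal E$ together with isotropic bounds on the error—is exactly the paper's approach, and the intuition that the delocalization of $\b e(p)$ yields an extra $L^{-1/2}$ is the correct mechanism. However, two concrete ingredients are missing, and without them step (b) would not go through at the required precision.

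First, the paper does \emph{not} bound $\scalar{\b e_p}{\cal E \b e_q}$ with delocalized vectors on both sides; it bounds $\cal E_{\b v y}$ with a delocalized vector $\b v$ in the first slot and a \emph{coordinate} $y$ in the second (Proposition \ref{prop:main_estimates}). The reason is structural: $(I-|\fra m|^2 S)$ acts only on the first index, so only the $x$-variable needs to be Fourier-transformed; contracting the $y$-slot with $\b e_q$ gives no gain because the Ward identity $\sum_i |G_{xi}|^2 = \eta^{-1}\im G_{xx}$ requires an unweighted sum. Correspondingly, the Fourier inversion (Proposition \ref{prop:fourier_analysis}) splits into a low-mode piece controlled by $\sup_{p\neq 0}|\langle\b e(p),\cal E\rangle_y|$ and a high-mode piece controlled by $\sup_{x,y}|\cal E_{xy}|$; both are needed. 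You also need to peel off the zero mode $p=0$ separately via the projection $\Pi$ (equation \eqref{eq:Tbar}), since $(1-|\fra m|^2)^{-1}\asymp\eta^{-1}$ would otherwise blow up the estimate.

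Second—and this is the decisive gap—a direct cumulant expansion of $\cal E_{\b v y}$ does not produce the needed cancellations; the paper explicitly notes that even $\E\cal E_{xy}$ cannot be made small this way. The key device is the decomposition $\cal E = \cal P + \cal R$ (equations \eqref{eq:def_P}--\eqref{eq:def_R}), engineered so that $\E\cal P_{xy}=0$ in the Gaussian case; the recursive cumulant expansion of $\E|\cal P_{\b v y}|^{2p}$ then exhibits systematic cancellations (Section \ref{sec:proof_main_estimates}) yielding the bounds $\cal P_{\b v y}\prec \Psi^3/\sqrt{\eta}+\Psi/\sqrt{L\eta}$ and $\cal R_{\b v y}\prec \Psi^2/(\eta\sqrt{L})+\Psi/\sqrt{L}$. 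These specific powers, fed into Lemma \ref{lemma:self-improving_bound} with $\Psi=W^{-1/2}\eta^{-1/4}$, are what produce the threshold $L\ll W^{9/7}$. Your sketch of step (b) does not contain this idea, and without it the argument stalls at the earlier $W^{5/4}$ barrier. The derivation of \eqref{Tfin1} in the paper also differs slightly: rather than expanding $\E|G_{xy}|^{2p}$ directly, it invokes Lemma \ref{lemma:improv_bound} to transfer the $T$-bound to $|G_{xy}-\delta_{xy}\fra m|^2$.
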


\begin{remark} Theorem \ref{th:diffusion_d=1} improves the analogous result in \cite{EKYY13} where it was assumed that $ L \ll W^{1 +\frac{1}{4}}$. However, one expects that \eqref{Tprec} and the local semicircle law \eqref{decc} should in fact hold under the weaker conditions $\eta\gg 1/{L}$ and $L \ll W^2$, from which one could deduce the complete delocalization of the eigenvectors for all $L \ll W^2$.
\end{remark}

As explained in Section 2.2 in \cite{EKYY13}, $\Theta$ can be interpreted in terms of random walks. In fact, since $S$ is translation invariant, also $ \Theta$ is so and it can be written as 
\begin{equation} \label{random walk picture}
\Theta_{xy} \;=\; \sum_{n \geq 1} \abs{\fra m}^{2n} (S^{n})_{xy}\, \approx \sum_{n \geq 1} \ee^{-n \alpha \eta} (S^n)_{xy},
\end{equation}
where we used \eqref{m2} below and $ \alpha$ is defined in \eqref{def alpha}. Moreover, $S$ is the transition matrix of a random walk on $\bb T_L$ whose steps are of size $W$, therefore $\Theta_{xy}$ is a superposition of random walks up to times of order $(\alpha \eta)^{-1}$.

The normalized variance of each step is given by the unrenormalized diffusion constant $D$ appearing in \eqref{low-p exp}:
\begin{equation} \label{def D}
D \;\deq\; \frac{1}{2} \sum_{u} \pbb{\frac{u}{W}}^2 S_{u0} = D_\infty + O(W^{-1}) \qquad \text{where} \qquad D_\infty \;\deq\; \frac{1}{2} \int x^2 f(x) \, \dd x\,. 
\end{equation}
Proposition 2.8 in \cite{EKYY13} provides an explicit formula for $\Theta$: for each $K \in \N$ we have
\begin{equation}\label{prof1}
\Theta_{xy} \;=\;
\theta_{x - y}
+
 O\pbb{\frac{1}{W^2}}
+O_K \pBB{\frac{1}{W} \avgbb{\frac{\sqrt{\eta}\,|x-y|_L}{W}}^{-K}}
\end{equation}
where
\begin{equation} \label{def fra P}
\theta_x \;\deq\; \frac{\abs{\fra m}^2}{L}\sum_{p \in \frac{2\pi}{ L} \bb Z }  \ee^{\ii px} \frac{1}{\alpha \eta + W^2 D p^2} \;=\; \frac{|\fra m|^2}{2W\sqrt{D\alpha \eta}} \,  \sum_{k\in \bb Z}
 \exp \qbb{- \frac{\sqrt{\alpha \eta}}{W \sqrt{D}} \, \absb{x+kL}}\,.
\end{equation}
\\
From \eqref{prof1}, \eqref{def fra P} and \eqref{Tdetbound} we see that $\Theta_{xy} \leq C \Upsilon_{xy} \leq C' \Phi^2 $. 
The total mass of the profile $\sum_{x} \theta_x$ is simply given by 
\begin{equation} \label{total mass 1}
\sum_{x} \theta_x \;=\; \frac{\abs{\fra m}^2}{\alpha \eta} \;=\; \frac{\im \fra m}{\eta} \pb{1 + O(\eta)}\,,
\end{equation}
where in the last step we used Lemma \ref{lemma: msc} below.  
Hence, the average height of the profile is of order $(L \eta)^{-1}$, while its maximum is of order $(W \sqrt{\eta})^{-1}$: this means that when $\eta \gg (W/L)^2$ the profile is concentrated in the region
$|x-y| \leq W\eta^{-1/2} \ll L$ and the complete delocalization has not taken place. For more details about the physical interpretation of $\Theta$ we refer to \cite{EKYY13, EK1}.

\section{Preliminaries} \label{sec:prelim}
In this Section we work in the $d$-dimensional setting. The following lemma collects basic algebraic properties of stochastic domination $\prec$. 
\begin{lemma}[Basic properties of $\prec$]\label{lemma:basic_properties_of_prec} $~$
\begin{enumerate}
	\item Suppose that $X(v) \prec Y (v)$ for all $v\in V$. If $|V|\leq N^C$ for some constant $C$ then $\sum_{v\in V}X(v)\prec \sum_{v\in V}Y(v)$.
	\item Suppose that $X_1 \prec Y_1$ and $X_2 \prec Y_2$. Then $X_1 X_2 \prec Y_1 Y_2$.
	\item Suppose that $X \le N^C$ and $Y \ge N^{-C}$ for some constant $C>0 $. Then $X \prec Y$ implies $\E[X] \prec \E[Y]$.
\end{enumerate}
If the above random variables depend on some parameter $u$ and the hypotheses are uniform in $u$ then so are the conclusions.
\end{lemma}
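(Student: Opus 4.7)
The plan is to verify each of the three assertions directly from Definition \ref{sdI}, fixing arbitrary $\epsilon > 0$ and $D > 0$ throughout.

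For (i), I would apply the hypothesis with the inflated exponent $D' \deq D + C$ and take a union bound over the at most $N^C$ elements of $V$:
$$\P\pB{\exists v \in V \col |X(v)| > N^{\epsilon} Y(v)} \;\leq\; N^{C} \cdot N^{-D-C} \;=\; N^{-D}.$$
On the complementary event, the triangle inequality yields $\absb{\sum_v X(v)} \leq \sum_v |X(v)| \leq N^\epsilon \sum_v Y(v)$, which is the required bound. For (ii), applying each hypothesis with the halved exponent $\epsilon/2$ produces two good events that each fail with probability at most $N^{-D}$; on their intersection, $|X_1 X_2| \leq N^\epsilon Y_1 Y_2$, and a second union bound controls the total failure probability.

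The only nontrivial step is (iii), in which one must combine the high-probability bound $X \prec Y$ with the deterministic bounds $|X| \leq N^C$ and $\E Y \geq N^{-C}$ to pass to expectations. Splitting on the good event $A \deq \{|X| \leq N^{\epsilon} Y\}$, I would estimate
$$|\E X| \;\leq\; \E|X| \;=\; \E\qb{|X| \ind{A}} + \E\qb{|X| \ind{A^c}} \;\leq\; N^{\epsilon} \E Y + N^C \, \P(A^c).$$
Applying $X \prec Y$ with the exponent $D + 2C$ gives $\P(A^c) \leq N^{-D-2C}$, so the residual term is at most $N^{-D-C}$, and invoking the lower bound $\E Y \geq N^{-C}$ absorbs this into $N^{-D} \E Y$. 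One concludes $|\E X| \leq (N^\epsilon + N^{-D}) \E Y$, as desired.

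The trade-off in (iii) — exploiting the deterministic upper bound on $X$ only on the low-probability bad set, and using the deterministic lower bound on $\E Y$ to absorb the leftover contribution into the main term — is the entire point of the two auxiliary hypotheses and is the only real subtlety in the lemma. Uniformity in any parameter $u$ is preserved throughout, since all exponents and constants chosen in the three parts depend only on $\epsilon, D, C$ and not on $u$.
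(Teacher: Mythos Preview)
Your argument is correct and is precisely the ``definition plus union bound'' approach that the paper indicates in its one-line proof; you have simply filled in the routine details, including the expectation-splitting argument for (iii), which the paper leaves to the reader.
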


\begin{proof}
	The proof follows from the definition of stochastic domination together with a union bound argument. 
\end{proof}

Note that if for any $\epsilon >0$ and $p\geq 1$ we have $\E |X|^p \leq N^\epsilon \Psi^p$
for large enough $N$ which depends on $\epsilon$ and $p$, then $X\prec\Psi$ by Chebyshev's inequality. 

A crucial tool in our analysis is the cumulant expansion formula: the following version of the cumulant expansion formula is proved in \cite{HKR17}. Slightly different versions of the same formula can be found in \cite{KKP, Kho2, HK17}).
\begin{lemma}[Cumulant expansion]\label{lem:cumulant_expansion}
	Let $h$ be a complex random variable with all its moments exist. The $(p,q)$-cumulant of $h$ is defined as
		$$
		\mathcal{C}^{(p,q)}(h)\deq (-i)^{p+q} \cdot \left(\frac{\partial^{p+q}}{\partial {s^p} \partial {t^q}} \log \E e^{\mathrm{i}sh+\mathrm{i}t\bar{h}}\right) \bigg{|}_{s=t=0}\,.
		$$
		Let $f:\C^2 \to \C$ be a smooth function, and we denote its holomorphic  derivatives by
		$$
		f^{(p,q)}(z_1,z_2)\deq \frac{\partial^{p+q}}{\partial {z_1}^p \partial {z_2}^q} f(z_1,z_2)\,.
		$$ Then for any fixed $\ell \in \N$, we have
		\begin{equation} \label{eq:cumulant_expansion}
		\E f(h,\bar{h})\bar{h}=\sum\limits_{p+q=0}^{\ell} \frac{1}{p!\,q!}\mathcal{C}^{(p,q+1)}(h)\E f^{(p,q)}(h,\bar{h}) + R_{\ell+1}\,,
		\end{equation}
		given all integrals in \eqref{eq:cumulant_expansion} exists. Here $R_{\ell+1}$ is the remainder term depending on $f$ and $h$, and for any $\tau>0$, we have the estimate
		\begin{align}\label{eq:R_cumulant_expansion}
			\begin{aligned}
			R_{\ell+1}=&\ O(1)\cdot \E \big|h^{\ell+2}\cdot\mathbf{1}_{\{|h|>N^{\tau-1/2}\}}\big|\cdot \max\limits_{p+q=\ell+1}\big\| f^{(p,q)}(z,\bar{z})\big\|_{\infty} \\
			&+O(1) \cdot \E |h|^{\ell+2} \cdot \max\limits_{p+q=\ell+1}\big\| f^{(p,q)}(z,\bar{z})\cdot \mathbf{1}_{\{|z|\le N^{\tau-1/2}\}}\big\|_{\infty}\,.
			\end{aligned}
		\end{align}
\end{lemma}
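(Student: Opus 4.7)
The plan is to prove \eqref{eq:cumulant_expansion} first as an exact identity for polynomial $f$, then extend to smooth $f$ by Taylor approximation; the remainder estimate \eqref{eq:R_cumulant_expansion} is then obtained by splitting the expectation according to whether $|h|$ exceeds the threshold $N^{\tau-1/2}$.

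The algebraic heart of the argument is the two-variable moment--cumulant recursion
\[
\E[h^P \bar h^{Q+1}] \;=\; \sum_{p=0}^P\sum_{q=0}^Q \binom{P}{p}\binom{Q}{q}\,\mathcal{C}^{(p,q+1)}(h)\,\E[h^{P-p}\bar h^{Q-q}],
\]
which I would derive by differentiating the identity $\phi(s,t) = e^{\chi(s,t)}$ once in $t$ (where $\phi(s,t) = \E e^{\ii sh+\ii t\bar h}$ and $\chi = \log\phi$), applying Leibniz on the right, and reading off Taylor coefficients at $(s,t)=(0,0)$; the coefficients of $\chi$ are precisely the $\mathcal{C}^{(p,q)}$ by definition, and those of $\phi$ are the moments. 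For a monomial $f = h^P\bar h^Q$, inserting this into the right-hand side of \eqref{eq:cumulant_expansion} and using $\E f^{(p,q)}(h,\bar h) = \tfrac{P!Q!}{(P-p)!(Q-q)!}\E[h^{P-p}\bar h^{Q-q}]$ reproduces the recursion exactly. By linearity, \eqref{eq:cumulant_expansion} then holds with no error for any polynomial $f$, with the sum simply terminating.

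For a smooth $f$, I would Taylor expand $f$ to order $\ell+1$ around the origin,
\[
f(h,\bar h) \;=\; \sum_{p+q\le \ell+1}\frac{f^{(p,q)}(0,0)}{p!\,q!}h^p \bar h^q + \widetilde R(h,\bar h),
\]
with integral remainder satisfying $|\widetilde R(z,\bar z)| \le C|z|^{\ell+2}\max_{p+q=\ell+1}|f^{(p,q)}(\xi,\bar\xi)|$ at some intermediate point $\xi$. Applying the polynomial identity to the Taylor polynomial and Taylor-expanding each $\E f^{(p,q)}(h,\bar h)$ (for $p+q\le \ell$) to matching order, the diagonal contributions at $p+q = \ell+1$ and the Taylor tails assemble into an error term of the form $\E[\widetilde R(h,\bar h)\bar h]$ plus finitely many analogous integrated expectations involving derivatives $f^{(p,q)}$ with $p+q=\ell+1$. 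To bound each such expectation I would split using $\b 1_{|h|\le N^{\tau-1/2}} + \b 1_{|h|>N^{\tau-1/2}}$: on $\{|h|\le N^{\tau-1/2}\}$ the intermediate point $\xi$ also lies in that region, so $|f^{(p,q)}(\xi,\bar\xi)|$ is replaced by $\|f^{(p,q)}\b 1_{|z|\le N^{\tau-1/2}}\|_\infty$ and pairs with $\E|h|^{\ell+2}$, producing the second line of \eqref{eq:R_cumulant_expansion}; on $\{|h|>N^{\tau-1/2}\}$ only the global sup-norm $\|f^{(p,q)}\|_\infty$ is available and it pairs with the truncated moment $\E[|h|^{\ell+2}\b 1_{|h|>N^{\tau-1/2}}]$, producing the first line.

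The main obstacle is the combinatorial bookkeeping in the polynomial case together with the regrouping in the smooth case: one must verify that, after expanding every moment $\E[h^P\bar h^{Q+1}]$ on the left-hand side via the moment--cumulant recursion and gathering terms by their cumulant factor $\mathcal{C}^{(p,q+1)}$, the inner sum is exactly the Taylor polynomial of $\E f^{(p,q)}(h,\bar h)$ to the correct truncation order. This is a purely combinatorial identity in two variables, but the indexing of $(p,q,p',q')$ must be done carefully so that the ``diagonal'' contributions at $p+q = \ell+1$ are separated and handled by the truncation argument above; once this is in place, the rest is a routine application of Taylor's theorem and a union bound via the split on $|h|$.
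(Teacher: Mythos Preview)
The paper does not actually prove this lemma: it is stated as a tool and immediately attributed to \cite{HKR17} (with related versions in \cite{KKP,Kho2,HK17}). Your outline is the standard argument used in those references --- establish the identity exactly for polynomials via the bivariate moment--cumulant recursion, then pass to smooth $f$ by Taylor approximation and control the remainder by the threshold split on $|h|$ --- so there is nothing to compare against in the present paper.

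One small indexing slip to fix: if your Taylor polynomial includes all terms with $p+q\le \ell+1$, the Lagrange remainder involves derivatives of order $\ell+2$, not $\ell+1$, which would not match \eqref{eq:R_cumulant_expansion}. The clean choice is to expand $f$ to degree $\ell$; then the remainder is $O\bigl(|h|^{\ell+1}\max_{p+q=\ell+1}|f^{(p,q)}(\xi,\bar\xi)|\bigr)$, and after multiplication by $|\bar h|$ you get the $|h|^{\ell+2}$ in the stated bound. With this adjustment the monomials all have degree $\le \ell$, the cumulant expansion to order $\ell$ is exact for each, and the only error comes from replacing the Taylor polynomial of $f^{(p,q)}$ (of degree $\ell-p-q$) by $f^{(p,q)}$ itself; these corrections are handled by the same threshold split together with Lyapunov's inequality to combine $\mathcal C^{(p,q+1)}=O(\E|h|^{p+q+1})$ with $\E|h|^{\ell+1-p-q}$ into $O(\E|h|^{\ell+2})$.
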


The following result gives bounds on the cumulants of the entries of $H$.
\begin{lemma}\label{lem:cumulant_factos_estimate} 
If $H$ satisfies \eqref{def:band} and \eqref{finite moments} then for every $i,j\in\qq{N}$ and $k\in \N$ we have
	\begin{equation*}
		\cal C^{(p,q)}(H_{ij})=O_k(S_{ij}^{k/2})=O_k\big(M^{-k/2}\big), \ \ \ p+q = k
	\end{equation*}
	and $\cal C^{(0,1)}(H_{ij})= \cal C^{(1,0)}(H_{ij}) = 0$.
\end{lemma}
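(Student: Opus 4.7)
The plan is to reduce the bound to the analogous estimate for the normalized variables $\zeta_{ij}$. From the definition \eqref{def:band}, the joint characteristic function of $H_{ij}$ satisfies
\begin{equation*}
\log \E \exp(\ii s H_{ij} + \ii t \bar H_{ij}) \;=\; \log \E \exp\pb{\ii s\sqrt{S_{ij}}\, \zeta_{ij} + \ii t \sqrt{S_{ij}}\,\bar\zeta_{ij}}\,.
\end{equation*}
Differentiating at $s=t=0$ and using the chain rule produces the scaling identity $\cal C^{(p,q)}(H_{ij}) = S_{ij}^{(p+q)/2}\,\cal C^{(p,q)}(\zeta_{ij})$, so it suffices to prove that $\cal C^{(p,q)}(\zeta_{ij}) = O_k(1)$ uniformly in $N, i, j$ whenever $p + q = k$.

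For this I would invoke the standard moment-cumulant relations, which express $\cal C^{(p,q)}(\zeta_{ij})$ as a fixed polynomial (depending only on $p$ and $q$) in the mixed moments $\E \zeta_{ij}^{p'}\bar\zeta_{ij}^{q'}$ with $p' + q' \leq p + q$. Since $\abs{\zeta_{ij}^{p'}\bar\zeta_{ij}^{q'}} = \abs{\zeta_{ij}}^{p'+q'}$, the hypothesis \eqref{finite moments} gives $\absb{\E\zeta_{ij}^{p'}\bar\zeta_{ij}^{q'}} \leq \mu_{p'+q'}$, so every monomial in this polynomial is bounded by a constant depending only on $k$ and on $\mu_1, \dots, \mu_k$; in particular the constant is independent of $N$, $i$, $j$. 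Combined with the scaling identity and with $S_{ij} \leq M^{-1}$, which is exactly the content of definition \eqref{s leq W}, this yields both forms of the bound.

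The vanishing of the first-order cumulants is immediate: direct differentiation gives $\cal C^{(1,0)}(H_{ij}) = \E H_{ij}$ and $\cal C^{(0,1)}(H_{ij}) = \E\bar H_{ij}$, and both vanish because $\E\zeta_{ij} = 0$ by \eqref{zetacond}. There is no genuine obstacle in this lemma; the only point worth a sanity check is the uniformity of the implicit constant in $O_k(\cdot)$, which is guaranteed by the fact that the bounds $\mu_p$ in \eqref{finite moments} do not depend on $N$, $i$, or $j$.
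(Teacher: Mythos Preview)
Your proof is correct and follows exactly the approach the paper has in mind: the paper's proof reads in its entirety ``This follows easily by the homogeneity of the cumulants,'' and your scaling identity $\cal C^{(p,q)}(H_{ij}) = S_{ij}^{(p+q)/2}\,\cal C^{(p,q)}(\zeta_{ij})$ is precisely that homogeneity. Your version simply spells out the details (moment--cumulant relations, the uniform moment bounds \eqref{finite moments}, and $S_{ij}\le M^{-1}$) that the paper leaves implicit.
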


\begin{proof} This follows easily by the homogeneity of the cumulants.
\end{proof}

Moreover, we need the following bound on $\Lambda$, which was proved in \cite{EYY1} and \cite{EKYY4} with two different approaches.
\begin{lemma} \label{lm:lsc}
We have
\begin{equation} \label{rough estimate on Lambda}
\Lambda(z) \;\prec\; \frac{1}{\sqrt{M \eta}}\,
\end{equation}
uniformly for $z\in \b S$.
\end{lemma}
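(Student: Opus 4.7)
The plan is to establish $\Lambda \prec (M\eta)^{-1/2}$ via high-moment estimates combined with a continuity argument in $\eta$, using the cumulant expansion formula (Lemma \ref{lem:cumulant_expansion}) as the main tool. By Chebyshev's inequality and a union bound over the $N^2$ entries of $G$, the conclusion reduces to showing that for every fixed $p \in \N$,
$$\E\absb{G_{ij}(z) - \delta_{ij}\mathfrak{m}(z)}^{2p} \;\prec\; (M\eta)^{-p},$$
uniformly in $i, j \in \bb T_L^d$ and $z \in \b S$.

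The main step is to derive a self-consistent equation for this $2p$-th moment. Starting from the resolvent identity $(H-z)G = I$, the diagonal entry gives $1 + zG_{ii} = \sum_k H_{ik} G_{ki}$. Multiplying by a monomial $F$ of degree $2p - 1$ in the centred quantities $\{G_{ii} - \mathfrak{m}, \overline{G_{ii} - \mathfrak{m}}\}$, taking expectation, and applying the cumulant expansion \eqref{eq:cumulant_expansion} to each $\E[H_{ik} G_{ki} F]$ via the differentiation rule $\partial G_{ab}/\partial H_{cd} = -G_{ac} G_{db}$ and the cumulant bounds of Lemma \ref{lem:cumulant_factos_estimate}, the second-cumulant contribution combines with the left-hand side to produce, after invoking the semicircle equation \eqref{eq:m_equation}, an approximate relation of the form
$$\mathfrak{m}^{-2}\,\E[(G_{ii} - \mathfrak{m})F] \;=\; \text{(lower-order products of centred $G$-entries and remainder)}.$$
In the bulk $|E| \leq 2-\kappa$ the coefficient $\mathfrak{m}^{-2}$ is bounded away from zero, so this equation is stably invertible. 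The off-diagonal case $i \neq j$ is treated analogously starting from $\sum_k H_{ik} G_{kj} = z G_{ij}$. The Ward identity $\sum_k |G_{ik}|^2 = \eta^{-1}\im G_{ii}$ will be used repeatedly to convert sums of squared resolvent entries into factors of $\eta^{-1}$, which is precisely the origin of the scaling $(M\eta)^{-1/2}$ per fluctuation factor (one factor of $M^{-1}$ coming from the variance $S_{ik}$, one factor of $\eta^{-1}$ coming from Ward).

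The argument is then closed by a continuity argument in $\eta$: at $\eta = 10$ the trivial operator-norm bound $\norm{G} \leq 1/10$ provides the base case, and one decreases $\eta$ in small steps (keeping $\re z$ fixed), each time using the a priori bound at a slightly larger $\eta'$ to feed the error terms of the self-consistent equation and upgrade them to the target estimate at $\eta$; Lipschitz continuity of all quantities in $\eta$ (with constant $\eta^{-2}$) justifies that a polynomial net in $\eta$ suffices. The main technical obstacle will be controlling the remainder $R_{\ell+1}$ in \eqref{eq:R_cumulant_expansion}: the truncation order $\ell$ must be chosen large enough (depending on $p$ and $\gamma$) so that truncated high-order cumulants contribute only $N^{-D}$ to the $2p$-th moment. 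This is affordable thanks to the arbitrary-moment hypothesis \eqref{finite moments} combined with the lower bound $\eta \geq M^{-1+\gamma}$, which guarantees $\norm{G} \leq M^{1-\gamma}$ and hence that each additional $G$-derivative produced by a high-order cumulant costs only a controlled polynomial power of $M$, absorbed by the decay $\mathcal{C}^{(p,q)}(H_{ik}) = O(M^{-(p+q)/2})$ from Lemma \ref{lem:cumulant_factos_estimate}.
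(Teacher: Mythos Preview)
The paper does not actually prove Lemma~\ref{lm:lsc}: it simply cites the result from \cite{EYY1} and \cite{EKYY4} (see the sentence immediately preceding the lemma). Your proposal therefore cannot be compared against a proof in the paper, because there is none.

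That said, your sketch is a legitimate route. The cited references \cite{EYY1,EKYY4} establish \eqref{rough estimate on Lambda} via Schur's complement formula and a fluctuation-averaging mechanism; what you outline is instead the cumulant-expansion approach in the spirit of \cite{HK17,HKR17}, which is precisely the toolbox the present paper adopts for its own arguments. In that sense your proposal is internally consistent with the paper's philosophy, and indeed a self-contained version of the paper could include such a proof. Two remarks: first, the self-consistent equation for a single moment $\E[(G_{ii}-\mathfrak m)F]$ does not close on its own, since the second-cumulant term produces $\sum_k S_{ik}(G_{kk}-\mathfrak m)$; you correctly note that this is handled by feeding in the a~priori bound from the previous step of the continuity argument, but it is worth emphasising that this coupling is the reason the bootstrap is indispensable and not merely a convenience. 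Second, to make the high-moment argument go through you will also need the off-diagonal Ward-type input $\sum_k S_{ik}|G_{ki}|^2 \prec (M\eta)^{-1}\im G_{ii}$ (or its a~priori version), which is where the factor $(M\eta)^{-1/2}$ per fluctuation actually enters; your description of this mechanism is accurate.
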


Finally, we collect some elementary facts about $\fra m$ which were proved in Lemma 3.5 in \cite{EKYY13}.
\begin{lemma} \label{lemma: msc}
\begin{itemize}
\item[(i)] We have the identity
\begin{equation} \label{id m im m}
1 - \abs{\mathfrak m}^2 \;=\; \frac{\eta \abs{\mathfrak m}^2}{\im \mathfrak m}\,.
\end{equation}
\item[(ii)] There is a constant $c > 0$ such that for $z \in \b S$
\begin{equation} \label{m is bounded}
c \;\leq\; \abs{\mathfrak m} \;\leq\; 1.
\end{equation}
\item[(iii)] For $z \in \b S$ and $\alpha$ given in \eqref{def alpha}
\begin{equation} \label{m2}
|\fra m|^2 \;=\; 1 - \eta\alpha + O(\eta^2).
\end{equation}
\item[(iv)] For $z \in \b S$
\begin{equation}\label{m3}
\im \fra m \;\asymp\; 1 \,, \qquad 1 - |\mathfrak m|^2 \;\asymp\; \eta
\end{equation}
where the implicit constants in the two estimates depend on $\kappa$.
\end{itemize}
\end{lemma}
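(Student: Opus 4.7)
The plan is to derive all four items directly from the quadratic equation \eqref{eq:m_equation} together with the explicit formula \eqref{explicit m}, exploiting the bulk constraint $-2 + \kappa \le E \le 2 - \kappa$.

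First I would establish (i) by taking the imaginary part of $\mathfrak m + 1/\mathfrak m + z = 0$. Using $\im (1/\mathfrak m) = -\im \mathfrak m / \abs{\mathfrak m}^2$ this becomes
\[
\im \mathfrak m \pbb{1 - \frac{1}{\abs{\mathfrak m}^2}} + \eta \;=\; 0\,,
\]
which rearranges immediately to \eqref{id m im m}. A direct consequence, since $\im \mathfrak m > 0$ and $\eta > 0$, is the upper bound $\abs{\mathfrak m} \le 1$ in (ii).

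For the lower bound in (ii) and for (iv) I would use \eqref{explicit m}. At $\eta = 0$ with $\abs{E} \le 2 - \kappa$,
\[
\mathfrak m(E) \;=\; \frac{-E + \ii \sqrt{4 - E^2}}{2}\,,
\]
so $\abs{\mathfrak m(E)} = 1$ and $\im \mathfrak m(E) = \sqrt{4 - E^2}/2 \ge \sqrt{\kappa(4-\kappa)}/2$. The branch of $\sqrt{z^2 - 4}$ singled out by $\im \mathfrak m > 0$ is analytic on an open neighbourhood of the compact set $\h{E + \ii \eta \col -2 + \kappa \le E \le 2 - \kappa,\, 0 \le \eta \le 10}$, because $z^2 - 4$ remains uniformly in $\C \setminus [0, \infty)$ there. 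Continuity of $\mathfrak m$ and of $\im \mathfrak m$ therefore transfers the $\eta = 0$ estimates, with $\kappa$-dependent constants, to all of $\b S$; in particular $\im \mathfrak m = 1/\alpha + O(\eta)$.

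Finally, (iii) follows by combining (i) with the expansion of $\im \mathfrak m$ just obtained. From (i) and $\im \mathfrak m \asymp 1$ we have $\abs{\mathfrak m}^2 = 1 + O(\eta)$, and plugging this together with $\im \mathfrak m = 1/\alpha + O(\eta)$ back into \eqref{id m im m} yields
\[
\abs{\mathfrak m}^2 \;=\; 1 - \frac{\eta \abs{\mathfrak m}^2}{\im \mathfrak m} \;=\; 1 - \alpha \eta + O(\eta^2)\,.
\]
The only mildly delicate point is justifying analyticity of the correct branch of $\sqrt{z^2 - 4}$ across $\eta = 0$ inside the bulk; once that is in hand the rest reduces to Taylor expansion and no essential obstacle remains.
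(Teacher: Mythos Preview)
Your argument is correct: identity (i) from the imaginary part of \eqref{eq:m_equation}, the bound $\abs{\mathfrak m}\le 1$ as an immediate consequence, the lower bound and $\im\mathfrak m\asymp 1$ by continuity and compactness of $\mathfrak m$ on the closed bulk strip, and (iii) by Taylor-expanding $\im\mathfrak m$ about $\eta=0$ and feeding back into (i). The check that $z^2-4\in\C\setminus[0,\infty)$ on $\{-2+\kappa\le E\le 2-\kappa,\,0\le\eta\le 10\}$ is straightforward (its imaginary part is $2E\eta$, and at $E=0$ or $\eta=0$ the real part is strictly negative), so the branch issue you flag is harmless.

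Note that the paper does not give its own proof of this lemma at all: it simply cites Lemma~3.5 of \cite{EKYY13}. Your proposal therefore supplies a self-contained elementary argument where the paper defers to a reference; there is nothing to compare approaches against.
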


\section{Self-consistent equation for $ T$ and Fourier space analysis}\label{sec:fourier}
From this section up to Section \ref{sec:nonGaussian_generalComplex} we will work in $d=1$, while in Section \ref{sec:high_dim} we will discuss the high dimensions case.
Following \cite{EKYY13}, we write a self-consistent equation
for $T$ defined in \eqref{def:T} and we control its error term by using a Fourier space argument. Set

\begin{align}\label{eq:error_def}
\cal E_{xy} := T_{xy} - |\mathfrak{m}|^2\sum_{i} S_{xi} T_{iy} - |\mathfrak{m}|^2 S_{xy},
\end{align}
so that $ T$ satisfies \eqref{self-const intro}. 
We introduce the projection $ \Pi := \mathbf{i} \mathbf{i}^*$ with $ \mathbf{i}= L^{-1/2}(1, \ldots, 1)$ and we denote the complementary projection by $\overline{\Pi} := I-\Pi $ where $I$ is the identity matrix. Setting $ \overline{T}_y = (\Pi T)_{xy}$, from Proposition 5.1 in \cite{EKYY13} we know that
\begin{align}\label{eq:T_solved}
T_{xy} = \overline{T}_y + |\mathfrak{m}|^2 \bigg( \frac{S - \Pi}{I-|\mathfrak{m}|^2 S} \bigg)_{xy} + \tilde{\cal E}_{xy}
\end{align}
where
\begin{align}\label{eq:Tbar}
\overline{T}_y = \frac{\mathrm{Im}\, \mathfrak{m}}{L \eta} \bigg[1+O_{\prec}\bigg(\frac{1}{\sqrt{W\eta}} \bigg)  \bigg]
\end{align}
and
\begin{align}\label{eq:tilde_E}
\tilde{\cal E}_{xy} = \bigg(\frac{1}{I-|\mathfrak{m}|^2 S} \overline\Pi \cal E \bigg)_{xy}.
\end{align}
Note that $ \frac{\im \fra m}{L \eta} = \Pi_{xy} \frac{\im \fra m}{\eta}$ and
$$
\frac{\im \fra m}{\eta} \Pi + |\mathfrak{m}|^2 \bigg( \frac{S - \Pi}{I-|\mathfrak{m}|^2 S} \bigg) = \Theta
$$
where we used \eqref{id m im m} and the fact that $ \Pi S = S \Pi = \Pi$. Thus, from \eqref{eq:T_solved} and \eqref{eq:Tbar} we get
\begin{align}\label{eq:T_solved2}
T_{xy} = O_{\prec} \bigg( \frac{1}{L \sqrt{W \eta^3}} \bigg) + \Theta_{xy} + \tilde{\cal E}_{xy}.
\end{align}

In order to get a bound for $\tilde{\cal E}$, we will analyze it in Fourier space.  
We introduce our conventions for the discrete Fourier transform: given $f: \bb T_L \to \C $ and $p \in \frac{2 \pi}{L}\bb T_L =:  \bb T_L^* $, then we set
\begin{align*}
\widehat f(p): = \sum_{x \in \bb T_L} \ee^{-\ii p  x}f_x, \ \ \ f_x= \frac{1}{L} \sum_{p \in \bb T_L^*} \ee^{\ii p x} \widehat f(p).
\end{align*}

We define the family of vectors $(\b e(p), p \in \bb T_L)$ whose components are $ e_x(p):=L^{-1/2} \ee^{\ii p x}$. Thus, one can write $ \widehat{f}(p)= L^{1/2}\langle \b e(p), f \rangle$, where $ \langle \cdot, \cdot \rangle$ is the standard inner product of $ L^2(\bb T_L)$. Note that $\b e(p)$ is completely delocalized for all $p \in \bb T_L^*$, in the sense that $\norm{\b e(p)}_\infty \leq L^{-1/2}$.
  
%
%
In the following proposition we get a bound for $ \tilde{\cal E}_{xy}$ via a Fourier space argument. The idea is to split $\tilde{ \cal E}$ in three parts corresponding to the zero mode, the low modes and the high modes contribution. The treatment of the first and the third term is very close to what is done in section 5 of \cite{EKYY13}, while the second term is analyzed by exploring the quadratic behaviour of the Fourier transform of $S_{x0}$ for small momenta (see \eqref{low-p exp}). 

\begin{proposition}\label{prop:fourier_analysis}
Let $\tilde{\cal E}_{xy}$ be defined as in \eqref{eq:tilde_E}. Then
\begin{align*}
\sup_{x,y} |\tilde{\cal E}_{xy}| \prec  \frac{1}{\sqrt{L}}\min \bigg( \frac{L^{2}}{W^2}, \frac{L}{W \sqrt{\eta}} \bigg) \sup_{y} \sup_{p \neq 0} |\langle \b e(p), \cal E \rangle_y| + \sup_{x,y} | \cal E_{xy}|,
\end{align*}
where $ \langle \b e(p), \cal E \rangle_y = \sum_{x \in \bb T_L} e_{-x}(p)\cal E_{xy}$. Here $ \cal E_{xy}$ is regarded as a vector in $x$, while the $y$'s are just considered as parameters.
\end{proposition}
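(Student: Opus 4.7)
The plan is to diagonalize the operator $(I-|\mathfrak m|^2 S)^{-1}\overline\Pi$ in the Fourier basis. Since $S$ is translation invariant, each $\b e(p)$ is an eigenvector with eigenvalue $\hat s(p)$, and $\overline\Pi$ annihilates the $p=0$ mode, so
$$\tilde{\cal E}_{xy} \;=\; \sum_{p\neq 0}\frac{e_x(p)\,\langle\b e(p),\cal E\rangle_y}{1-|\mathfrak m|^2\hat s(p)}.$$
The critical property I will exploit is the complete delocalization of the Fourier basis: $|e_x(p)|=L^{-1/2}$ uniformly in $x,p$, which supplies a universal prefactor of $L^{-1/2}$ in front of every $p$-mode. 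I then split this sum at the natural scale $|p|\asymp W^{-1}$ and handle the low- and high-momentum contributions separately.

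For the low modes $0<|p|\leq W^{-1}$, combining \eqref{low-p exp} with \eqref{m2} gives the quadratic behaviour $1-|\mathfrak m|^2\hat s(p)\asymp \alpha\eta + W^2 D p^2$. Bounding $|\langle\b e(p),\cal E\rangle_y|\leq\mathcal{N}:=\sup_y\sup_{p\neq 0}|\langle\b e(p),\cal E\rangle_y|$ and using complete delocalization yields
$$|\tilde{\cal E}^{\mathrm{low}}_{xy}|\;\lesssim\; L^{-1/2}\mathcal{N}\sum_{0<|p|\leq W^{-1}}\frac{1}{\alpha\eta+W^2 D p^2}.$$
A Riemann-sum comparison caps the remaining sum by $\min(L^2/W^2,\,L/(W\sqrt{\eta}))$: the first factor by dropping $\alpha\eta$ and using $\sum_{k\geq 1}k^{-2}=O(1)$; the second by comparison with the integral $\int dp/(\alpha\eta+W^2 D p^2)\asymp 1/(W\sqrt{\alpha\eta})$ multiplied by the lattice density $L/(2\pi)$. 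This reproduces exactly the first term of the proposition.

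For the high modes $|p|>W^{-1}$, I use the identity $\frac{1}{1-|\mathfrak m|^2\hat s(p)}=1+\frac{|\mathfrak m|^2\hat s(p)}{1-|\mathfrak m|^2\hat s(p)}$. The ``$1$'' contribution assembles into $(P_{\mathrm{high}}\cal E)_{xy}$, where $P_{\mathrm{high}}=I-\Pi-\overline P_{\mathrm{low}}$ is the Fourier projector onto $|p|>W^{-1}$. The $I-\Pi$ piece is bounded trivially by $2\sup_{x,y}|\cal E_{xy}|$, producing the second term of the proposition, while the $\overline P_{\mathrm{low}}$ piece is at most $L^{-1/2}\mathcal{N}\cdot\#\{0<|p|\leq W^{-1}\}\lesssim L^{1/2}\mathcal{N}/W$, and this is absorbed into the low-mode bound since $L^{1/2}/W\leq L^{-1/2}\min(L^2/W^2, L/(W\sqrt{\eta}))$ in our regime. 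The residual sum $\sum_{|p|>W^{-1}}\frac{|\mathfrak m|^2\hat s(p)}{1-|\mathfrak m|^2\hat s(p)}e_x(p)\langle\b e(p),\cal E\rangle_y$ is then controlled by exploiting the rapid decay of $\hat s$ inherited from \eqref{decay of f}, namely $|\hat s(p)|\lesssim_K\langle Wp\rangle^{-K}$, which gives $\sum_{|p|>W^{-1}}|\hat s(p)|\lesssim L/W$; combined with the uniform lower bound $|1-|\mathfrak m|^2\hat s(p)|\gtrsim 1$ for such $p$, this again produces a contribution $\lesssim L^{1/2}\mathcal{N}/W$ that is absorbed.

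The main obstacle I anticipate is establishing the uniform lower bound $|1-|\mathfrak m|^2\hat s(p)|\gtrsim 1$ on the entire high-mode regime, which requires $|\hat s(p)|$ to be bounded away from $1$ throughout $|p|>W^{-1}$. This rests on the decay of $\hat f$ inherited from the smoothness of $f$ in \eqref{decay of f}. A naive cutoff at $|p|=W^{-1}$ may leave a narrow transition zone where $\hat s(p)$ is still close to $1$; this can be resolved either by absorbing that zone into the low-mode bucket (where the stronger estimate $1-|\mathfrak m|^2\hat s(p)\gtrsim W^2 p^2$ is already $\gtrsim 1$) or by shifting the cutoff to $|p|=C_0/W$ for a sufficiently large constant $C_0$ chosen so that $|\hat s(p)|\leq 1/2$ for $|p|\geq C_0/W$, which preserves the low-mode sum bounds up to constants.
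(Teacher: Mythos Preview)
Your argument is correct and handles the low-frequency modes in the same way as the paper: diagonalize in Fourier, use $1-|\mathfrak m|^2\hat s(p)\asymp\alpha\eta+W^2Dp^2$ for small $p$, and bound the resulting sum by $\min(L^2/W^2,L/(W\sqrt\eta))$ via Riemann comparison. The difference lies in how you treat the high modes. The paper introduces a smooth cutoff $Q$ with $\widehat q(p)=1-\chi(pW)$, expands $(I-|\mathfrak m|^2S)^{-1}Q$ as a Neumann series, and bounds the first $K\sim C\log L$ terms by $\|QS^k\|_{\infty\to\infty}=O(\log L)$ and the tail by the spectral-gap estimate $\|SQ^{1/k}\|_{2\to 2}\leq 1-c$, obtaining $O(\log L)^2\|\overline\Pi\cal E\|_\infty$. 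Your route is more direct: the algebraic identity $\tfrac{1}{1-x}=1+\tfrac{x}{1-x}$ isolates a projector piece that is trivially $\leq 2\sup_{x,y}|\cal E_{xy}|$, and the residual is killed by $\sum_p|\hat s(p)|\lesssim L/W$ combined with the uniform lower bound on the denominator. This avoids the operator-norm machinery entirely, at the price of using the rapid decay of $\hat s$ explicitly, whereas the paper only needs the weaker spectral gap $1-\hat s(p)\geq c$ for $|p|\gtrsim W^{-1}$.

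One small attribution slip: the bound $|\hat s(p)|\lesssim_K\langle Wp\rangle^{-K}$ comes from the \emph{smoothness} of $f$ (stated at the beginning of Section~2), not from the decay hypothesis \eqref{decay of f}; the latter is what lets you pass between the discrete and continuous Fourier transforms. Your anticipated fix of shifting the cutoff to $|p|=C_0/W$ to ensure $|\hat s(p)|\leq 1/2$ is exactly right, and the low-mode estimate is unchanged since $1-|\mathfrak m|^2\hat s(p)\gtrsim W^2p^2$ remains valid up to that scale.
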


\begin{proof}
We will use this trivial bound: let $ \b w \in \C^{L}$, then
\begin{align}\label{eq:trivial_norm_bound}
\norm{\b w}_{\infty} \leq \norm{\b w}_2 \leq \sqrt{L} \norm{\b w}_{\infty}.
\end{align}

Let $Q = (Q_{xy}, \,\, x,y \in \bb T_L)$ be a $L \times L$ matrix with translation invariant and $L$-periodic entries, i.e.\ there exists a function $q: \bb T_L \to \C$ such that $q_x = Q_{x0}$. We now specify the form of $ q_x$ via its discrete Fourier transform: for $p \in \bb T_L^*$ we set
\begin{align*}
\widehat q(p) = 1-\chi(pW),
\end{align*}
where $ \chi \in C^{\infty}(\R)$ is a bump function such that $ \chi(r)=1$ for $ |r|^2 \leq 1$ and $ \chi(r) =0$ for $ |r|^2 \geq 2$. Basically, $\widehat q(p) $ is a smoothed version of the indicator function ${\bf 1}(p > W^{-1})$. Furthermore, we introduce the notation $w_x := (\ol \Pi \cal E)_{xy}$, where we regard the index $y$ as a parameter. 

Note that $ Q$ and $ S$ commute because they are translation invariant. Hence the error term \eqref{eq:tilde_E} can be written as
\begin{align}\label{eq:split_error}
\tilde{\cal E} = \frac{I- Q}{I-|\mathfrak{m}|^2 S} \b w +\frac{ Q}{I-|\mathfrak{m}|^2 S} \b w.
\end{align}
From the definition of $\widehat q(p) $, one can easily see that the first term on the right hand side of \eqref{eq:split_error} is the small Fourier modes contribution, while the second one is the large Fourier modes contribution. 
%
Let us analyze the large modes term: its $\ell^{\infty}$-norm can be bounded as follows
\begin{align*}
\bigg \Vert \frac{ Q}{I-|\mathfrak{m}|^2S} \b w \bigg \Vert_{\infty} \leq &\sum_{k=0}^{K-1} |\mathfrak{m}|^{2k}\norm{ QS^k \b w}_{\infty}+\sum_{k=K}^{\infty} |\mathfrak{m}|^{2k}\norm{ QS^k \b w}_{\infty} \\\nonumber
\leq & \sum_{k=0}^{K-1} |\mathfrak{m}|^{2k}\norm{ QS^k}_{\infty \to \infty}\norm{\b w}_{\infty}+\sum_{k=K}^{\infty} |\mathfrak{m}|^{2k}\norm{ QS^k \b w}_{2} \\\nonumber
\leq & \sum_{k=0}^{K-1} |\mathfrak{m}|^{2k}\norm{ QS^k}_{\infty \to \infty}\norm{\b w}_{\infty}+\sqrt{L}\sum_{k=K}^{\infty} |\mathfrak{m}|^{2k}\norm{ QS^k}_{2 \to 2} \norm{\b w}_{\infty}
\end{align*}
where we used \eqref{eq:trivial_norm_bound} and $ K \in \N$ is going to be chosen later. Here we denoted the $\ell^{\infty} \to \ell^{\infty}$ norm of a matrix $A$ by $\norm{A}_{\infty \to \infty} = \max_i \sum_j |A_{ij}|$ and the Euclidean matrix norm by $\norm{A}_{2 \to 2}$.

We observe that
\begin{align}\label{eq:norm_infinity_est}
&\norm{S^k Q}_{\infty \to \infty} = O(\log L) \\
\label{eq:norm_2_est}
& \norm{S Q^{1/k}}_{2 \to 2} \leq 1-c
\end{align}
for some small positive $c$.

Note that $ \norm{S}_{\infty \to \infty} \leq 1$ since $ \sup_x |\sum_y S_{xy} v_y| \leq \sup_y |v_y| = \norm{\b v}_{\infty} $ for any $\b v \in \C^L$. Thus, to prove \eqref{eq:norm_infinity_est} we need to show that $ \norm{ Q}_{\infty \to \infty} = \norm{ q}_1 = O(\log L) $. 

To do this, we consider $u(r)= 1-\chi(rW)$ as smooth function on the torus, i.e.\ $r \in \tilde{\bb T}=  [-\pi, \pi]$, with Fourier coefficients
\begin{align*}
\widehat u(n):= \frac{1}{2\pi}\int_{\tilde{\bb T}} \dd r \, \ee^{\ii r n} u(r), \ \ \ n \in \Z\,.
\end{align*}
Since $u(r)$ is smooth, then for any $\ell \geq 0$ we have
\begin{align}\label{eq:fast_decay}
|\widehat u(n)| \leq C \Big(\ind{n = 0} + \frac{\ind{n \neq 0}}{W} \Big| \frac{W}{n}\Big|^{\ell} \Big). 
\end{align}
Note that $\widehat q(p)= u(p)$ for $p \in \bb T_L^*$, then we can write 
\begin{align}\label{eq:q_representation}
 q_{x} = \frac{1}{L}\sum_{p \in \bb T_L^*} \ee^{\ii p x}\widehat{ q}(p) = \frac{1}{L} \sum_{p \in \bb T_L^*} \ee^{\ii p x} u(p) = \sum_{m \in \Z} \widehat u(x + L m). 
\end{align}
Using \eqref{eq:fast_decay} and \eqref{eq:q_representation} and approximating the sums via integrals, one can see that
\begin{align*}
\norm{ q}_1 & \leq \sum_{x \in \bb T_L} \sum_{m \in \Z} |\widehat u(x + Lm)| = O(\log L),
\end{align*}
which shows \eqref{eq:norm_infinity_est}. 
%
%

On the other hand, to prove \eqref{eq:norm_2_est} we observe that a discrete Fourier analysis argument yields for some small positive $c$ that
\begin{align}\label{eq:below_bound}
\ind {|p|^2 \geq W^{-2}}(1-\widehat{s}(p)) &  
\geq c \ind {|p|^2 \geq W^{-2}},
\end{align}
 where $\widehat s$ is the Fourier transform of the function $s_x: = S_{x0}$ for $x \in \bb T_L$.
Hence, from \eqref{eq:below_bound} we get
\begin{align*}
\norm{S Q^{1/k}}_{2 \to 2} =\sup_{p \in \bb T_L^*} | \widehat{s}(p) (1-\chi(pW))^{1/k}| \leq \sup_{p \in \bb T_L^*} | \widehat{s}(p) \ind {|p|^2 \geq W^{-2}}| \leq 1-c.
\end{align*}
Thus, by \eqref{eq:norm_2_est} and \eqref{eq:norm_infinity_est} we obtain
\begin{align*}
\bigg \Vert \frac{ Q}{I-|\mathfrak{m}|^2S} \mathbf w \bigg \Vert_{\infty} 
\leq \norm{\mathbf w}_{\infty}\bigg[O(K \log L)+\sqrt{L}\sum_{k=K}^{\infty} (1-c)^k \bigg].
\end{align*}
By summing the geometric series and choosing $ K = C  \log L $ for some sufficiently large constant $C$ we get
%
\begin{align}\label{eq:Q_estimate}
\bigg \Vert \frac{ Q}{1-|\mathfrak{m}|^2S} \b w \bigg \Vert_{\infty} 
= O(\log L)^2 \norm{\mathbf w}_{\infty} \prec \norm{\mathbf w}_{\infty}.
\end{align}
We now need to estimate the second term in \eqref{eq:split_error}, i.e.\ the small modes contribution, which can be written as
\begin{align*}
\bigg( \frac{ I-Q}{I-|\mathfrak{m}|^2S}\overline{\Pi} \cal E \bigg)_{xy}
=   \sum_{\substack{p \in \bb T_L^* \\ p \neq 0}} \frac{\chi(pW)}{1-|\mathfrak{m}|^2 \widehat{s}(p)} e_x(p) \langle \b e(p), \cal E \rangle_y 
\end{align*}
where we recall that 
$ \langle \b e(p), \cal E \rangle_y = \sum_{x \in \bb T_L} e_{-x}(p)\cal E_{xy}$. Using \eqref{m3} and the constraint on the momentum $ p$ given by $ \widehat{P}(p)$ and $ 1-\widehat s (p) \geq cW^2p^2$ for $ (pW)^2 \leq 2$, one can see that
\begin{align}\label{eq:P_estimate}
\Bigg\Vert \sum_{\substack{p \in \bb T_L^* \\ p \neq 0}} \frac{\chi(pW)}{1-|\mathfrak{m}|^2 \widehat{s}(p)} \b e(p) \langle \b e(p), \cal E \rangle_y \Bigg \Vert_{\infty} & \prec \frac{1}{\sqrt L}  \sup_{p \neq 0} |\langle \b e(p), \cal E \rangle_y| \sum_{\substack{p \in \bb T_L^* \\ p \neq 0}} \frac{\chi(pW)}{\eta + (pW)^2}.
\end{align} 
Note that 
\begin{align}\label{eq:est1}
 \sum_{\substack{p \in \bb T_L^* \\ p \neq 0}} \frac{\chi(pW)}{\eta + W^2 |p|^2} \prec \sum_{\substack{j \in \Z \\0 < |j| \leq L/W}} \frac{1}{\eta + ( \frac{W|j|}{L} )^2} \prec \frac{L^2}{W^2}\sum_{j = 1}^{\infty} |j|^{-2} \prec \frac{L^2}{W^2}.
\end{align}
On the other hand, by estimating the sum with an integral we get
\begin{align}\label{eq:est2}
\sum_{\substack{j \in \Z \\0 < |j| \leq L/W}} \frac{1}{\eta + ( \frac{W|j|}{L} )^2} \prec \int_{0}^{L/W}\frac{\dd x}{\eta + ( \frac{W x}{L} )^2} \prec \frac{ L}{W \sqrt{\eta}}\,.
\end{align}
Thus, by using  \eqref{eq:Q_estimate}, \eqref{eq:P_estimate}, \eqref{eq:est1} and \eqref{eq:est2} we obtain
\begin{align}\label{eq:err_preliminary}
\sup_{x,y} |\tilde{\cal E}_{xy}| \prec \frac{1}{\sqrt{L}} \min \bigg( \frac{L^{2}}{W^2}, \frac{L}{W \sqrt{\eta}} \bigg) \sup_{y} \sup_{p \neq 0} |\langle \b e(p), \cal E \rangle_y| + \sup_{x,y} | \cal E_{xy}|\,.
\end{align}
\end{proof}

\begin{remark} 
For $T'_{xy} = \sum_i |G_{xi}|^2 S_{iy}$ an analogous result holds: it satisfies the self-consistent equation 
$$
T' = |\fra m|^2 T' S + |\fra m|^2 S + \cal E',
$$
so that one gets
$$
T'= O_{\prec} \bigg( \frac{1}{L \sqrt{W \eta^3}} \bigg) + \Theta + \tilde{\cal E}', \ \ \ \  \tilde{\cal E}' = \cal E' \ol \Pi(I - |\fra m|^2 S)^{-1}.
$$
In the proof of Proposition \ref{prop:fourier_analysis} we regarded $\cal E_{xy}$ as a column vector indexed by $x$ and we regarded $y$ as a parameter. Conversely, the analogous statement for $T'$ can be obtained with the same Fourier analysis argument by seeing $\cal E'_{xy}$ as a row vector indexed by $y$ with $x$ as parameter.
\end{remark}

\begin{remark} 
Note that in \cite{EKYY13} the authors derive the bound 
\begin{align}\label{eq:old_err}
\sup_{x,y} |\tilde{ \cal E}_{xy}| \prec \min \bigg( \frac{L^{2}}{W^2}, \frac{L}{W \sqrt{\eta}} \bigg) \sup_{x,y} | \cal E_{xy}|
\end{align}
and then use the fluctuation averaging estimates based on the Schur's complement formula to control $\sup_{x,y} | \cal E_{xy}|$. In \eqref{eq:err_preliminary} we see that in the first term, i.e.\ the small modes contribution, we gain a prefactor $ L^{-1/2}$ compared to the prefactor in \eqref{eq:old_err}, but we have to estimate $ \sup_{y} \sup_{p \neq 0} |\langle \b e(p), \cal E \rangle_y| $ instead of just $ \sup_{x,y} | \cal E_{xy}|$. 

On the other hand, the second term in \eqref{eq:err_preliminary}, corresponding to the large Fourier modes, is going to be always subdominant compared to the first one, since its prefactor is just 1.
\end{remark}

In the following section we discuss how to estimate 
$ \sup_{y} \sup_{p \neq 0} |\langle \b e(p), \cal E \rangle_y|$ and $\sup_{x,y} | \cal E_{xy}|$ and we prove the main results.
As already mentioned in the Introduction, for this task we will avoid using cumbersome expansions based on the Schur's complement formula, but we will rather employ the cumulant expansion method.

\section{Proofs of the main results}

Let us define the following family of vectors:
\begin{align*}
\bb V := \left\lbrace \b v \in \C^L:  \ \ \norm{\mathbf v}_2 =O(1), \ \ \norm{\b v}_{\infty} = O(L^{-{1/2}}) \right\rbrace.
\end{align*}
Note that $\b e(p) \in \bb V$ for any $p \in \bb T_L^*$. Moreover, let us fix the following notation: let $ A$ be a $L \times L$ matrix and $ \b u \in \C^L$, then we will write $ A_{\b u b} := \sum_{a} u_a A_{ab}$. Thus
\begin{align}\label{eq:sup_E_vy}
\sup_{y} \sup_{p \neq 0} |\langle \b e (p), \cal E \rangle_y| \leq \sup_y \sup_{\b v \in \bb V} |\cal E_{\b v y}|.
\end{align}
 Furthermore, it is convenient to split $ \cal E_{xy}$ as $ \cal E_{xy} = \cal P_{xy} + \cal R_{xy}$ where
\begin{align}
\cal P_{xy} = & \,\,-\fra m z T_{xy} - \fra m \sum_{i,j} S_{xi} S_{ij}G_{jj}|G_{iy}|^2 - \fra m \sum_{i,j} S_{xi} S_{ij}\overline{G}_{ii}|G_{jy}|^2 - \fra m S_{xy} \overline{G}_{yy} \label{eq:def_P}
\\
\cal R_{xy} = & \,\fra m \sum_{i,j} S_{xi} S_{ij}(G_{jj}-\mathfrak{m})|G_{iy}|^2 + \fra m \sum_{i,j} S_{xi} S_{ij}(\overline{G}_{ii}-\overline{\mathfrak{m}})|G_{jy}|^2 + \fra m S_{xy} (\overline{G}_{yy}-\overline{\mathfrak{m}}). 
\label{eq:def_R}
\end{align}
By using \eqref{eq:m_equation} in \eqref{eq:def_P} we readily see that indeed
$$
\cal P_{xy} = (1 + \fra m^2) T_{xy} - \fra m \sum_{i,j} S_{xi} S_{ij}G_{jj}|G_{iy}|^2 - \fra m \sum_{i,j} S_{xi} S_{ij}\overline{G}_{ii}|G_{jy}|^2 - \fra m S_{xy} \overline{G}_{yy} = \cal E_{xy} - \cal R_{xy}.
$$ 
The reason why we chose to write $ \cal E_{xy}$ in this complicated way is that, applying the cumulant expansion directly to $ \E \cal E_{xy}$, it is not possible to get any cancellation even at the level of the expectation, while for $ \cal P_{xy}$ one gets $\E \cal P_{xy} = 0$ when $ H_{ij}$ are Gaussian. 

To see that, we recall the basic definition of the Green function $ G(z)=(H - z)^{-1} $ which amounts to
\begin{align}\label{eq:def_G}
z G = H G - z I.
\end{align}
We set the notation $\partial_{ij} g(H) := \frac{\partial}{\partial H_{ij}} g(H)$ and we recall also the differentiation rule when $H$ is complex Hermitian:
\begin{align}\label{eq:diff_G}
\partial_{kl} G_{ij} =  - G_{ik} G_{lj}.
\end{align}
Note that, assuming $H_{ij}$ to be Gaussian with $\E H_{ij}^2 =0$, in the cumulant expansion formula \eqref{eq:cumulant_expansion} only the term dependent on the variance $\E |H_{ij}|^2$ survives:
\begin{align}\label{eq:cum_exp_gaussian}
\E H_{ij} f(H_{ij}, H_{ji}) = \E |H_{ij}|^2 \E \partial_{ji} f(H_{ij}, H_{ji}) = S_{ij} \E \partial_{ji}f(H_{ij}, H_{ji}).
\end{align}
Thus, from \eqref{eq:def_G}, \eqref{eq:diff_G} and \eqref{eq:cum_exp_gaussian}, we easily get
\begin{align*}
z\, \E T_{xy} + S_{xy} \E \ol G_{yy} & = \sum_{i} S_{xi}(zG_{iy})\ol G_{iy} = \E \sum_{i,j} S_{xi} H_{ij} G_{jy} \ol G_{iy} = \E \sum_{i,j} S_{xi} S_{ij} \partial_{ji} (G_{jy} \ol G_{iy}) \\
& = -\E \sum_{i,j} S_{xi} S_{ij} (G_{jj} |G_{iy}|^2 + \ol G_{ii} |G_{jy}|^2)
\end{align*}
which implies that $ \E \cal P_{xy} = 0$. 
As we will see, remarkable cancellations occur also for the moments of $ \cal P_{xy}$ even when $H$ is non Gaussian and $\E H_{ij}^2 \neq 0$, allowing us to control $ \cal P_{xy}$ and $ \cal P_{\b v y}$. 

In the following, we shall call \emph{control parameter} any positive and deterministic quantity $\Psi^{(N)}(z)$ and we shall call \emph{admissible} any control parameter $\Psi^{(N)}(z)$ such that
\begin{equation} \label{admissible Psi}
M^{-1/2} \;\leq\; \Psi^{(N)}(z) \;\leq\; M^{-\gamma/2}
\end{equation}
for all $N$ and $z \in \b S$, where $\gamma$ is the same fixed number as in \eqref{def:S}.
A typical example of an admissible control parameter is $\Psi(z) = \frac{1}{\sqrt{M \eta}}$.

The proposition below states precisely the bounds that we obtain for $ \cal P_{xy}$ and $ \cal P_{\b v y}$, as well as for $\cal R_{\b vy}$ and $ \cal R_{xy}$: it is the technical core of the paper and it replaces the fluctuation averaging estimates of Proposition 3.9 in \cite{EKYY13}. 

\begin{proposition}\label{prop:main_estimates}
Let $\Psi$ be an admissible control parameter as defined in \eqref{admissible Psi} such that $ \Lambda \prec \Psi$ and $ \b v \in \bb V$.
Then the following bounds hold true:
\begin{align} \label{eq:P_vy}
\cal P_{\mathbf v y} & \prec \frac{\Psi^3}{\sqrt{\eta}} + \frac{\Psi}{\sqrt{L \eta}}\,, \\
\label{eq:P_xy}
\cal P_{xy} & \prec \frac{\Psi^3}{W} \sqrt{\frac{L}{\eta}} + \frac{\Psi}{W \sqrt{\eta}}\,, \\ 
\label{eq:R_vy}
\cal R_{\mathbf v x} & \prec \frac{\Psi^2}{\eta \sqrt{L}} + \frac{\Psi}{\sqrt{L}}\,, \\
\label{eq:R_xy}
\cal R_{xy} & \prec \frac{\Psi^2}{W \eta} + \frac{\Psi}{W}\,.
\end{align}
\end{proposition}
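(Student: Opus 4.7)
The plan is to prove all four bounds by computing $\E|\cdot|^{2p}$ for arbitrary fixed $p$ and then invoking Chebyshev to pass to stochastic domination. The two $\cal P$ bounds \eqref{eq:P_vy}--\eqref{eq:P_xy} are the heart of the work and are obtained via the cumulant expansion of Lemma~\ref{lem:cumulant_expansion}; the two $\cal R$ bounds \eqref{eq:R_vy}--\eqref{eq:R_xy} are softer and reduce to an auxiliary fluctuation-averaging estimate
\[
\absbb{\sum_j S_{ij}(G_{jj} - \fra m)} \;\prec\; \Psi^2\,,
\]
which can itself be obtained from a single application of Lemma~\ref{lem:cumulant_expansion} to $\sum_j S_{ij}G_{jj}$ via the resolvent identity $zG_{jj} = \sum_k H_{jk}G_{kj} - 1$.

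For the $\cal P$ bounds I would first use $1+\fra m^2 = -z\fra m$ (a rewriting of \eqref{eq:m_equation}) together with $zG_{iy} = \sum_j H_{ij}G_{jy} - \delta_{iy}$ to re-express
\[
\cal P_{xy} \;=\; -\fra m \sum_{ij}S_{xi}H_{ij}G_{jy}\bar G_{iy} \;-\; \fra m \sum_{ij}S_{xi}S_{ij}\bigl(G_{jj}|G_{iy}|^2 + \bar G_{ii}|G_{jy}|^2\bigr)\,,
\]
which manifestly exhibits $\cal P$ as the difference between an $H$-linear expression and its Gaussian contraction. To bound $\E|\cal P_{\b v y}|^{2p}$, I would substitute this representation into one factor of $\cal P_{\b v y}$, pull the $H_{ij}$ outside the expectation, and apply Lemma~\ref{lem:cumulant_expansion}. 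The second-cumulant contribution in which the derivative with respect to $\bar H_{ij}$ acts back on the factor $G_{jy}\bar G_{iy}$ precisely reproduces the second (counter-term) group above, so these contributions cancel exactly; what remains is (a) second-cumulant terms where the derivative falls on one of the other $2p-1$ copies of $\cal P_{\b v y}$ or $\overline{\cal P_{\b v y}}$, and (b) higher-cumulant contributions with $k \geq 3$ together with the truncation remainder \eqref{eq:R_cumulant_expansion}.

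Terms of type (a) produce two new $G$-factors contracted with $\b v$ through sums $\sum_x v_x G_{x\cdot}$; Cauchy--Schwarz, Ward, and $\|\b v\|_2 = O(1)$ give $|\sum_x v_x G_{x\cdot}|^2 \prec 1/\eta$, while the delocalization $\|\b v\|_\infty \leq L^{-1/2}$ at the ``returning'' slot contributes an extra $L^{-1/2}$, so each variance pairing brings a factor $(L\eta)^{-1/2}$; iterating this produces the $\Psi/\sqrt{L\eta}$ term in \eqref{eq:P_vy}. Terms of type (b) carry an extra $M^{-(k-2)/2} \lesssim \Psi^{k-2}$ from the higher cumulant, and after further iteration of Lemma~\ref{lem:cumulant_expansion} to absorb any new $H$-dependence and after using $\Lambda \prec \Psi$ and Ward on the residual entries, they contribute at order $\Psi^3/\sqrt\eta$, giving the first term of \eqref{eq:P_vy}. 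The bound \eqref{eq:P_xy} is established by the same procedure with $\|\b v\|_\infty \leq L^{-1/2}$ replaced by $\max_i S_{xi} \leq W^{-1}$ and isotropic Ward replaced by its pointwise version, which accounts for the different powers of $L$ and $W$. For the $\cal R$ bounds, the fluctuation-averaging estimate above applied termwise to $\cal R_{xy}$, combined with Ward and $\max_i S_{xi},\ \max_j (S^2)_{xj} \lesssim W^{-1}$, immediately yields \eqref{eq:R_xy}; the isotropic version \eqref{eq:R_vy} follows by replacing $S_{x\cdot}$ with the weight $u_i = (\b v^{\mathrm T} S)_i$, which satisfies $\|u\|_\infty \leq L^{-1/2}$. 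The residual $\Psi/W$ and $\Psi/\sqrt L$ terms come directly from the contribution $\fra m S_{xy}(\bar G_{yy}-\bar{\fra m})$ and its isotropic analogue.

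The hard part will be the combinatorial bookkeeping inside the $2p$-th moment expansion of $\cal P_{\b v y}$: one must organize the terms generated by iterating Lemma~\ref{lem:cumulant_expansion} to eliminate any new $H$-dependence produced when derivatives act on other $\cal P$ factors, track the number of isotropic $\b v$-contractions surviving in each resulting word, and verify that the $L^{-1/2}$ gain from delocalization accumulates with the correct multiplicity, so that the final moment bound is $(\Psi/\sqrt{L\eta})^{2p}$ rather than the weaker $(\Psi/\sqrt\eta)^{2p}$ one would get from pointwise Ward alone. It is precisely this delocalization bookkeeping that plays the role of the intricate fluctuation-averaging machinery of \cite{EKY13} in the original proof of \cite{EKYY13}, and is the mechanism by which working with the isotropic vectors $\b v \in \bb V$ shortens the argument.
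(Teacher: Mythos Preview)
Your high-level architecture --- rewrite $\cal P$ as an $H$-linear term minus its Gaussian contraction, apply the cumulant expansion to $\E|\cal P_{\b v y}|^{2p}$, note the cancellation when $\partial_{ji}$ hits $G_{jy}\bar G_{iy}$, and handle $\cal R$ via a fluctuation-averaging bound $\sum_j S_{ij}(G_{jj}-\fra m)\prec\Psi^2$ --- matches the paper exactly. But your description of what happens when the derivative falls on another copy of $\cal P_{\b v y}$ is where the proposal has a real gap.

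You write that such a hit ``produces two new $G$-factors contracted with $\b v$'' which can then be bounded by Cauchy--Schwarz and Ward. This is not what actually happens. Writing $\cal P_{\b v y}=-\fra m\,\cal Q_{yy}$ with $\cal Q_{xy}=\sum_{i,j}S_{\b v i}(H_{ij}G_{jx}\bar G_{iy}+S_{ij}G_{jj}G_{ix}\bar G_{iy}+S_{ij}\bar G_{ii}G_{jx}\bar G_{jy})$, one computes
\[
\partial_{ji}\cal Q_{yy}\;=\;-\cal Q_{jy}G_{iy}-\cal Q_{yi}\bar G_{jy}+S_{\b v j}G_{iy}\bar G_{jy}-\sum_{k,l}S_{\b v k}S_{kl}\bigl(G_{lj}G_{il}|G_{ky}|^2+\bar G_{ki}\bar G_{jk}|G_{ly}|^2\bigr)\,.
\]
The dangerous terms are $-\cal Q_{jy}G_{iy}$ and $-\cal Q_{yi}\bar G_{jy}$: these are \emph{not} plain resolvent entries but fresh $\cal Q$-type objects at shifted arguments, and estimating them naively by Ward gives only the a priori size of $\cal Q$, yielding a circular bound. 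The paper resolves this by expanding $\cal Q_{j_ny}$ once more with the cumulant formula and observing a \emph{second} cancellation: the resulting counter-terms from this new expansion exactly cancel the last two (explicit double-sum) terms above. What survives is a recursive structure indexed by the number $n$ of such expansions, each step producing a factor from the set $\cal V_{n,n+1}=\{G_{j_{n+1}j_n}G_{i_ny}\bar G_{i_{n+1}y},\,\bar G_{i_{n+1}i_n}\bar G_{j_ny}G_{j_{n+1}y}\}$, and the iteration terminates only after $2p-1$ steps. Estimating the residual terms at each level requires the \emph{chain and loop bounds}
\[
\sum_{i_1,\dots,i_n}\sigma^{(1)}_{u_1i_1}\cdots\sigma^{(n)}_{u_ni_n}G_{ai_1}G_{i_1i_2}\cdots G_{i_nb}\;\prec\;\Psi^{2n+1}+\delta_{ab}\Psi^{2n}
\]
(and a companion for closed loops), which are themselves proved by a separate inductive cumulant argument using the invertibility of $I-\fra m^2 S$ in $\ell^\infty\to\ell^\infty$. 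This chain/loop machinery is the technical core you are missing; it is also what makes the $\Psi^3/\sqrt\eta$ term appear already in the \emph{Gaussian} case (it does not come primarily from higher cumulants as you suggest --- in the paper the non-Gaussian corrections are shown in a separate section to be subleading). Similarly, the fluctuation-averaging bound $\sum_j S_{ij}(G_{jj}-\fra m)\prec\Psi^2$ is not obtained by ``a single application'' of the cumulant formula: it too requires the self-consistent inversion of $(z+\fra m)I+\fra m S$ and the chain bound for $n=1$.
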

\begin{proof}
See Section \ref{sec:proof_main_estimates}.
\end{proof}

To prove the main results we will need also a couple of auxiliary lemmata which show how to combine apriori bounds on $\Lambda$ and $T$ to get a better estimate for $\Lambda$. Note that the first one is basically Lemma 5.3 in \cite{EKYY13} and in the Appendix we give a new proof which does not rely on the averaging fluctuation estimates of \cite{EKY13}.
\begin{lemma}\label{lemma:improv_bound}
Suppose that $\Lambda \prec \Psi$ for some admissible parameter $\Psi$ and $T_{ab}, T'_{ab} \prec \Omega^2_{ab}$ for a family of admissible control parameters $\Omega_{ab}$ indexed by a pair $(a,b)$. Then 
\begin{align}\label{eq:app1}
|G_{ab}- \fra m \, \delta_{ab}|^2 \prec \Omega_{ab}^2 + \Psi^4.
\end{align}
\end{lemma}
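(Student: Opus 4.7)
My plan is to establish the bound by a high-moment estimate: setting $X_{ab}\deq G_{ab}-\fra m\delta_{ab}$, I will show $\E|X_{ab}|^{2p}\prec(\Omega_{ab}^2+\Psi^4)^p$ for every fixed even $p\in\N$. Together with Markov's inequality and the definition of $\prec$, this yields the pointwise bound \eqref{eq:app1}.

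The starting point will be the resolvent identity $HG=I+zG$, which, after using the relation $1+z\fra m=-\fra m^2$ that follows from \eqref{eq:m_equation}, can be rewritten as
\begin{equation*}
zX_{ab} \;=\; -\fra m^2\delta_{ab} + \sum_j H_{aj} G_{jb}.
\end{equation*}
Multiplying by $\overline{X_{ab}}^{\,p} X_{ab}^{\,p-1}$ and taking expectation expresses $z\E|X_{ab}|^{2p}$ as $\sum_j\E[H_{aj}G_{jb}\overline{X_{ab}}^{\,p}X_{ab}^{\,p-1}]$ plus a boundary piece proportional to $\delta_{ab}$. I will then apply the cumulant expansion of Lemma~\ref{lem:cumulant_expansion} to each $\E[H_{aj}\,\cdot\,]$, using the cumulant bounds of Lemma~\ref{lem:cumulant_factos_estimate} together with the differentiation rule $\partial_{H_{kl}}G_{ij}=-G_{ik}G_{lj}$ from \eqref{eq:diff_G}.

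The second-cumulant (Gaussian) contribution will be the main term. Depending on whether the derivative falls on $G_{jb}$, on $X_{ab}$, or on $\overline{X_{ab}}$, the resulting product of resolvent entries, once summed over $j$ against $S_{aj}$, will precisely assemble into $T_{ab}=\sum_j S_{aj}|G_{jb}|^2$ or $T'_{ab}=\sum_j|G_{aj}|^2 S_{jb}$ times a lower moment of $X_{ab}$. Since both $T_{ab}$ and $T'_{ab}$ are apriori controlled by $\Omega_{ab}^2$, these pieces are of size $\Omega_{ab}^2\cdot\E|X_{ab}|^{2p-2}$ (after a Cauchy--Schwarz or Young split to handle the half-powers). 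Higher-cumulant contributions carry explicit factors $M^{-k/2}$ with $k\geq 3$; combined with $2k-2$ resolvent entries each bounded by $\Psi$ via $\Lambda\prec\Psi$, each such term is controlled by $\Psi^4\cdot\E|X_{ab}|^{2p-2}$. The remainder $R_{\ell+1}$ in Lemma~\ref{lem:cumulant_expansion} will be negligible for $\ell$ sufficiently large thanks to \eqref{finite moments}.

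The hard part will be the algebraic bookkeeping needed to identify cancellations among the Gaussian terms. In particular, the $G_{aa}$-pieces arising when the derivative hits $G_{jb}$ produce $O(1)$-sized contributions that must cancel against the inhomogeneous source $-\fra m^2\delta_{ab}$ and against the $zX_{ab}$ on the left-hand side, through the same identity $\fra m+\fra m^{-1}+z=0$ that underlies the self-consistent equation \eqref{self-const intro} for $T$. Once this cancellation is carried out carefully, what survives is an estimate of the form
\begin{equation*}
\E|X_{ab}|^{2p} \;\leq\; C_p\,(\Omega_{ab}^2+\Psi^4)\,\E|X_{ab}|^{2p-2} + \text{(Young-absorbable lower-order terms)},
\end{equation*}
from which the target moment bound $\E|X_{ab}|^{2p}\prec(\Omega_{ab}^2+\Psi^4)^p$ follows by induction on $p$, concluding the proof.
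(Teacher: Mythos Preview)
Your overall strategy---high-moment bounds on $X_{ab}=G_{ab}-\fra m\delta_{ab}$ via cumulant expansion applied to $zX_{ab}$, exploiting the identity $1+z\fra m=-\fra m^2$---is exactly the paper's approach. However, there is a genuine gap in your treatment of what you call the ``$G_{aa}$-pieces''. When $\partial_{ja}$ hits $G_{jb}$ you obtain $-G_{jj}G_{ab}$; after writing $G_{jj}=\fra m+(G_{jj}-\fra m)$, the $\fra m$-part indeed cancels algebraically against the inhomogeneous term and produces the shift $z\mapsto z+\fra m=-\fra m^{-1}$ on the left, just as you describe. But the residual
\[
\sum_j S_{aj}\,(G_{jj}-\fra m)
\]
does \emph{not} cancel. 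Naively it is only $\prec\Psi$, and in the diagonal case $a=b$ it multiplies $\E|X_{aa}|^{2p-1}$ with an $O(1)$ coefficient (coming from the $\fra m\,\delta_{ab}$ piece of $G_{ab}$). That would only reproduce the a priori bound $|X_{aa}|\prec\Psi$, not the claimed $\Psi^2$; equivalently, your recursion would read $\E|X_{aa}|^{2p}\leq C_p\,\Psi^2\,\E|X_{aa}|^{2p-2}+\dots$ rather than $C_p\,\Psi^4\,\E|X_{aa}|^{2p-2}+\dots$. The paper closes this gap by invoking the separate fluctuation-averaging estimate \eqref{eq:X_bound} from Lemma~\ref{lemma:chain_loop_X_estimates}(iii), namely $\sum_j S_{aj}(G_{jj}-\fra m)\prec\Psi^2$, which is itself proved by a second (nested) cumulant expansion. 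This input is what generates the $\Psi^4$ in \eqref{eq:app1}; it does not follow from the algebraic identity $\fra m+\fra m^{-1}+z=0$ alone.

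Two minor remarks. First, the $T_{ab}$ contribution arises linearly from $\sum_j S_{aj}|G_{jb}|^2$ when the derivative hits $\ol X_{ab}$, so no half-powers or Cauchy--Schwarz are needed there. Second, for $a\ne b$ the remaining second-cumulant piece $\sum_j S_{aj}G_{jb}G_{aj}$ is not a $T$ or $T'$; the paper handles it via the chain estimate \eqref{eq:bound_Y}, which gives a $\Psi^3$ bound and hence a $\Psi^4$ contribution after multiplying by $|G_{ab}|\prec\Psi$.
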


\begin{proof}
See Appendix.
\end{proof}

The second lemma implements the idea of self improving bounds: we start with a rough bound and we improve it by a recursive procedure. 
\begin{lemma}\label{lemma:self-improving_bound}
Suppose that 
\begin{align*}
& \Lambda \prec \Psi, \ \ \ \Omega \leq \Psi \leq \tilde \Psi; \ \ \ \ \ \ T_{ij}, T_{ij}' \prec \Omega^2 + \sum_{k = 1}^K a_k \Psi^k
\end{align*}
where $ \Psi$, $\tilde \Psi$, $\Omega$ are admissible control parameters and $ K$ is some fixed integer. Assume also that $ a_k \geq 0$ for $ k = 1, \ldots, K$ and
\begin{align}\label{eq:a_coeff_condition}
a_1 \Omega^{-1} \ll 1; \ \ \ \ \ a_{\ell} \Psi^{\ell-2} \ll 1, \ \ 2 \leq \ell \leq K.
\end{align}
Then $ \Lambda \prec \Omega$.
\end{lemma}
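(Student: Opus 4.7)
The plan is to use Lemma \ref{lemma:improv_bound} as a contraction step in a bootstrap argument. The hypothesis $T_{ij}, T'_{ij} \prec \Omega^2 + \sum_k a_k \Psi^k$ should be read as being available for any admissible control parameter $\Psi \in [\Omega, \tilde\Psi]$ that currently bounds $\Lambda$: in practice this bound on $T$ comes from the self-consistent equation together with Proposition \ref{prop:main_estimates}, whose error terms are polynomial in the current bound on $\Lambda$, so improving $\Lambda$ automatically improves the bound on $T$ with the same coefficients $a_k$.

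The single contraction step goes as follows. Given $\Lambda \prec \Psi$, Lemma \ref{lemma:improv_bound} applied with $\Omega_{ab}^2 = \Omega^2 + \sum_{k=1}^K a_k \Psi^k$ (which is admissible up to a harmless constant factor in $\gamma$) gives
\begin{equation*}
\Lambda^2 \;\prec\; \Omega^2 + \sum_{k=1}^K a_k \Psi^k + \Psi^4.
\end{equation*}
I would then estimate the right-hand side termwise. The hypothesis $a_\ell \Psi^{\ell-2} \ll 1$ for $\ell \geq 2$ together with $\Psi^4 \leq M^{-\gamma} \Psi^2 \ll \Psi^2$ (which uses admissibility of $\Psi$) yields $\sum_{\ell \geq 2} a_\ell \Psi^\ell + \Psi^4 \leq N^{-\tau_1} \Psi^2$ for some $\tau_1 > 0$. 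The remaining linear term is controlled via $a_1 \Omega^{-1} \ll 1$, which gives $a_1 \Psi \leq (a_1 \Omega^{-1}) \Omega \Psi \leq N^{-\tau_2} \Omega \Psi$. Setting $\tau = \min(\tau_1, \tau_2) > 0$, taking square roots, and using $\sqrt{a+b+c} \leq \sqrt{a}+\sqrt{b}+\sqrt{c}$ together with $\Omega \leq \Psi$, I obtain
\begin{equation*}
\Lambda \;\prec\; \Omega + N^{-\tau/2} \sqrt{\Omega \Psi} + N^{-\tau/2} \Psi \;\leq\; \Omega + 2 N^{-\tau/2} \Psi.
\end{equation*}

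Iterating this with $\Psi_0 := \Psi$ and $\Psi_{n+1} := \Omega + 2 N^{-\tau/2} \Psi_n$ as the new bound on $\Lambda$ produces a geometrically decreasing sequence whose limit behaves like $\Omega$. Concretely, $\Psi_n \leq \Omega \sum_{k=0}^{n-1}(2 N^{-\tau/2})^k + (2 N^{-\tau/2})^n \Psi_0$, so after $n_0$ steps with $(2 N^{-\tau/2})^{n_0} \tilde\Psi \leq \Omega$ one has $\Psi_{n_0} \leq C \Omega$, hence $\Lambda \prec \Omega$. Since $\Omega$ and $\tilde\Psi$ are both admissible, $\log_N(\tilde\Psi/\Omega)$ is bounded by a constant $C(\delta, \gamma)$, so $n_0 = O(C/\tau)$ is a fixed integer independent of $N$ and the accumulated $N^\epsilon$ slack inherent in $\prec$ stays harmless. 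The only genuine obstacle is the monotonicity issue flagged in the first paragraph: one must be sure that the hypothesis on $T$ is re-derivable with the improved bound $\Psi_n$ in place of $\Psi$, which is automatic in the envisaged application where Proposition \ref{prop:main_estimates} is the source of that hypothesis.
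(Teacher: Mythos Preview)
Your proposal is correct and follows the same bootstrap strategy as the paper: feed the $T$-bound into Lemma \ref{lemma:improv_bound}, use the conditions \eqref{eq:a_coeff_condition} to contract, and iterate finitely many times. The only organizational difference is that the paper runs a small inner loop first (iterating $\Lambda \prec \Xi + \Psi^2 \Rightarrow \Lambda \prec \Xi + \Psi^{2^k}$ with $\Xi$ held fixed) to absorb the $\Psi^4$ term before updating $\Psi$, whereas you fold $\Psi^4$ directly into the single contraction step via admissibility; both work, and your version is marginally more streamlined. Your explicit remark about needing the $T$-hypothesis to be re-derivable at each improved $\Psi_n$ is exactly right and is also implicitly used in the paper's outer iteration.
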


\begin{proof} Set $\Xi^2 = \Omega^2 + \sum_{k = 1}^K a_k \Psi^k$. From Lemma \ref{lemma:improv_bound}, we easily deduce the implication
\begin{align}\label{eq:app_iteration}
\Lambda \prec \Psi \ \ \ \Longrightarrow \ \ \ \sup_{a,b}|G_{ab} - \fra m \, \delta_{ab}| = \Lambda \prec \Xi + \Psi^2.
\end{align}
After $ k$ iterations of \eqref{eq:app_iteration} we get $ \Lambda \prec \Xi + \Psi^{2^k}$. Since $ \Xi$ and $ \Psi$ are both admissible control parameters, taking $ k \sim |\log \gamma|$ we get the implication
\begin{align}\label{eq:self-improving_bound}
\Lambda^2 \prec \Psi^2 \ \ \ \Longrightarrow \ \ \  \Lambda^2 \prec \Xi^2 =\Omega^2 + \sum_{k = 1}^K a_k \Psi^k.
\end{align}
We can iterate \eqref{eq:self-improving_bound} by defining the recursion relation
\begin{align*}
\Psi_{i+1}^2 := \Omega^2 + \sum_{k = 1}^K a_k \Psi_i^k, \ \ \ \Psi_0 = \tilde \Psi.
\end{align*}
Thus, \eqref{eq:self-improving_bound} implies that $ \Lambda^2 \prec \Psi^2_i$ for any fixed $ i$. The conditions \eqref{eq:a_coeff_condition} and the fact that $ \Omega$ is admissible imply that there is a finite integer $ i$ (depending on the implicit constants involved in the relation ``$ \ll$''), such that $ \Psi^2_i \prec \Omega^2$.
\end{proof}

We are now ready to prove our main results. 

\subsection{Proof of Theorem \ref{th:local_law_d=1}}

From \eqref{eq:sup_E_vy} and propositions \ref{prop:fourier_analysis} and \ref{prop:main_estimates} we get 
\begin{align}\nonumber
\sup_{x,y}|\tilde {\cal E}_{xy}| & \prec \frac{L^{3/2}}{W^2} \sup_{y}\sup_{\b v \in \bb V}(|\cal P_{\b vy}| + |\cal R_{\b vy}|) + \sup_{x,y}|\cal P_{xy}| + \sup_{x,y}|\cal R_{xy}|) \\
 & \prec    \, \frac{L^2}{W^2} \bigg(\frac{\Psi^2}{L \eta} + \frac{\Psi^3}{\sqrt{L \eta}} + \frac{\Psi}{L \sqrt{\eta}} \bigg) + \frac{ \Psi^2}{ W\eta} + \frac{\Psi^3} {W} \sqrt{\frac{L}{\eta}} + \frac{ \Psi}{ W \sqrt{\eta}}\,. \label{eq:error}
\end{align}
By \eqref{eq:T_solved2}, \eqref{prof1} and \eqref{eq:error} we have
\begin{align*}
T_{xy} \prec \Phi^2 + \frac{L^2}{W^2} \bigg(\frac{\Psi^2}{L \eta} + \frac{\Psi^3}{\sqrt{L \eta}} + \frac{\Psi}{L \sqrt{\eta}} \bigg) + \frac{ \Psi^2}{ W\eta} + \frac{\Psi^3} {W} \sqrt{\frac{L}{\eta}} + \frac{ \Psi}{ W \sqrt{\eta}} \,.
\end{align*}
To finish the proof we apply Lemma \ref{lemma:self-improving_bound} with $ \Omega = \Phi$ and, thanks to Lemma \ref{lm:lsc}, $ \tilde \Psi = (W \eta)^{-1/2}$. In this setting we have $\Psi^{-2} \leq L \eta + W \sqrt \eta $. The conditions \eqref{cond on N eta} in Theorem \ref{th:local_law_d=1} arise from the assumption \eqref{eq:a_coeff_condition} in Lemma \ref{lemma:self-improving_bound}.

%
%


\subsection{Proof of Theorem \ref{th:diffusion_d=1}}
We are now in the range $ (W/L)^2 \leq \eta \leq 1$, therefore \eqref{eq:T_solved2}, Proposition \ref{prop:fourier_analysis} and \ref{prop:main_estimates} imply that for some admissible control parameter $ \Psi$ such that $ \Lambda \prec \Psi$
\begin{align}\label{eq:diff1}
T_{xy} - \Theta_{xy} \prec \frac{L}{W \sqrt \eta} \bigg(\frac{\Psi^2}{L \eta} + \frac{\Psi^3}{\sqrt{L \eta}} + \frac{\Psi}{L \sqrt{\eta}} \bigg) + \frac{ \Psi^2}{ W\eta} + \frac{\Psi^3} {W} \sqrt{\frac{L}{\eta}} + \frac{ \Psi}{ W \sqrt{\eta}}.
\end{align}
From Theorem \ref{th:local_law_d=1} we see that, in the range $ (W/L)^2 \leq \eta \leq 1$, we have $ \Psi = W^{-1/2} \eta^{-1/4}$ when
$$
\frac{W^2}{L^2} \gg \frac{L^{3/2}}{W^{5/2}} \ \ \ \Longleftrightarrow \ \ \ L \ll W^{9/7}.
$$
It is now easy to check from \eqref{eq:diff1} with $ \Psi = W^{-1/2} \eta^{-1/4}$, $ (W/L)^2 \leq \eta \leq 1$ and $ L \ll W^{9/7}$ that \eqref{Tprec} holds true. 
Note that \eqref{Tfin} follows by using \eqref{Tdetbound} in \eqref{Tprec}. Finally, using Lemma \ref{lemma:improv_bound} with $\Omega_{ij}^2=\Upsilon_{ij}$ and
 $ \Psi = W^{-1/2}\eta^{-1/4}$,
 we obtain
\begin{equation}\label{stro2}
\absb{G_{ij}-\delta_{ij}\fra m}^2 \;\prec\; \Upsilon_{ij} +\Psi^4
 \;\prec\;  \Upsilon_{ij} .
\end{equation}
Here we used that $\Psi^4$ can be absorbed into $ (L\eta)^{-1}\le \Upsilon_{ij}$. This proves \eqref{Tfin1}, and hence concludes the proof of Theorem \ref{th:diffusion_d=1}.

\subsection{Proof of Corollary \ref{cor:deloc_d=1}}
From \eqref{large eta estimate} we see that we need $ \Lambda^2 \prec (L\eta)^{-1}$.  From Theorem \ref{th:local_law_d=1} we know that this is true when $\eta \leq (W/L)^2$ and $\eta \gg L^{3/2} W^{-5/2}$. Therefore, we have to require that
$$
\frac{L^{3/2}}{W^{5/2}} \ll \frac{W^2}{L^2}
$$
which is true when $L \ll W^{9/7}$.

\section{Proof of Proposition \ref{prop:main_estimates}}\label{sec:proof_main_estimates}
In order to avoid useless technical complications we assume that $ H$ is Gaussian and Hermitian and that 
\begin{equation} \label{CH}
\E \zeta_{ij}^2 = 0 \quad \text{for all} \quad i < j\,.
\end{equation}
in addition  to \eqref{zetacond}.
For example, \eqref{CH} is true when the real and imaginary parts of $\zeta_{ij}$
are independent with identical variance. However, our results hold also without these assumptions and in Section \ref{sec:nonGaussian_generalComplex} we sketch how to achieve this generalization.

Let us define a family of matrices $\bb S$ such that
\begin{align} \label{6.1}
\bb S = \left\lbrace \sigma \in \C^{L \times L} : \sup_{x,y} |\sigma_{xy}| \prec W^{-1}, \ \ \sum_{x}|\sigma_{xy}| \prec 1, \ \ \sum_{y}|\sigma_{xy}| \prec 1 \right\rbrace.
\end{align}
Note that $\bb S$ is closed under matrix addition and multiplication.
In particular, we immediately see that $S \in \bb S$. The following lemma, proven in Section \ref{lemma:chain_loop_X_estimates}, collects all the necessary estimates needed to prove Proposition \ref{prop:main_estimates}. 

\begin{lemma} \label{lemma:chain_loop_X_estimates}
	Let be $ \Psi$ an admissible control parameter defined as in \eqref{admissible Psi} and such that $ \Lambda \prec \Psi$.
	\begin{itemize} 
		\item[(i)] Let $ \sigma^{(1)}, \sigma^{(2)}, \ldots, \sigma^{(n)} \in \bb S$, then for $n \geq 1$ we have
		\begin{equation}\label{eq:chain_bound}
		Y_{ab;u_1\cdots u_n}^{(n)} := \sum_{i_1,...,i_n} \sigma^{(1)}_{u_1 i_1}\cdots \sigma^{(n)}_{u_n i_n} G_{ai_1}G_{i_1i_2}\cdots G_{i_{n-1}i_n}G_{i_nb} \prec \Psi^{2n+1}+\delta_{ab}\Psi^{2n}.
		\end{equation}
		\item[(ii)] Let $ \sigma^{(1)}, \sigma^{(2)}, \ldots, \sigma^{(n)} \in \bb S$, then for $n \ge 2$ and for some $\xi \in \bb S$ with nonnegative entries we have
		\begin{equation}\label{eq:loop_bound}
		Z^{(n)}_{ab;u_3 \cdots u_n} := \sum_{i_1,...,i_n} \sigma^{(1)}_{ai_1}\sigma^{(2)}_{bi_2}\sigma^{(3)}_{u_3 i_3}\cdots \sigma^{(n)}_{u_n i_n} G_{i_1i_2}\cdots G_{i_{n-1}i_n}G_{i_ni_1} \prec \Psi^{2n}+ \xi_{ab} \Psi^{2(n-2)}.
		\end{equation}
		\item[(iii)] Let $ \sigma \in \bb S$, then we have
		\begin{align}\label{eq:X_bound}
		X_i := \sum_j \sigma_{ij} (G_{jj} - \mathfrak m) \prec \Psi^2, \ \ \ X_{ij} := \sum_k \sigma_{ik} G_{kj} \prec \Psi^2.
		\end{align}
	\end{itemize}
\end{lemma}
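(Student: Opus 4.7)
The approach is to bound all three quantities by the high-moment method: for every fixed $p \in \N$ we estimate $\E|\cdot|^{2p}$ and convert it to a $\prec$-bound via Chebyshev's inequality and Lemma \ref{lemma:basic_properties_of_prec}, using the uniform moments from \eqref{finite moments}. The engine is the cumulant expansion, Lemma \ref{lem:cumulant_expansion}, applied to $H_{jk}$-factors that we insert into our objects via the resolvent identity $\sum_c H_{ac} G_{cb} = \delta_{ab} + z G_{ab}$. Two features of $\bb S$ are used repeatedly: each $\sigma \in \bb S$ has $\ell^1$-norm $O_\prec(1)$ along any row or column (so summing one index is free), its entries are $O_\prec(W^{-1})$, and $\bb S$ is closed under the matrix products that cumulant derivatives of $G$ produce, so no quantitative control is lost when one step of the induction feeds the next.

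I begin with (iii). For $X_i = \sum_j\sigma_{ij}(G_{jj}-\fra m)$ the identity $(HG)_{jj} = 1+zG_{jj}$ combined with \eqref{eq:m_equation} yields
\[
(1+\fra m^2)(G_{jj}-\fra m) \;=\; -\fra m\sum_k H_{jk}G_{kj} - \fra m^3,
\]
so one $(G_{jj}-\fra m)$-factor in $\E X_i^p\,\overline{X_i}^p$ can be replaced by an expression containing an explicit $H_{jk}$. Applying the cumulant expansion to this $H_{jk}$, the second-order cumulant $\cal C^{(1,1)}(H_{jk}) = S_{jk}$ combined with $\partial_{kj}G_{kj} = -G_{kk}G_{jj}$ and $\sum_k S_{jk} = 1$ cancels the deterministic $-\fra m^3$ at leading order, while the residual terms from all derivatives feature two extra $G$-factors weighted by a matrix in $\bb S$. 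Each such extra weight, together with the a priori bound $\Lambda\prec\Psi$, buys a factor of $\Psi^2$; higher cumulants are subleading by Lemma \ref{lem:cumulant_factos_estimate}. Iterating yields $\E|X_i|^{2p}\prec\Psi^{4p}$ and hence $X_i\prec\Psi^2$. The bound for $X_{ij}$ is analogous, starting from $\sum_k H_{ik}G_{kj} = \delta_{ij}+zG_{ij}$.

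For (i) and (ii) I proceed by simultaneous induction on $n$ with (iii) as base case. For $Y^{(n)}$, apply $-zG_{ai_1} = \delta_{ai_1}-\sum_c G_{ac}H_{ci_1}$ to the leftmost $G$-factor: the $\delta_{ai_1}$-piece collapses one summation and, after identifying $a$ with $i_1$, yields by induction the $\delta_{ab}\Psi^{2n}$ branch of the bound. The $H_{ci_1}$-piece, placed inside a high-moment expectation, is expanded by cumulants; each derivative $\partial_{i_1c}$ acts on some $G$ of the chain (or of its conjugate copies), and by Leibniz with $\partial_{cd}G_{ab} = -G_{ac}G_{db}$ produces either two shorter chains or a chain plus a loop, each carrying an additional $S$-weight, and each covered by the inductive hypothesis. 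For $Z^{(n)}$, applying the resolvent identity to a cycle $G$-factor analogously cuts the cycle under each cumulant derivative, producing two shorter cycles or a cycle plus a chain; the topology in which the derivative pinches $\sigma^{(1)}$ and $\sigma^{(2)}$ together through a short $S$-dressed path generates the $\xi_{ab}\Psi^{2(n-2)}$ branch, where $\xi$ is a finite product of $S$ and the $\sigma^{(k)}$'s, hence lies in $\bb S$.

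The main obstacle is the combinatorial bookkeeping: each cumulant derivative can split an object into several pieces, and we must check that every piece matches one of the templates $Y^{(n')},\,Z^{(n')},\,X$ with the right power of $\Psi$ and the right $\delta_{ab}/\xi_{ab}$ decoration, so the induction closes. This is manageable because (a) we truncate the cumulant expansion at a finite order $\ell$, so only finitely many topologies arise, while the remainder $R_{\ell+1}$ of \eqref{eq:R_cumulant_expansion} is negligible by \eqref{finite moments}; (b) the Gaussian hypothesis \eqref{CH} reduces the active pair-cumulant to $\cal C^{(1,1)}(H_{jk}) = S_{jk}$, drastically simplifying the enumeration, with non-Gaussian corrections absorbed in Section \ref{sec:nonGaussian_generalComplex} via the $M^{-(p+q)/2}$-smallness of higher cumulants from Lemma \ref{lem:cumulant_factos_estimate}; and (c) closure of $\bb S$ under the relevant compositions guarantees that the matrices dressing the recursively generated chains and loops all satisfy the same structural bounds assumed in the statement. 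This enumeration by induction is precisely what replaces the Schur-complement--based fluctuation averaging of \cite{EKY13, EKYY13}.
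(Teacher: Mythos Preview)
Your proposal has the right high-level architecture (resolvent identity plus cumulant expansion inside high moments, then induction on $n$), but it misses the one step that makes the recursion close, and this gap is fatal in all three parts. Take your argument for (iii). After expanding $H_{jk}$, the term coming from $\partial_{kj}G_{kj}=-G_{kk}G_{jj}$ produces \emph{diagonal} entries, so it does \emph{not} ``buy a factor of $\Psi^2$''. Writing $G_{kk}=\fra m+(G_{kk}-\fra m)$ and using $\sum_k S_{jk}=1$, the $\fra m^3$ cancellation you describe does occur, but what survives inside the $2p$-th moment is
\[
\E\, X_i\,\fra p \;=\; \fra m^2\,\E\, X^{(\sigma S)}_i\,\fra p \;+\; O_\prec(\Psi^2)\,\E|\fra p| \;+\; (\text{derivative of }\fra p)\,,
\]
where $X^{(\sigma S)}_i=\sum_k(\sigma S)_{ik}(G_{kk}-\fra m)$ is the \emph{same} object with the new weight $\sigma S\in\bb S$. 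Since $|\fra m|^2=1-O(\eta)$, iterating this relation gains nothing per step: the self-reproducing term has the same size as the original. The identical phenomenon occurs in your induction for $Y^{(n)}$ and $Z^{(n)}$: the cumulant derivative acting on the very $G$-factor that carries the expanded index regenerates a chain/loop of the \emph{same} length with a modified $\bb S$-weight, not a shorter one, so the induction on $n$ does not close by itself.

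The paper's resolution is Lemma~\ref{lemma_self-consistent_equation_trick}. After one expansion the self-reproducing term always carries the \emph{specific} weight $S$, not a generic $\sigma\in\bb S$; expanding once more therefore yields a genuine self-consistent equation for the $S$-weighted version, $((z+\fra m)I+\fra m S)\,\tilde D=(\text{good terms})$. The operator $\cal L=((z+\fra m)I+\fra m S)^{-1}=-\fra m(I-\fra m^2 S)^{-1}$ satisfies $\|\cal L\|_{\infty\to\infty}\le C\log L\prec 1$ in the bulk (this is \eqref{eq:norm_bound_to_1}), and inverting it is precisely what kills the self-reproducing term and yields the transformed weight $\breve\sigma=\fra m(\sigma-\fra m\,\sigma\cal L S)\in\bb S$ used throughout Section~\ref{lemma:chain_loop_X}. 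Without this inversion neither the base case $Y^{(1)}_{ab;u}\prec\Psi^3+\delta_{ab}\Psi^2$ nor $X_i\prec\Psi^2$ can be established, and your induction never starts. (A secondary issue: the paper's logical order is $Y^{(1)}\Rightarrow X$, not $X\Rightarrow Y^{(1)}$ as you propose, because the ``derivative of $\fra p$'' terms in the $X$-expansion are themselves of $Y^{(1)}$-type.)
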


\begin{remark}\label{rmk:chain_loop}
	In what follows we will refer to $Y_{ab;u_1 \ldots u_n}^{(n)}$ and $Z_{ab;u_3 \ldots u_n}^{(n)}$ respectively as \emph{open chain} (or simply \emph{chain}) and \emph{loop} of order $n$. This terminology emphasizes that in $Y^{(n)}$ the extreme indices $a$ and $b$ of the product $G_{ai_1}G_{i_1i_2}\cdots G_{i_{n-1}i_n}G_{i_nb}$ are not summed over, while in $Z^{(n)}$ the extreme indices are identical and they are summed over. The order $n$ refers to the fact that both $Y^{(n)}$ and $Z^{(n)}$ involve $n$ summations. 
\end{remark}
The following lemma translates the control of arbitrary moment of a random variable into stochastic domination bounds. 
\begin{lemma}\label{lemma:moment_to_domination}
Let $ \phi $ be a random variable such that $ 0 \leq \phi \leq L^C$ for some $ C > 0$ and let $\varphi \in \R^+$ be deterministic such that $\varphi \in [L^{-C}, L^C]$. Suppose that there exists $q \in [0,1)$ such that for any deterministic $\vartheta \in [\varphi, L^C]$ and any $p \in \N$ one has the implication
\begin{align}\label{eq:moment_estimate}
\phi \prec \vartheta \quad \Longrightarrow \quad \E|\phi|^{2p} = \sum_{k=1}^{2p}O_{\prec}((\vartheta^q \varphi^{1-q})^k) \E|\phi|^{2p-k},
\end{align} 
then $\phi \prec \varphi$. 
\end{lemma}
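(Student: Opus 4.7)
The plan is a standard bootstrapping argument. The hypothesis \eqref{eq:moment_estimate} defines an effective contraction $\vartheta \mapsto \vartheta^q \varphi^{1-q}$ on deterministic bounds for $\phi$, whose unique fixed point is $\varphi$. Starting from the trivial bound $\phi \prec L^C$ (which follows from the a priori estimate $0 \le \phi \le L^C$), I will iterate this contraction and show that the resulting sequence of bounds converges to $\varphi$ after finitely many steps.

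The core extraction step will be: if $\phi \prec \vartheta$ for some deterministic $\vartheta \in [\varphi, L^C]$, then $\phi \prec \vartheta^q \varphi^{1-q}$. To prove it, set $\psi := \vartheta^q \varphi^{1-q}$. Since all terms in \eqref{eq:moment_estimate} are deterministic, the hypothesis reads: for every $p \in \N$ and $\epsilon > 0$,
\begin{equation*}
\E|\phi|^{2p} \;\leq\; L^\epsilon \sum_{k=1}^{2p} \psi^k \, \E|\phi|^{2p-k}\,.
\end{equation*}
Applying Jensen's inequality in the form $\E|\phi|^{2p-k} \leq (\E|\phi|^{2p})^{1-k/(2p)}$ and setting $y := (\E|\phi|^{2p})^{1/(2p)}$, this becomes $y^{2p} \leq L^\epsilon \sum_{k=1}^{2p} \psi^k y^{2p-k}$, and dividing by $y^{2p}$ (the case $y=0$ being trivial) yields $1 \leq L^\epsilon \sum_{k=1}^{2p}(\psi/y)^k$. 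A trivial dichotomy --- if $\psi/y \geq 1/2$ then $y \leq 2\psi$ directly, while otherwise the geometric sum is bounded by $2\psi/y$ and so $y \leq 2 L^\epsilon \psi$ --- gives $y \leq 2 L^\epsilon \psi$ in either case. Since $p$ and $\epsilon$ are arbitrary, Markov's inequality then yields $\phi \prec \psi$.

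To finish, I set $\vartheta_0 := L^C$ and $\vartheta_{n+1} := \vartheta_n^q \varphi^{1-q}$. A one-line induction yields $\vartheta_n = L^{C q^n} \varphi^{1 - q^n}$, which lies in $[\varphi, L^C]$ by the assumption $\varphi \in [L^{-C}, L^C]$, so the extraction step applies at every stage. Moreover $\vartheta_n / \varphi \leq L^{2C q^n} \to 1$ as $n \to \infty$ since $q < 1$; hence for any prescribed $\epsilon > 0$, finitely many iterations give $\vartheta_n \leq L^\epsilon \varphi$ and therefore $\phi \prec L^\epsilon \varphi$. As $\epsilon$ was arbitrary, this is exactly $\phi \prec \varphi$.

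The only mildly delicate moment is the Jensen/dichotomy step in the moment extraction, where a bound on $\E|\phi|^{2p}$ in terms of lower moments must be converted into a bound on $\E|\phi|^{2p}$ alone; the elementary case split handles this cleanly. Everything else is bookkeeping plus the geometric convergence $q^n \to 0$, so I do not anticipate any substantive obstacle.
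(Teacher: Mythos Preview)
Your proof is correct and follows essentially the same route as the paper's: apply H\"older/Jensen to reduce the moment recursion to $\E|\phi|^{2p} = O_\prec(\psi^{2p})$ with $\psi = \vartheta^q\varphi^{1-q}$, use Markov to get $\phi \prec \psi$, and then iterate. The only difference is that the paper outsources the final iteration step to Lemma~2.6 of \cite{HKR17}, whereas you carry out the explicit recursion $\vartheta_n = L^{Cq^n}\varphi^{1-q^n}$ by hand; this makes your argument self-contained but is not a genuinely different approach.
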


\begin{proof}
Applying H\"older inequality to \eqref{eq:moment_estimate} one gets
\begin{align*}
\E|\phi|^{2p} \leq \sum_{k=1}^{2p}O_{\prec}((\vartheta^q \varphi^{1-q})^k)(\E|\phi|^{2p})^{\frac{2p-k}{2p}}
\end{align*}
which implies that 
\begin{align*}
\E|\phi|^{2p} = O_{\prec}((\vartheta^q \varphi^{1-q})^{2p}).
\end{align*}
Then from Markov inequality we deduce the implication
\begin{align*}
\phi \prec \vartheta \, \, \Longrightarrow \,\, \phi \prec \vartheta^p \varphi^{1-p}.
\end{align*}
By invoking Lemma 2.6 in \cite{HKR17}, we conclude the proof.
\end{proof}

We are now ready to prove Proposition \ref{prop:main_estimates}. We will first prove \eqref{eq:P_vy}.

{\bf Step 1.} We recall that by spectral decomposition of $G$ we  can easily get the so called Ward identity
\begin{align}\label{eq:ward_identiy}
\sum_{i} |G_{xi}|^2 = \frac{\mathrm{Im} G_{xx}}{\eta}.
\end{align}
%
Let us now define
\begin{equation} \label{eq:def_Q}
\cal Q_{xy}=\sum_{i,j} S_{\b vi}(H_{ij}G_{jx}\ol{G}_{iy}+S_{ij}G_{jj}G_{ix}\ol{G}_{iy}+S_{ij}\ol G_{ii}G_{jx}\ol{G}_{jy})\,,
\end{equation} 
and accordingly,
\begin{equation*}
\ol{ \cal Q}_{xy}=\sum_{i,j} \ol {S}_{\b vj}(H_{ij} G_{jy}\ol {G}_{ix}+S_{ji}G_{jj}\ol G_{ix}{G}_{iy}+S_{ji}\ol G_{ii}\ol G_{jx}{G}_{jy})\,,
\end{equation*} 
where we recall that $ S_{\b v i} = \sum_x v_x S_{xi}$ and $ \norm{\b v}_{\infty} = O(L^{-1/2})$.
From \eqref{eq:def_P} we see that $\cal P_{\b vy}=-\fra m\cal Q_{yy}$, and by \eqref{eq:diff_G} we have the derivatives
\begin{equation*}
\partial_{ji} \cal Q_{yy} =-\cal Q_{jy}G_{iy}-\cal Q_{yi}\ol G_{jy}+S_{\b vj}G_{iy}\ol G_{jy}-\sum_{k,l}S_{\b vk}S_{kl}G_{lj}G_{il}|G_{ky}|^2-\sum_{k,l}S_{\b v k}S_{kl}\ol G_{ki}\ol G_{jk}|G_{ly}|^2\,,
\end{equation*}
and
\begin{equation*}
\partial_{ji} \ol{\cal Q}_{yy} =-\ol{\cal Q}_{iy}\ol G_{jy}-\ol{\cal Q}_{yj} G_{iy}+\ol S_{\b vj}G_{iy}\ol G_{jy}-\sum_{k,l}\ol S_{\b vl}S_{lk} G_{lj} G_{il}|G_{ky}|^2-\sum_{k,l}\ol S_{\b vl}S_{lk} \ol G_{ki}\ol G_{jk}|G_{ly}|^2\,.
\end{equation*}
Now we fix $p \ge 2$, and by cumulant formula \eqref{eq:cum_exp_gaussian} we have
\begin{equation} \label{eq:Q^2p}
\begin{aligned}
\bb E |\cal Q_{yy}|^{2p}=&\,\bb E \sum_{i_1,j_1} S_{\b vi_1}(H_{i_1j_1}G_{j_1y}\ol G_{i_1y}+S_{i_1j_1}G_{j_1j_1}|G_{i_1y}|^2+S_{i_1j_1}\ol G_{i_1i_1}|G_{j_1y}|^2)\cdot \cal Q_{yy}^{p-1}\ol{\cal Q}_{yy}^p\\
=&\,\bb E \sum_{i_1,j_1} S_{\b vi_1}S_{i_1j_1}G_{j_1y}\ol G_{i_1y}[(p-1) (\partial_{j_1i_1} \cal Q_{yy}) \cal Q_{yy}^{p-2}\ol{\cal Q}_{yy}^p+p (\partial_{j_1i_1} \ol{\cal Q}_{yy}) |\cal Q_{yy}|^{2p-2}]\,.
\end{aligned}
\end{equation}
Now we would like to compute the second line of \eqref{eq:Q^2p} by recursively applying cumulant formula. To this end, we define for each $m \in \{2,3,...,2p-1\}$ the set
\begin{equation} \label{6.15}
\cal V_{m-1,m}=\{\,G_{j_{m}j_{m-1}}G_{i_{m-1}y}\ol G_{i_my},\,\,\ol G_{i_mi_{m-1}}\ol G_{j_{m-1}y}G_{j_my}\}\,.
\end{equation}
For $n \in \{1,2,...,2p-1\}$, let us consider
\begin{equation} \label{eq:Q_term}
\Big(\frac{1}{\sqrt{L}}\Big)^n\bb E \sum_{i_1,j_1,...,i_n,j_n} \sigma^{(1)}_{i_1j_1}\cdots \sigma^{(n)}_{i_nj_n}G_{j_1y}\ol G_{i_1y}V_{1,2}\cdots V_{n-1,n}(\partial_{j_ni_n} \cal Q_{yy})\cal Q_{yy}^{\alpha}\ol{\cal Q}_{yy}^{\beta}\,,
\end{equation}
where $V_{m-1,m} \in \cal V_{m-1,m}$ for $m=2,3,...,n$, $\sigma^{(1)},...,\sigma^{(n)}\in \bb S$, and $\alpha+\beta=2p-n-1$. Formula \eqref{eq:Q_term} is one of the terms produced by applying $n$ times the cumulant expansion to $ \bb E |\cal Q_{yy}|^{2p}$: for instance, if for $n=1$ we set $\sigma^{(1)}\deq \sqrt{L}S_{\b v i_1}S_{i_1j_1}$, \eqref{eq:Q_term} corresponds to the first term on the second line of \eqref{eq:Q^2p}. By the differential rule \eqref{eq:def_Q} we see that \eqref{eq:Q_term} becomes 
\begin{equation}\label{eq:Q_term2}
\begin{aligned} 
& \Big(\frac{1}{\sqrt{L}}\Big)^n\bb E \sum_{i_1,j_1,...,i_n,j_n}\sigma^{(1)}_{i_1j_1}\cdots \sigma^{(n)}_{i_nj_n}G_{j_1y}\ol G_{i_1y}V_{1,2}\cdots V_{n-1,n} \cdot \cal Q_{yy}^{\alpha}\ol{\cal Q}_{yy}^{\beta} \cdot \\
&  \bigg(-\cal Q_{j_ny}G_{i_ny}-\cal Q_{yi_n}\ol G_{j_ny}+S_{\b vj_n}G_{i_ny}\ol G_{j_ny}-\sum_{i,j}S_{\b vi}S_{ij}G_{jj_n}G_{i_nj}|G_{iy}|^2-\sum_{i,j}S_{\b vi}S_{ij}\ol G_{ii_n}\ol G_{j_ni}|G_{jy}|^2\bigg)  \\
&\eqd (A)+(B)+(C)+(D)+(E) \,.
\end{aligned}
\end{equation}
In the remaining proof we look at each term on the above carefully.

{\bf Step 2.} Let us first look at term (C), which is
\begin{equation} \label{eq:A_term}
\begin{aligned}
&\,\Big(\frac{1}{\sqrt{L}}\Big)^n\bb E \sum_{i_1,j_1,...,i_n,j_n}\sigma^{(1)}_{i_1j_1}\cdots \sigma^{(n)}_{i_nj_n}G_{j_1y}\ol G_{i_1y}V_{1,2}\cdots V_{n-1,n}S_{\b vj_n}G_{i_ny}\ol G_{j_ny}\cdot \cal Q_{yy}^{\alpha}\ol{\cal Q}_{yy}^{\beta}\\
=&\,\Big(\frac{1}{\sqrt{L}}\Big)^{n+1}\bb E \sum_{i_1,j_1,...,i_n,j_n}\sigma^{(1)}_{i_1j_1}\cdots \sigma^{(n-1)}_{i_{n-1}j_{n-1}} \sigma^{(n)}_{i_nj_n}G_{j_1y}\ol G_{i_1y}V_{1,2}\cdots V_{n-1,n}G_{i_ny}\ol G_{j_ny}\cdot \cal Q_{yy}^{\alpha}\ol{\cal Q}_{yy}^{\beta}\,,
\end{aligned}
\end{equation}
where in the second line we renamed $ \sqrt{L}S_{\b vj_n}\sigma^{(n)}_{i_nj_n} \in \bb S$ by $\sigma^{(n)}_{i_nj_n}$.
By our definition of $\cal V$ we see that for $m=1,2,...,n-2$, $ V_{m,m+1}$ contains either $\ol G_{i_{m+1} i_m}$ or $ G_{j_{m+1} j_m}$. W.L.O.G. assume $G_{j_2j_1}$ is a factor of $V_{1,2}$, and let $k_1\in \{1,2,3,...,n-2\}$ be the smallest integer such that $\ol G_{i_{k_1+2}i_{k_1+1}}$ is a factor of $V_{k_1+2,k_1+1}$. Thus 
\begin{equation} \label{614}
V_{k_1+1,  k_1}=G_{j_{k_1+1}j_{k_1}}G_{i_{k_1}y}\ol G_{i_{k_1+1}y}\,,
\end{equation} 
and (C) contains
$$
G_{j_{k_1+1}j_{k_1}}\cdots G_{j_2j_1}G_{j_1y}\,.
$$ 
This means that (C) actually contains a chain (in the sense of Remark \ref{rmk:chain_loop}) of order $k_1$. Now let $k_2 \in \{k_1+1,...,n-2\}$ be the smallest integer bigger than $k_1$ such that $G_{j_{k_2+2}j_{k_2+1}}$ is a factor of $V_{k_2+1,k_2+2}$, then
\begin{equation*}
V_{k_2+1,k_2}=\ol G_{i_{k_2+1}i_{k_2}}G_{j_{k_2}y}\ol G_{j_{k_2+1}y}\,,
\end{equation*} 
and \eqref{614} shows that (C) contains
$$
\ol G_{i_{k_2+1}i_{k_2}}\cdots \ol G_{i_{k_1+2}i_{k_1+1}}\ol G_{i_{k_1+1}y}\,.
$$ 
and consequently it contains another chain of order $k_2-k_1$. By continuing this process, we can find in (A) a product of finitely many different chains. Let $\cal T$ be the collection of all such $G$ and $\ol G$ that appear in all these chains, and for $m=1,2,...,n-1$, let $q_m \in \{i_m,j_m\}$ denote the index that appears in one of the chains. Let $\{p_m\}\deq \{i_m,j_m\}/\{q_m\}$ for $m \in \{1,2,...,n-1\}$.

By Lemma \ref{lemma:chain_loop_X_estimates} (i), we see that
\begin{equation*}
\sum_{j_1, \ldots, j_{k_1}} \sigma^{(1)}_{i_1j_1}\cdots  \sigma^{(k_1)}_{i_{k_1}j_{k_1}}G_{j_{k_1+1}j_{k_1}}\cdots G_{j_2j_1}G_{j_1y} \prec \Psi^{2k_1+1}+\Psi^{2k_1}\delta_{j_{k_1+1}y}\,.
\end{equation*}
Assume there are totally $l$ many chains. By applying the above estimate for other chains, we see that
\begin{equation} \label{617}
\sum_{q_1,...,q_{n-1}} \prod_{m=1}^{n-1}  \sigma^{(m)}_{i_mj_m} \prod_{t \in \cal T} t
\prec \big(\Psi^{2k_1+1}+\Psi^{2k_1}\delta_{p_{k_1+1}y})\big(\Psi^{2(k_2-k_1)+1}+\Psi^{2(k_2-k_1)}\delta_{p_{k_2+1}y}\big)\cdots \big(\Psi^{2(k_l-k_{l-1})+1}+\Psi^{2(k_l-k_{l-1})}\delta_{p_{n}y}\big)\,,
\end{equation} 
Note that the LHS of \eqref{617} is contained in (C). Together with \eqref{eq:A_term} we have
\begin{equation} \label{623}
\begin{aligned}
(C) \prec&\, \Big(\frac{1}{\sqrt{L}}\Big)^{n+1} \big(\Psi^{2k_1+1}+\Psi^{2k_1}\delta_{p_{k_1+1}y})\big(\Psi^{2(k_2-k_1)+1}+\Psi^{2(k_2-k_1)}\delta_{p_{k_2+1}y}\big)\cdots \big(\Psi^{2(k_l-k_{l-1})+1}+\Psi^{2(k_l-k_{l-1})}\delta_{p_{n}y}\big)\\
&\,\cdot \bb E \sum_{p_1,...,p_{n-1},i_n,j_n} \Big| \sigma^{(n)}_{i_nj_n} G_{j_1y}\ol G_{i_1y}V_{1,2}\cdots V_{n-1,n}G_{i_ny}\ol G_{j_ny} \Big/\prod_{t \in \cal T}t\Big|\,|\cal Q_{yy}|^{\alpha+\beta}\,.\\
\end{aligned}
\end{equation}
To estimate the above, we need to expand 
\begin{equation} \label{624} \big(\Psi^{2k_1+1}+\Psi^{2k_1}\delta_{p_{k_1+1}y})\big(\Psi^{2(k_2-k_1)+1}+\Psi^{2(k_2-k_1)}\delta_{p_{k_2+1}y}\big)\cdots \big(\Psi^{2(k_l-k_{l-1})+1}+\Psi^{2(k_l-k_{l-1})}\delta_{p_{n}y}\big)
\end{equation}
and consider each term in the result separately. Here we only give estimates of two terms, and other cases follow in a similar fashion.  

Suppose we take the term 
$$
\Psi^{2k_1+1}\Psi^{2(k_2-k_1)+1}\cdots\Psi^{2(k_l-k_{l-1})}
$$from \eqref{624}. Note that $k_l=n-1$, and $k_l+l=|\cal T|$. We have
\begin{equation*}
\begin{aligned}
&\, \Big(\frac{1}{\sqrt{L}}\Big)^{n+1} \Psi^{2k_1+1}\Psi^{2(k_2-k_1)+1}\cdots\Psi^{2(k_l-k_{l-1})}\cdot \bb E \sum_{p_1,...,p_{n-1},i_n,j_n} \Big| \sigma^{(n)}_{i_nj_n} G_{j_1y}\ol G_{i_1y}V_{1,2}\cdots V_{n-1,n}G_{i_ny}\ol G_{j_ny} \Big/\prod_{t \in \cal T}t\Big|\,|\cal Q_{yy}|^{\alpha+\beta}\\
=&\,\Big(\frac{1}{\sqrt{L}}\Big)^{n+1} \Psi^{|\cal T|+n-1}\cdot \bb E \sum_{p_1,...,p_{n-1},i_n,j_n} \Big| \sigma^{(n)}_{i_nj_n} G_{j_1y}\ol G_{i_1y}V_{1,2}\cdots V_{n-1,n}G_{i_ny}\ol G_{j_ny} \Big/\prod_{t \in \cal T}t\Big|\,|\cal Q_{yy}|^{\alpha+\beta}\\
\prec&\,\Big(\frac{1}{\sqrt{L}}\Big)^{n+1} \Psi^{|\cal T|+n-1}\Psi\cdot \bb E \sum_{p_1,...,p_{n-1},q_n} \Big|G_{j_1y}\ol G_{i_1y}V_{1,2}\cdots V_{n-1,n}G_{i_ny}\ol G_{j_ny} \Big/|G_{p_ny}|\Big(\prod_{t \in \cal T}t\Big)\Big|\,|\cal Q_{yy}|^{\alpha+\beta}\\
\prec&\,\Big(\frac{1}{\sqrt{L}}\Big)^{n+1} \Psi^{|\cal T|+n-1}\Psi \cdot L^n \Big(\frac{1}{\sqrt{L\eta}}\Big)^{n+1}\Psi^{3n+1-|\cal T|-1-(n+1)} \bb E |\cal Q_{yy}|^{\alpha+\beta}
=\frac{\Psi^2}{L\eta} \cdot \bigg(\frac{\Psi^3}{\sqrt{\eta}}\bigg)^{n-1}\cdot \bb E |\cal Q_{yy}|^{2p-n-1}\,,
\end{aligned}
\end{equation*}
where in the second step we used $\sum_{p_n}  \sigma^{(n)}_{i_nj_n} |G_{p_{n}y}| \prec \Psi$, and in the third step there are at least $n+1$ many $G,\ol G$ that we can use to apply Ward identity. 

Suppose we take from \eqref{624} the term
\begin{equation*}
\Psi^{2k_1}\delta_{p_{k_1+1}y}\Psi^{2(k_2-k_1)+1}\cdots\Psi^{2(k_l-k_{l-1})}\,,
\end{equation*}
then we have
\begin{equation} \label{620}
\begin{aligned}\Big(\frac{1}{\sqrt{L}}\Big)^{n+1} &\,\Psi^{2k_1}\delta_{p_{k_1+1}y}\Psi^{2(k_2-k_1)+1}\cdots\Psi^{2(k_l-k_{l-1})} \\
& \qquad \cdot \bb E \sum_{p_1,...,p_{n-1},i_n,j_n} \Big| \sigma^{(n)}_{i_nj_n} G_{j_1y}\ol G_{i_1y}V_{1,2}\cdots V_{n-1,n}G_{i_ny}\ol G_{j_ny} \Big/\prod_{t \in \cal T}t\Big|\,|\cal Q_{yy}|^{\alpha+\beta}\\
=&\, \Big(\frac{1}{\sqrt{L}}\Big)^{n+1} \Psi^{|\cal T|+n-2}\cdot \bb E \sum_{p_1,...,p_{n-1},i_n,j_n} \Big|\delta_{p_{k_1+1}y}\sigma^{(n)}_{i_nj_n} G_{j_1y}\ol G_{i_1y}V_{1,2}\cdots V_{n-1,n}G_{i_ny}\ol G_{j_ny} \Big/\prod_{t \in \cal T}t\Big|\,|\cal Q_{yy}|^{\alpha+\beta}\,.
\end{aligned}
\end{equation}
For $k_1+1=n$, we use $\sigma^{(n)}_{i_nj_n} \prec W^{-1}\prec \Psi^2$ and have
\begin{equation*}
\begin{aligned}
\eqref{620} \prec&\, \Big(\frac{1}{\sqrt{L}}\Big)^{n+1} \Psi^{|\cal T|+n}\cdot \bb E \sum_{\substack{p_1,...,p_{n-1},q_n,\\p_n=y}} \Big| G_{j_1y}\ol G_{i_1y}V_{1,2}\cdots V_{n-1,n}G_{i_ny}\ol G_{j_ny} \Big/|G_{p_ny}|\Big(\prod_{t \in \cal T}t\Big)\Big|\,|\cal Q_{yy}|^{\alpha+\beta}\\
\prec&\, \Big(\frac{1}{\sqrt{L}}\Big)^{n+1} \Psi^{|\cal T|+n}\cdot L^{n}\Big(\frac{1}{\sqrt{L\eta}}\Big)^{n+1}\Psi^{3n+1-|\cal T|-1-(n+1)} \bb E |\cal Q_{yy}|^{\alpha+\beta} = \frac{\Psi^2}{L\eta} \cdot \bigg(\frac{\Psi^3}{\sqrt{\eta}}\bigg)^{n-1}\cdot \bb E |\cal Q_{yy}|^{2p-n-1}\,.
\end{aligned}
\end{equation*}
For $k_1+1 \le n-1$, note that there is only one factor $|G_{p_{k_1+1}y}|$ in \eqref{620} that contains the index $p_{k_1+1}$, and we have
\begin{equation*}
\begin{aligned}
\eqref{620} \prec&\, \Big(\frac{1}{\sqrt{L}}\Big)^{n+1} \Psi^{|\cal T|+n-2}\cdot \bb E \sum_{\substack{p_1,...,p_{k_1};p_{k_1+1}=y;\\p_{k_1+2},...,p_{n-1},i_n,j_n} }\Big| \sigma^{(n)}_{i_nj_n}G_{j_1y}\ol G_{i_1y}V_{1,2}\cdots V_{n-1,n}G_{i_ny}\ol G_{j_ny} \Big/|G_{p_{k_1+1}y}|\Big(\prod_{t \in \cal T}t\Big)\Big|\,|\cal Q_{yy}|^{\alpha+\beta}\\
\prec &\,\Big(\frac{1}{\sqrt{L}}\Big)^{n+1} \Psi^{|\cal T|+n-1}\cdot \bb E \sum_{\substack{p_1,...,p_{k_1};p_{k_1+1}=y;\\p_{k_1+2},...,p_{n-1},q_n} }\Big| G_{j_1y}\ol G_{i_1y}V_{1,2}\cdots V_{n-1,n}G_{i_ny}\ol G_{j_ny} \Big/|G_{p_{k_1+1}y}G_{p_ny}|\Big(\prod_{t \in \cal T}t\Big)\Big|\,|\cal Q_{yy}|^{\alpha+\beta}\\
\prec&\, \Big(\frac{1}{\sqrt{L}}\Big)^{n+1} \Psi^{|\cal T|+n-1}\cdot L^{n-1}\Big(\frac{1}{\sqrt{L\eta}}\Big)^{n}\Psi^{3n+1-|\cal T|-2-n} \bb E |\cal Q_{yy}|^{\alpha+\beta} \prec \frac{\Psi^2}{L\eta} \cdot \bigg(\frac{\Psi^3}{\sqrt{\eta}}\bigg)^{n-1}\cdot \bb E |\cal Q_{yy}|^{2p-n-1}\,,
\end{aligned}
\end{equation*}
where in the last step we used the estimate $\frac{\sqrt{L\eta}}{L\Psi} \prec 1$.

One can take other bounds in \eqref{624} and show that we have the same bound. Thus we obtained from \eqref{623} that
\begin{equation} \label{6.24}
(A) \prec \frac{\Psi^2}{L\eta} \cdot \bigg(\frac{\Psi^3}{\sqrt{\eta}}\bigg)^{n-1}\cdot \bb E |\cal Q_{yy}|^{2p-n-1} \prec \bigg(\frac{\Psi}{\sqrt{L\eta}}+\frac{\Psi^3}{\sqrt{\eta}}\bigg)^{n+1} \bb E |\cal Q_{yy}|^{2p-n-1}\,.
\end{equation}

{\bf Step 3.} Now let's look at the first and fourth term on the RHS of \eqref{eq:Q_term2}, which is (A) and (D). It is important to consider these two contributions together because there is a crucial cancellation between them. By writing 
$$
\cal Q_{j_ny}=\sum_{i_{n+1},j_{n+1}} S_{\b vi}(H_{i_{n+1}j_{n+1}}G_{j_{n+1}j_n}\ol{G}_{i_{n+1}y}+S_{i_{n+1}j_{n+1}}G_{j_{n+1}j_{n+1}}G_{i_{n+1}x}\ol{G}_{i_{n+1}y}+S_{i_{n+1}j_{n+1}}\ol G_{i_{n+1}i_{n+1}}G_{j_{n+1}j_n}\ol{G}_{j_{n+1}y})
$$
and using cumulant expansion on $H_{i_{n+1}j_{n+1}}$ we see that
\begin{equation} \label{eq:rhs_Q}
\begin{aligned}
&\,(A)+(D)=-\Big(\frac{1}{\sqrt{L}}\Big)^n\bb E  \sum_{i_1,j_1,...,i_n,j_n} \sigma^{(1)}_{i_1j_1}\cdots  \sigma^{(n)}_{i_nj_n}G_{j_1y}\ol G_{i_1y}V_{1,2}\cdots V_{n-1,n}\cal Q_{j_ny}G_{i_ny}\cdot \cal Q_{yy}^{\alpha}\ol{\cal Q}_{yy}^{\beta}+(D)\\
=&\,-\Big(\frac{1}{\sqrt{L}}\Big)^n\bb E \sum_{i_1,j_1,...,i_{n+1},j_{n+1}}  \sigma^{(1)}_{i_1j_1}\cdots  \sigma^{(n)}_{i_nj_n} \partial_{j_{n+1}i_{n+1}} \big(G_{j_1y}\ol G_{i_1y}V_{1,2}\cdots V_{n-1,n}S_{\b vi_{n+1}}S_{i_{n+1}j_{n+1}}G_{j_{n+1}j_n}\ol G_{i_{n+1}y}G_{i_ny}\cdot \cal Q_{yy}^{\alpha}\ol{\cal Q}_{yy}^{\beta}\big)\\
&\,-\Big(\frac{1}{\sqrt{L}}\Big)^n\bb E \sum_{i_1,j_1,...,i_n,j_n} \sigma^{(1)}_{i_1j_1}\cdots  \sigma^{(n)}_{i_nj_n}G_{j_1y}\ol G_{i_1y}V_{1,2}\cdots V_{n-1,n} \cdot \\
& \Big(\sum_{i_{n+1},j_{n+1}}S_{\b vi_{n+1}}S_{i_{n+1}j_{n+1}}(G_{j_{n+1}j_{n+1}}G_{i_{n+1}j_n}\ol G_{i_{n+1}y} + G_{j_{n+1}j_n}\ol G_{i_{n+1}i_{n+1}}\ol G_{j_{n+1}y})\Big)G_{i_ny}\cal Q_{yy}^{\alpha}\ol{\cal Q}_{yy}^{\beta}+(D)\,.
\end{aligned}
\end{equation}
We see that when the differential $\partial_{j_{n+1}i_{n+1}}$ is applied to $G_{j_{n+1}j_n}$, $\ol G_{i_{n+1}y}$, and $G_{i_ny}$, the result will cancel the second, third and last term on the RHS of \eqref{eq:rhs_Q} respectively. Thus by setting $\sigma^{(n+1)}_{i_{n+1}j_{n+1} }\deq \sqrt{L}S_{\b vi_{n+1}}S_{i_{n+1}j_{n+1}}$ we have
\begin{equation} \label{eq:rhs_Q2}
\begin{aligned}
&\,(A)+(D)\\
=&\,-\Big(\frac{1}{\sqrt{L}}\Big)^{n+1}\bb E \sum_{i_1,j_1,...,i_{n+1},j_{n+1}}  \sigma^{(1)}_{i_1j_1}\cdots  \sigma^{(n+1)}_{i_{n+1}j_{n+1}} \big[\partial_{j_{n+1}i_{n+1}} \big(G_{j_1y}\ol G_{i_1y}V_{1,2}\cdots V_{n-1,n}\cdot \cal Q_{yy}^{\alpha}\ol{\cal Q}_{yy}^{\beta}\big)\big]G_{j_{n+1}j_n}\ol G_{i_{n+1}y}G_{i_ny}\,.\\
\end{aligned}
\end{equation}
Also, note that $V_{n,n+1}\deq G_{j_{n+1}j_n}\ol G_{i_{n+1}y}G_{i_ny} \in \cal V_{n,n+1}$, thus when the differential is applied to $\cal Q_{yy}^{\alpha}\ol{\cal Q}_{yy}^{\beta}$, we return to the same form as \eqref{eq:Q_term}. The term left to be estimated is
\begin{equation} \label{eq:left_term}
-\Big(\frac{1}{\sqrt{L}}\Big)^{n+1}\bb E \sum_{i_1,j_1,...,i_{n+1},j_{n+1}}  \sigma^{(1)}_{i_1j_1}\cdots  \sigma^{(n+1)}_{i_{n+1}j_{n+1}} \big[\partial_{j_{n+1}i_{n+1}} \big(G_{j_1y}\ol G_{i_1y}V_{1,2}\cdots V_{n-1,n}\big)\big]G_{j_{n+1}j_n}\ol G_{i_{n+1}y}G_{i_ny}\cdot \cal Q_{yy}^{\alpha}\ol{\cal Q}_{yy}^{\beta}\,.
\end{equation}
As in the estimation of (A), we can first apply Lemma \ref{lemma:chain_loop_X_estimates} to sum over $q_1,...,q_{n-1},j_{n},q_{n+1}$. Depending on whether $\partial_{j_{n+1}i_{n+1}}$ is applied to a $G$ or $\ol{G}$, the index $q_{n+1}$ will be equal to $j_{n+1}$ or $i_{n+1}$ respectively. Also, in this case we will have one loop (in the sense of Remark \ref{rmk:chain_loop}) if the differential is applied to a $G_{j_k x}$ ($x\in \{y, j_{k-1}\}$) and we have $G_{j_{n+1}j_n}\cdots G_{j_{k+1}j_k}$ in \eqref{eq:left_term}. Other summations in \eqref{eq:left_term} will still give rise to chains. Thus, with the same procedure of Step 2 we can use Lemma \ref{lemma:chain_loop_X_estimates}(i)(ii) to show that
\begin{equation} \label{eq:left_term2}
\eqref{eq:left_term} \prec \bigg(\frac{\Psi}{L\eta}+\frac{\Psi^3}{\sqrt{\eta}}\bigg)^{n+1}\cdot \bb E |\cal Q_{yy}|^{2p-n-1}\,.
\end{equation}
Thus by \eqref{eq:rhs_Q2}--\eqref{eq:left_term2} we have
\begin{multline} \label{eq:semifinal_bound}
(A)+(D)=\Big(\frac{1}{\sqrt{L}}\Big)^{n+1}\bb E \sum_{i_1,j_1,...,i_{n+1},j_{n+1}} \sigma^{(1)}_{i_1j_1}\cdots  \sigma^{(n+1)}_{i_{n+1}j_{n+1}}G_{j_1y}\ol G_{i_1y}V_{1,2}\cdots V_{n,n+1} (\partial_{j_{n+1}i_{n+1}} \cal Q_{yy}^{\alpha}\ol{\cal Q}_{yy}^{\beta})\\+O_{\prec}\bigg(\frac{\Psi}{L\eta}+\frac{\Psi^3}{\sqrt{\eta}}\bigg)^{n+1}\cdot \bb E |\cal Q_{yy}|^{2p-n-1}\,.
\end{multline}
Similarly, 
\begin{multline} \label{eq:semifinal_bound2}
(B)+(E)=\Big(\frac{1}{\sqrt{L}}\Big)^{n+1}\bb E \sum_{i_1,j_1,...,i_{n+1},j_{n+1}}\sigma^{(1)}_{i_1j_1}\cdots \sigma^{(n+1)}_{i_{n+1}j_{n+1}}G_{j_1y}\ol G_{i_1y}V_{1,2}\cdots V_{n-1,n} V'_{n,n+1} (\partial_{j_{n+1}i_{n+1}} \cal Q_{yy}^{\alpha}\ol{\cal Q}_{yy}^{\beta})\\+O_{\prec}\bigg(\frac{\Psi}{L\eta}+\frac{\Psi^3}{\sqrt{\eta}}\bigg)^{n+1}\cdot \bb E |\cal Q_{yy}|^{2p-n-1}\,,
\end{multline}
where  $V'_{n,n+1}\deq \ol{G}_{i_{n+1}i_n}\ol{G}_{j_ny}G_{j_{n+1}y} \in  \cal V_{n,n+1}$.

{\bf Step 4}. Plugging \eqref{6.24}, \eqref{eq:semifinal_bound} and \eqref{eq:semifinal_bound2} into \eqref{eq:Q_term2} gives
\begin{multline} \label{eq:semifinal_bound3}
\Big(\frac{1}{\sqrt{L}}\Big)^n\bb E \sum_{i_1,j_1,...,i_n,j_n} \sigma^{(1)}_{i_1j_1}\cdots  \sigma^{(n)}_{i_nj_n}G_{j_1y}\ol G_{i_1y}V_{1,2}\cdots V_{n-1,n}(\partial_{j_ni_n} \cal Q_{yy})\cal Q_{yy}^{\alpha}\ol{\cal Q}_{yy}^{\beta}\\
=\Big(\frac{1}{\sqrt{L}}\Big)^{n+1}\bb E \sum_{i_1,j_1,...,i_{n+1},j_{n+1}} \sigma^{(1)}_{i_1j_1}\cdots  \sigma^{(n+1)}_{i_{n+1}j_{n+1}}G_{j_1y}\ol G_{i_1y}V_{1,2}\cdots V_{n-1,n} (V_{n,n+1}+V'_{n,n+1}) (\partial_{j_{n+1}i_{n+1}} \cal Q_{yy}^{\alpha}\ol{\cal Q}_{yy}^{\beta})\\+O_{\prec}\bigg(\frac{\Psi}{L\eta}+\frac{\Psi^3}{\sqrt{\eta}}\bigg)^{n+1}\cdot \bb E |\cal Q_{yy}|^{2p-n-1}\,.
\end{multline}
Similar result can also be obtained for
\begin{equation*} 
\Big(\frac{1}{\sqrt{L}}\Big)^n\bb E \sum_{i_1,j_1,...,i_n,j_n} \sigma^{(1)}_{i_1j_1}\cdots  \sigma^{(n)}_{i_nj_n}G_{j_1y}\ol G_{i_1y}V_{1,2}\cdots V_{n-1,n}(\partial_{j_ni_n} \ol{\cal Q}_{yy})\cal Q_{yy}^{\alpha}\ol{\cal Q}_{yy}^{\beta}\,,
\end{equation*}
together with \eqref{eq:Q^2p} we have
\begin{align*}
\bb E |\cal Q_{yy}|^{2p}= & \sum_{n=1}^{2p-1} O_{\prec}\bigg(\frac{\Psi}{L\eta}+\frac{\Psi^3}{\sqrt{\eta}}\bigg)^{n+1}\cdot \bb E |\cal Q_{yy}|^{2p-n-1}\,,
\end{align*}
which gives the desired result by Lemma \ref{lemma:moment_to_domination}.

The proof of \eqref{eq:P_xy} proceeds as the one for \eqref{eq:P_vy} but the $ S_{\b v i_m}$'s ($m = 1, \ldots, 2p-1 $) are replaced by the $ S_{xi_m}$'s which are bounded by $ W^{-1}$. Finally, \eqref{eq:R_vy} and \eqref{eq:R_xy} are easily obtained by using Lemma \ref{lemma:chain_loop_X_estimates} (iii) and the Ward identity \eqref{eq:ward_identiy}.

\section{Proof of Lemma \ref{lemma:chain_loop_X_estimates}}\label{lemma:chain_loop_X}
In order to prove Lemma \ref{lemma:chain_loop_X_estimates} we will need some auxiliary technical lemmata. The first one concerns a trick to write self-consistent equations for the kind of expectations we are going to use.

\begin{lemma}\label{lemma_self-consistent_equation_trick}
Consider the expectation
\begin{align*}
D_{abc} = \E \sum_i \sigma_{ai} G_{ib}G_{ci} \mathfrak p (G, \overline G)
\end{align*}
where $ \sigma \in \bb  {S}$ and $ \mathfrak p (G, \overline G)$ is a polynomial of $G_{xy}$ and $ \overline G_{x'y'}$ with $x, y, x', y' \not\equiv  i $. Then
\begin{align}\label{eq:self_consist_trick}
D_{abc} = &  \, \E \xi_{ab} G_{cb} \mathfrak p (G, \overline G) - \E \sum_{i,j} \xi_{ai} S_{ij} G_{jb}G_{ci} \partial_{ji}\mathfrak p (G, \overline G)  \\\nonumber
&+  \E \sum_{i,j} \xi_{ai} S_{ij} ((G_{jj}-\mathfrak m)G_{ib}G_{ci} + G_{jb}G_{cj}(G_{ii}-\mathfrak m)) \mathfrak p (G, \overline G)\,
\end{align}
where $ \xi \in \bb S$.
\end{lemma}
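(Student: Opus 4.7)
The identity \eqref{eq:self_consist_trick} follows from the Gaussian cumulant expansion \eqref{eq:cum_exp_gaussian} combined with a geometric resummation in powers of $\mathfrak m^2 S$. First I would use the resolvent identity $(H-z)G=I$ entrywise, in the form $zG_{ib}=-\delta_{ib}+\sum_{k} H_{ik}G_{kb}$, to rewrite
\begin{equation*}
z D_{abc} \;=\; -\E \sigma_{ab}\, G_{cb}\, \mathfrak p \;+\; \E \sum_{i,k} \sigma_{ai}\, H_{ik}\, G_{kb}\, G_{ci}\, \mathfrak p.
\end{equation*}
Applying \eqref{eq:cum_exp_gaussian} to the factor $H_{ik}$ produces a variance factor $S_{ik}$ and a derivative $\partial_{ki}$ acting on $G_{kb}G_{ci}\mathfrak p$. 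The differentiation rule \eqref{eq:diff_G} gives $\partial_{ki}G_{kb}=-G_{kk}G_{ib}$ and $\partial_{ki}G_{ci}=-G_{ck}G_{ii}$, while $\partial_{ki}\mathfrak p$ is kept as an explicit term.

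Next I would split $G_{kk}=\mathfrak m + (G_{kk}-\mathfrak m)$ and $G_{ii}=\mathfrak m + (G_{ii}-\mathfrak m)$. The two bare $\mathfrak m$-contributions collapse thanks to stochasticity $\sum_k S_{ik}=1$: one of them reproduces $\mathfrak m D_{abc}[\sigma]$, which I move to the left-hand side, while the other becomes $\mathfrak m D_{abc}[\sigma S]$, where $D_{abc}[\tau]$ stands for the object of the lemma with $\sigma$ replaced by $\tau$. Using \eqref{eq:m_equation} in the form $z+\mathfrak m = -\mathfrak m^{-1}$ and multiplying by $-\mathfrak m$ yields the one-step recursion
\begin{align*}
D_{abc}[\sigma] \;=\;& \mathfrak m\, \E \sigma_{ab}G_{cb}\mathfrak p \;+\; \mathfrak m^2\, D_{abc}[\sigma S] \\
& +\; \mathfrak m\, \E \sum_{i,k}\sigma_{ai}S_{ik}\qB{(G_{kk}-\mathfrak m)G_{ib}G_{ci} + G_{kb}G_{ck}(G_{ii}-\mathfrak m)}\mathfrak p \\
& -\; \mathfrak m\, \E \sum_{i,k}\sigma_{ai}S_{ik}G_{kb}G_{ci}\partial_{ki}\mathfrak p.
\end{align*}

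Since $\sigma S\in\bb S$, I would iterate this recursion indefinitely. The remainder after $N$ steps is $\mathfrak m^{2N} D_{abc}[\sigma S^N]$, which vanishes as $N\to\infty$ because $|\mathfrak m|<1$ while $D_{abc}[\sigma S^N]$ stays uniformly bounded. Telescoping the geometric series and exploiting the linearity in $\sigma$ of the bracketed and $\partial$-terms then gives \eqref{eq:self_consist_trick} with the explicit kernel
\begin{equation*}
\xi \;\deq\; \mathfrak m\, \sigma\,\pb{I-\mathfrak m^2 S}^{-1} \;=\; \sum_{k\geq 0} \mathfrak m^{2k+1}\sigma S^k.
\end{equation*}

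The main obstacle is the closure claim $\xi\in\bb S$. The pointwise bound $\sup_{x,y}|\xi_{xy}|\prec W^{-1}$ uses the entrywise decay of $\sigma$ at $k=0$ together with random-walk heat-kernel estimates of the form $\sup_{x,y}(S^k)_{xy}\lesssim (W\sqrt{k}\vee W)^{-1}$ for $k\geq 1$, combined with the key observation that $\mathfrak m$ stays bounded away from the real axis for $z\in\b S$ (by \eqref{m is bounded}--\eqref{m3}), so the phase cancellations in $\sum_k \mathfrak m^{2k}$ give $|(1-\mathfrak m^2)^{-1}|=O(1)$. The row/column sum estimates are delicate because a naive triangle inequality only yields $\sum_k|\mathfrak m|^{2k}\sim\eta^{-1}$; to recover $\prec 1$ one diagonalises $S$ in Fourier space (as in the derivation of \eqref{eq:below_bound}) and uses that $|1-\mathfrak m^2\widehat s(p)|\gtrsim 1$ uniformly in $p$, converting the $\ell^1$-sum into a controlled Fourier integral. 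Once this closure is verified, the identity \eqref{eq:self_consist_trick} is immediate.
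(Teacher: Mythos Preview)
Your derivation of the one-step recursion and its iteration is correct, and it lands on the same kernel as the paper: $\xi=\mathfrak m\,\sigma(I-\mathfrak m^{2}S)^{-1}$. The paper organises the resummation slightly differently --- after one expansion with the generic $\sigma$ it introduces the auxiliary quantity $\tilde D_{abc}$ (with $S$ in place of $\sigma$), expands once more, and then inverts the \emph{operator} $\cal L\deq ((z+\mathfrak m)I+\mathfrak m S)^{-1}=-\mathfrak m(I-\mathfrak m^{2}S)^{-1}$ in one stroke rather than iterating to infinity. This is only a cosmetic difference; both routes produce $\xi=\mathfrak m(\sigma-\mathfrak m\sigma\cal L S)=\mathfrak m\,\sigma(I-\mathfrak m^{2}S)^{-1}$.

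The real issue is the closure $\xi\in\bb S$, which you correctly flag as the obstacle but do not actually establish. Your heat-kernel plus ``phase cancellation'' argument for the pointwise bound cannot work as written: the coefficients $(\sigma S^{k})_{xy}$ vary with $k$, so you cannot simply factor out $\sum_{k}\mathfrak m^{2k}=(1-\mathfrak m^{2})^{-1}$. For the row/column sums, your Fourier claim $\abs{1-\mathfrak m^{2}\widehat s(p)}\gtrsim 1$ is in fact true here (it uses both $\im\mathfrak m\asymp 1$ and $\inf_{\xi}\widehat f(\xi)>-1$, the latter following from smoothness and positivity of $f$), but it only yields an $\ell^{2}$ bound on the kernel of $(I-\mathfrak m^{2}S)^{-1}$, not the $\ell^{1}$ bound that $\bb S$-membership requires. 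The paper sidesteps all of this by invoking the ready-made estimate $\norm{(I-\mathfrak m^{2}S)^{-1}}_{\infty\to\infty}\leq C\log L$ from \cite[Proposition~B.2]{EKY13}; combined with the translation invariance of $\cal L$, this immediately gives $\cal L\tau\in\bb S$ for any $\tau\in\bb S$, hence $\xi\in\bb S$. If you want to avoid that citation you would need to control the $W$-scaled derivatives of $p\mapsto(1-\mathfrak m^{2}\widehat s(p))^{-1}$ to get genuine spatial decay of the kernel, which is more work than your sketch suggests.
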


\begin{proof}
Applying \eqref{eq:def_G} and the cumulant expansion formula \eqref{eq:cum_exp_gaussian}, we get:
\begin{align}\label{eq:test_expansion_0}
z D_{abc}= \E \sum_i \sigma_{ai} (zG_{ib}) G_{ci} \mathfrak p (G, \overline G) = - \E \sum_i \sigma_{ai} \delta_{ib} G_{ci} \mathfrak p (G, \overline G) + \E \sum_{i,j} \sigma_{ai} S_{ij} \partial_{ji} (G_{jb}G_{ci} \mathfrak p (G, \overline G)) 
 \end{align}
By performing the derivatives and writing the diagonal entries of $G$ as $(G_{ii} - \fra m) + \fra m$ one has
 \begin{align}\nonumber
& z D_{abc} = - \E \sigma_{ab} G_{cb} \mathfrak p (G, \overline G) - \E \sum_{i,j} \sigma_{ai} S_{ij} (G_{jj}G_{ib}G_{ci} + G_{jb}G_{cj}G_{ii}) \mathfrak p (G, \overline G) + \E \sum_{i,j} \sigma_{ai} S_{ij} G_{jb}G_{ci} \partial_{ji}\mathfrak p (G, \overline G) \\\nonumber
& =  - \E \sigma_{ab} G_{cb} \mathfrak p (G, \overline G) - \mathfrak m D_{abc} - \mathfrak m \sum_{i} \sigma_{ai} \E \sum_j S_{ij} G_{jb} G_{cj}\mathfrak p (G, \overline G) \\
&\quad - \E \sum_{i,j} \sigma_{ai} S_{ij} ((G_{jj}-\mathfrak m)G_{ib}G_{ci} + G_{jb}G_{cj}(G_{ii}-\mathfrak m)) \mathfrak p (G, \overline G) + \E \sum_{i,j} \sigma_{ai} S_{ij} G_{jb}G_{ci} \partial_{ji}\mathfrak p (G, \overline G) .\label{eq:test_expansion}
\end{align}
Let us consider the summation of the last term of the expansion in \eqref{eq:test_expansion}
\begin{align*}
\tilde D_{abc} := \E \sum_i S_{ai} G_{ib} G_{ci}\mathfrak p (G, \overline G).
\end{align*}
Note that the only difference between $ D_{abc}$ and $ \tilde D_{abc}$ is that $ \sigma_{ai}$ in $  D_{abc}$ is replaced by $ S_{ai}$ in $ \tilde D_{abc}$. Expanding $ \tilde D_{abc}$ in the same way we have done for $ D_{abc}$, we get
\begin{align}\nonumber
& \sum_{i}((z + m) \delta_{ai} + \mathfrak m S_{ai} )\tilde D_{ibc}= - \E S_{ab} G_{cb} \mathfrak p (G, \overline G) + \E \sum_{i,j} S_{ai} S_{ij} G_{jb}G_{ci} \partial_{ji}\mathfrak p (G, \overline G) \\ 
&- \E \sum_{i,j} S_{ai} S_{ij} ((G_{jj}-\mathfrak m)G_{ib}G_{ci} + G_{jb}G_{cj}(G_{ii}-\mathfrak m)) \mathfrak p (G, \overline G).
\label{eq:test_exp_tilde}
\end{align}
Let us analyse the operator on the left hand side of \eqref{eq:test_exp_tilde}: thanks to \eqref{eq:m_equation} we get
\begin{align}\label{eq:def_L}
\cal L := ((z + \mathfrak{m}) I + \mathfrak m S)^{-1} = - \mathfrak m (I - \mathfrak m^2 S)^{-1}.
\end{align}
Since we are interested in the bulk spectrum of the band matrices, from Proposition B.2 in \cite{EKY13} we see that there is a positive constant $C$ such that
\begin{align}\label{eq:norm_bound_to_1}
\rho := \norm{\cal L}_{\infty \to \infty} \leq C \log L \prec 1.
\end{align} 
From the translational invariance of $\cal L$ and \eqref{eq:norm_bound_to_1} one can easily see that $ \cal L \xi \in \bb S$ for any $ \xi \in \bb S$. 
Thus, \eqref{eq:norm_bound_to_1} and \eqref{eq:test_exp_tilde} imply
\begin{align}\label{eq:bound_tildeA}
\tilde D_{abc} =& -\E \tau_{ab} G_{cb} \mathfrak p (G, \overline G) + \E \sum_{i,j} \tau_{ai} S_{ij} G_{jb}G_{ci} \partial_{ji}\mathfrak p (G, \overline G) \\\nonumber
&- \E \sum_{i,j} \tau_{ai} S_{ij} ((G_{jj}-\mathfrak m)G_{ib}G_{ci} + G_{jb}G_{cj}(G_{ii}-\mathfrak m)) \mathfrak p (G, \overline G)
\end{align}
where $ \tau = \cal L S \in \bb S$.
Coming back to \eqref{eq:test_expansion} and using again \eqref{eq:def_G}, we get
\begin{align*}
& D_{abc} =  \mathfrak m \E \sigma_{ab} G_{cb} \mathfrak p (G, \overline G) - \mathfrak m \E \sum_{i,j} \sigma_{ai} S_{ij} G_{jb}G_{ci} \partial_{ji}\mathfrak p (G, \overline G) \\\nonumber
&+ \mathfrak m \E \sum_{i,j} \sigma_{ai} S_{ij} ((G_{jj}-\mathfrak m)G_{ib}G_{ci} + G_{jb}G_{cj}(G_{ii}-\mathfrak m)) \mathfrak p (G, \overline G) + \mathfrak m^2 \sum_{i} \sigma_{ai} \tilde D_{ibc}.
\end{align*}
In conclusion, using \eqref{eq:bound_tildeA} and the fact that $ \bb S$ is close with respect to matrix addition and multiplication, we get
\begin{align*}
D_{abc} = &  \, \E \xi_{ab} G_{cb} \mathfrak p (G, \overline G) -\E \sum_{i,j} \xi_{ai} S_{ij} G_{jb}G_{ci} \partial_{ji}\mathfrak p (G, \overline G)  \\\nonumber
&+  \E \sum_{i,j} \xi_{ai} S_{ij} ((G_{jj}-\mathfrak m)G_{ib}G_{ci} + G_{jb}G_{cj}(G_{ii}-\mathfrak m)) \mathfrak p (G, \overline G)\,.
\end{align*}
where $ \xi = \fra m \,(\sigma - \fra m \sigma \tau) \in \bb S$.
\end{proof}
It is convenient to define the following transformation on the matrices belonging to $ \bb S$: let $ \sigma \in \bb S$, then we set 
$$
\breve \sigma := \fra m (\sigma - \fra m\sigma \cal L S) \in \bb S
$$
where $\cal L$ is the matrix defined in \eqref{eq:def_L}. With this notation we can write $ \xi$ in Lemma \ref{lemma_self-consistent_equation_trick} as $ \xi = \breve \sigma$.

By using an argument very similar to the one exploited in Lemma \ref{lemma_self-consistent_equation_trick} we can also show that 
\begin{align}\label{eq:bound_calG}
\cal G_a := \sum_i \sigma_{ai} \E(G_{ii}-\mathfrak{m})  \prec \Psi^2
\end{align}
under the assumption that $\Lambda \prec \Psi$. In fact, by the cumulant expansion formula \eqref{eq:cum_exp_gaussian} we have
\begin{align}\label{eq:calG}
z \cal G_a = - \E \sum_{i,j} \sigma_{ai}S_{ij}(G_{jj}-\fra m)(G_{ii}-\fra m) - \fra m \cal G_a - \fra m \sum_{i} \sigma_{ai} \tilde{\cal G}_i
\end{align}
where $\tilde{\cal G}_a := \sum_i S_{ai} \E(G_{ii}-\mathfrak{m}) $. Expanding $\tilde{\cal G}_a$ again via \eqref{eq:cum_exp_gaussian} and using the properties of the operator $\cal L$ as we did for $\tilde D_{abc}$ in the proof of Lemma \ref{lemma_self-consistent_equation_trick}, we get
\begin{align*}
\tilde{ \cal G}_a = - \E \sum_{i,j} (\cal L S)_{ai}S_{ij}(G_{jj}-\fra m)(G_{ii}-\fra m) .
\end{align*} 
Using \eqref{eq:calG}, this implies $ \cal G_a = \, \E \sum_{i,j} \breve\sigma_{ai}S_{ij}(G_{jj}-\fra m)(G_{ii}-\fra m) \prec \Psi^2$.
In the same way one gets also the following estimates
\begin{align}\label{eq:corollary1}
& \E(G_{jj}-\fra m) \prec \Psi^2, \ \ \  \E G_{jk} \prec \Psi^2 \ \mbox{for } k \neq j, \ \ \  \sum_j \sigma_{ij} \E G_{jk} \prec \Psi^2.
\end{align}

We will prove \eqref{eq:chain_bound} and \eqref{eq:loop_bound} by induction over $n$. Therefore, first we should check that those two formulae are true for $n = 1$ and $n =2$ respectively. This is shown in the following lemma.

\begin{lemma}\label{lemma:first_induction_step}
Under the assumption of Lemma \ref{lemma:chain_loop_X_estimates}, for some $\xi \in \bb S$ with nonnegative entries one has
\begin{align}
\label{eq:bound_Y}
Y^{(1)}_{ab;u} \prec \Psi^3 + \delta_{ab}\Psi^2, \\
\label{eq:bound_Z}
Z^{(2)}_{ab} \prec \Psi^4 + \xi_{ab}\,.
\end{align}
\end{lemma}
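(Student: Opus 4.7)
The plan is to prove both estimates by bounding the $2p$-th moments $\E|Y^{(1)}_{ab;u}|^{2p}$ and $\E|Z^{(2)}_{ab}|^{2p}$ using the cumulant expansion (Lemma~\ref{lem:cumulant_expansion}) applied to one resolvent factor, and then converting the resulting moment bounds into stochastic domination via Lemma~\ref{lemma:moment_to_domination}. Since these are the base cases for the induction in Lemma~\ref{lemma:chain_loop_X_estimates}, I can only use the a priori input $\Lambda\prec\Psi$, the Ward identity $\sum_i|G_{ai}|^2=\im G_{aa}/\eta$, and the self-consistent-equation machinery of Lemma~\ref{lemma_self-consistent_equation_trick}.

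For \eqref{eq:bound_Y}: I note that $Y^{(1)}_{ab;u}$ is exactly $D_{u,b,a}$ from Lemma~\ref{lemma_self-consistent_equation_trick} with $\sigma=\sigma^{(1)}$. Applying that lemma with $\mathfrak p=(Y^{(1)}_{ab;u})^{p-1}(\overline{Y^{(1)}_{ab;u}})^{p}$ produces three contributions. The boundary term $\E\xi_{ub}G_{ab}\mathfrak p$ is controlled by $(\Psi^3+\delta_{ab}\Psi^2)\E|Y^{(1)}_{ab;u}|^{2p-1}$ via $|\xi_{ub}|\prec W^{-1}\leq\Psi^2$ and $|G_{ab}|\prec\Psi+\delta_{ab}$. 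The $(G_{jj}-\fra m)$ terms contribute $\prec\Psi^3\,\E|Y^{(1)}_{ab;u}|^{2p-1}$, combining $|G_{jj}-\fra m|\prec\Psi$, $|G_{xy}|\prec\Psi+\delta_{xy}$, $\sum_i|\xi_{ui}|\prec 1$ and $\sum_j S_{ij}=1$, with the $i\in\{a,b\}$ contributions absorbed via $|\xi_{ua}|,|\xi_{ub}|\prec\Psi^2$. The derivative term uses $\partial_{ji}Y^{(1)}_{ab;u}=-G_{aj}Y^{(1)}_{ib;u}-G_{ib}Y^{(1)}_{aj;u}$, and I control the displaced sums by the crude bound $|Y^{(1)}_{cd;u}|\prec\Psi^2/\eta$ obtained from Cauchy-Schwarz and Ward. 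Combining gives the sub-multiplicative moment bound required by Lemma~\ref{lemma:moment_to_domination}, hence $Y^{(1)}_{ab;u}\prec\Psi^3+\delta_{ab}\Psi^2$.

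For \eqref{eq:bound_Z}: the approach is analogous, but the closed-loop structure $Z^{(2)}_{ab}=\sum_{i_1,i_2}\sigma^{(1)}_{ai_1}\sigma^{(2)}_{bi_2}G_{i_1 i_2}G_{i_2 i_1}$ produces an extra loop-closing boundary term. Starting from $zG_{i_1 i_2}=(HG)_{i_1 i_2}-\delta_{i_1 i_2}$ and cumulant-expanding the $H$-factor inside $\E(zZ^{(2)}_{ab})\mathfrak p$, the Kronecker-$\delta$ contributes $-\sum_i\sigma^{(1)}_{ai}\sigma^{(2)}_{bi}G_{ii}$, whose leading value $-\fra m\,\xi_{ab}$ defines $\xi_{ab}\deq\sum_i|\sigma^{(1)}_{ai}\sigma^{(2)}_{bi}|$; since $\sigma^{(1)},\sigma^{(2)}\in\bb S$ one checks directly that $\xi$ has nonnegative entries and lies in $\bb S$. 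All other contributions are handled as in the analysis of (a)--(c) above, and the self-consistent operator $(I-\fra m^2 S)^{-1}=-\fra m^{-1}\cal L$ is inverted using the $\|\cdot\|_{\infty\to\infty}=O(\log L)\prec 1$ bound from \eqref{eq:norm_bound_to_1}. Lemma~\ref{lemma:moment_to_domination} then yields $Z^{(2)}_{ab}\prec\Psi^4+\xi_{ab}$.

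The hard step will be the high-moment derivative contribution, where $\partial_{ji}\mathfrak p$ spawns $Y^{(1)}$- or $Z^{(2)}$-type sums at displaced indices for which no sharp bound is yet available. The only tool is the weak Cauchy-Schwarz-plus-Ward bound $\prec\Psi^2/\eta$, so careful bookkeeping of the prefactors of $\Psi$ coming from $G$-entries, $S$-sums, and $\xi$-sums will be needed to keep the total contribution at the right order $(\Psi^3+\delta_{ab}\Psi^2)^k\E|Y^{(1)}|^{2p-k}$ (respectively $(\Psi^4+\xi_{ab})^k\E|Z^{(2)}|^{2p-k}$) demanded by Lemma~\ref{lemma:moment_to_domination}.
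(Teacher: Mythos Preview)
Your overall strategy --- cumulant expansion via Lemma~\ref{lemma_self-consistent_equation_trick} followed by Lemma~\ref{lemma:moment_to_domination} --- matches the paper's, and your treatment of the boundary and $(G_{jj}-\fra m)$ terms is correct. The gap is in the derivative term.

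You bound the displaced chains $Y^{(1)}_{ib;u}$, $Y^{(1)}_{aj;u}$ by the \emph{fixed} Cauchy--Schwarz/Ward estimate $\Psi^2/\eta$. Plugging this into the derivative contribution gives a $k=2$ coefficient $\prec\Psi^3\cdot(\Psi^2/\eta)=\Psi^5/\eta$. For Lemma~\ref{lemma:moment_to_domination} to apply with target $\varphi=\Psi^3$ you would need $\Psi^5/\eta\le(\vartheta^q\varphi^{1-q})^2$ for \emph{every} $\vartheta\ge\varphi$; at $\vartheta=\varphi$ this reads $\Psi^5/\eta\le\Psi^6$, i.e.\ $\eta\Psi\ge 1$, which fails throughout $\b S$ for any admissible $\Psi$ (since $\Psi\le M^{-\gamma/2}$ and $\eta\le 10$). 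No amount of bookkeeping repairs this: the deficit is a full factor $(\eta\Psi)^{-1}\gg 1$.

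The mechanism the paper uses is that the displaced $Y^{(1)}_{cd;u}$ are themselves members of the family being estimated, so the \emph{running} a priori bound $\vartheta$ (uniform in $c\ne d$) applies to them too. The derivative term then contributes $\prec\vartheta\,\Psi^3\,\E|Y^{(1)}_{ab;u}|^{2p-2}=(\vartheta^{1/2}\Psi^{3/2})^2\E|Y^{(1)}_{ab;u}|^{2p-2}$, which is exactly the $q=\tfrac12$ structure Lemma~\ref{lemma:moment_to_domination} requires. Note also that the elementary entrywise bound $|Y^{(1)}_{cd;u}|\le\sum_i|\sigma_{ui}||G_{ci}||G_{id}|\prec\Psi^2$ (from $\sum_i|\sigma_{ui}|\prec 1$ and $\Lambda\prec\Psi$) already beats your $\Psi^2/\eta$ and supplies the initial value of $\vartheta$.

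For $Z^{(2)}$ the same self-referential use of the running bound is needed, and in addition two terms do not reduce to the $(a)$--$(c)$ analysis you outline. First, in the $(G_{i_1 i_1}-\fra m)$ term the index $i_1$ is coupled to the loop and cannot be summed out via \eqref{eq:X_bound}; its fluctuation part is the auxiliary object $B_{ab}=\sum_{i,j,k}\breve\sigma_{ai}\tau_{bj}S_{ik}G^\flat_{ii}G_{kj}G_{jk}$. Second, $\partial_{ki}Z^{(2)}_{ab}$ is the open three-point object $A_{abki}=\sum_{l,h}\sigma_{al}\tau_{bh}G_{lk}G_{ih}G_{hl}$. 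Both $B_{ab}$ and $A_{abcd}$ require their own short cumulant expansions (now using the already-established \eqref{eq:bound_Y} as input) to reach $\Psi^4$ and $\Psi^5+\delta_{cd}\Psi^4$ respectively; your sketch does not account for either.
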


\begin{proof}
We slightly simplify the notation by setting $Y_{uab} := Y^{(1)}_{ab;u} = \sum_i \sigma_{ui}G_{ai}G_{ib}$ with $ \sigma \in \bb S$. We fix an arbitrary integer $p \geq 1$ and we apply Lemma \ref{lemma_self-consistent_equation_trick} to $\E |Y_{uab}|^{2p} = \E \sum_{i} \sigma_{ui} G_{ai}G_{ib}Y_{sab}^{p-1} \overline{Y}_{sab}^p $, so we get
\begin{align}\label{eq:expansion_callY_pq}
\nonumber
& \E |Y_{uab}|^{2p} \prec \bigg \vert \E \breve\sigma_{ua} G_{ab} (Y_{uab})^{p-1} (\overline{Y}_{uab})^p \bigg\vert + \bigg\vert \E \sum_{i,j} \breve\sigma_{ui} S_{ij} G_{ai} G_{jb} \partial_{ji} (Y_{uab}^{p-1} \overline{Y}_{uab}^p) \bigg\vert \\\nonumber
& + \bigg\vert \E \sum_{i,j} \breve\sigma_{ui} S_{ij} ((G_{jj}-\mathfrak{m}) G_{ai} G_{ib} + G_{aj} G_{jb} (G_{ii}-\mathfrak{m}) ) Y_{uab}^{p-1} \overline{Y}_{uab}^p \bigg \vert \\
& \prec (\Psi^3 + \delta_{ab}\Psi^2)
\E |Y_{uab}|^{2p-1} 
+ \bigg \vert \E \sum_{i,j} \breve\sigma_{ui} S_{ij} G_{aj} G_{ib} \partial_{ji} (Y_{uab}^{p-1} \overline{Y}_{uab}^p) \bigg \vert
\end{align}
where we used the hypothesis $\Lambda \prec \Psi$. 
Let us focus on the last term of \eqref{eq:expansion_callY_pq}: we note that
\begin{align}\label{eq:self_similar_derivative_Y}
\partial_{ji} Y_{uab}= - G_{ia} Y_{ujb} - G_{bj}Y_{uai}, \ \ \ \
\partial_{ji} \overline Y_{uab} = - \overline G_{ja} \overline Y_{uib} - \overline G_{bi} \overline Y_{uaj}.
\end{align}
For $a \neq b$ we set the a prior bound
\begin{align}\label{eq:apriori_Y_ab}
Y_{uab} \prec \lambda
\end{align}
for $\lambda \in [\Psi^3, L^C]$ for some $C >0$, 
so that for $a \neq b$, recalling the trivial bound $Y_{uab} \prec \Psi^2$, using \eqref{eq:apriori_Y_ab}, \eqref{eq:self_similar_derivative_Y} and \eqref{eq:expansion_callY_pq}, we have
\begin{align}\label{eq:high_moment_callY}
\E|Y_{uab}|^{2p} \prec \Psi^3 \E |Y_{uab}|^{2p-1} + \lambda \Psi^3 \E |Y_{uab}|^{2p-2} + \Psi^6 \E |Y_{uab}|^{2p-2} \prec \lambda \Psi^3 \E |Y_{uab}|^{2p-2}. 
\end{align}
From Lemma \ref{lemma:moment_to_domination} with $ \vartheta = \lambda$, $ \varphi = \Psi^3$ and $ q=1/2$ we deduce that $ Y_{uab} = Y_{ab;u}^{(1)} \prec \Psi^3$ for $a \neq b$, and therefore in general we obtain $$
Y_{ab;u}^{(1)} \prec \Psi^3 + \delta_{ab} \Psi^2.
$$
Let us now prove \eqref{eq:bound_Z} with the same strategy used for \eqref{eq:bound_Y}. We recall that 
\begin{align*}
Z_{ab}^{(2)}:= \sum_{ij} \sigma_{ai}  \tau_{bj} G_{ij}G_{ji}
\end{align*}
with $ \sigma, \tau \in \bb S$. In the following we will drop the superscript $(2)$. Again from Lemma \ref{lemma_self-consistent_equation_trick} for any fixed integer $p$ we have
\begin{align}\label{eq:expansion_Z_pq}\nonumber
  \E |Z_{ab}|^{2p} \prec & \, \, \bigg\vert \E \sum_{i} \breve\sigma_{ai} \tau_{bi} G_{ii} Z_{ab}^{p-1} \overline Z_{ab}^{p} \bigg\vert 
 + \bigg\vert \E \sum_{i,j,k}\breve\sigma_{ai} \tau_{bj}S_{ik}[(G_{kk}-\mathfrak{m})G_{ji}G_{ij}+ (G_{ii}-\mathfrak{m})G_{kj}G_{jk}] Z_{ab}^{p-1} \overline Z_{ab}^{p} \bigg\vert
\\\nonumber
& + \bigg\vert \E \sum_{i,j,k} \breve\sigma_{ai} \tau_{bj}S_{ik} G_{kj} G_{ji} \partial_{ki} (Z_{ab}^{p-1} \overline Z_{ab}^{p}) \bigg\vert \\
 \prec & \,\, (\xi_{ab} +| B_{ab}| + \Psi^4)\E |Z_{cd}|^{2p-1} + \bigg\vert \E \sum_{i,j,k} \breve\sigma_{ai}\tau_{bj}S_{ik} G_{kj} G_{ji} \partial_{ki} (Z_{ab}^{p-1} \overline Z_{ab}^{p}) \bigg\vert
\end{align}
where $ \xi \in \bb S$ and in the last line we used \eqref{eq:X_bound}. Moreover, 
\begin{align}
\label{eq:B_def}
B_{ab} &:= \sum_{i,j,k}\breve \sigma_{ai} \tau_{bj}S_{ik} G_{ii}^\flat G_{kj}G_{jk}
\end{align}
where we used the notation $g^\flat := g - \E g $. 
We now have to estimate $B_{ab}$: note that by using \eqref{eq:bound_Y} we get $B_{ab} \prec \Psi W^{-1}$, but this is not enough for our purposes. Thus, we are going to expand $B_{ab}$: as before Lemma \ref{lemma_self-consistent_equation_trick} implies
\begin{align}\label{eq:expansion_B_pq}
 \E |B_{ab}|^{2p} & \prec  \bigg\vert \E \sum_{i,j}\breve \sigma_{ai} \breve \tau_{bj}S_{ij} G_{ii}^\flat G_{jj} B_{ab}^{p-1} \overline B_{ab}^{p} \bigg\vert + \bigg\vert \E \sum_{i,j,k,l}\breve \sigma_{ai} \breve \tau_{bj}S_{ik} S_{jl}G_{kj} G_{lk} \partial_{lj}(G_{ii}^\flat B_{ab}^{p-1} \overline B_{ab}^{p})\bigg\vert \\\nonumber
& \prec \Psi^4\E|B_{ab}|^{2p-1} + \Psi^8 \E|B_{ab}|^{2p-2}
\end{align}
where in the second line we used \eqref{eq:X_bound}, \eqref{eq:bound_Y} and the fact that trivially one gets $\partial_{lj} B_{ab} \prec \Psi^4$. Thus, we obtained that
\begin{align}\label{eq:B_basic_bound}
B_{ab} \prec \Psi^4
\end{align}
and 
\begin{align}\label{eq:Z_temporary_bound}
 \E |Z_{ab}|^{2p} \prec (\xi_{ab} + \Psi^4)\E |Z_{ab}|^{2p-1} + \bigg\vert \E \sum_{i,j,k} \breve\sigma_{ai} \tau_{bj}S_{ik} G_{kj} G_{ji} \partial_{ki} (Z_{ab}^{p-1} \overline Z_{ab}^{p}) \bigg\vert \,.
\end{align}
Let us focus on the last term in \eqref{eq:Z_temporary_bound}: we need to compute the derivatives of $ Z$:
\begin{align}\label{eq:derivative_Z}
\partial_{ki} Z_{cd} &= - \sum_{l h} \sigma_{c l} \tau_{d h}(G_{lk}G_{ih}G_{hl} + G_{hk}G_{il}G_{lh})\\
\label{eq:derivative_Z_bar}
\partial_{ki}\overline Z_{cd} &= - \sum_{l h} \ol \sigma_{c l} \ol \tau_{d h}(\overline G_{hl}\overline G_{li} \overline G_{kh}  + \overline G_{lh} \overline G_{hi} \overline G_{kl}).
\end{align}
We consider the first term in \eqref{eq:derivative_Z} (the second one is treated similarly as well as the two terms in \eqref{eq:derivative_Z_bar}): we define 
\begin{align*}
A_{abcd}:= \sum_{i,j} \sigma_{a i} \tau_{b j}G_{ic}G_{dj}G_{ji}\,.
\end{align*}
To bound $ A_{abcd}$ we perform again a cumulant expansions. For $c \neq d$ set the prior bound
\begin{align}\label{eq:apriori_A}
A_{abcd}  \prec \alpha
\end{align}
with $\alpha \in [\Psi^5, L^C]$. For any integer $p \geq 1$ Lemma \ref{lemma_self-consistent_equation_trick} 
yields 
\begin{align}\label{eq:expansion_A}
& \E |A_{abcd}|^{2p} \prec \bigg\vert \E \sum_{j} \breve\sigma_{a c} \tau_{b j}G_{cc}G_{dj}G_{jc}A_{abcd}^{p-1} \overline A_{abcd}^{p} \bigg\vert \\\nonumber
& + \bigg\vert \E  \sum_{i,j,k} \breve\sigma_{a i} \tau_{b j}S_{ki}[(G_{kk}-\mathfrak{m})G_{ic} G_{dj}G_{jik} + (G_{ii}-\mathfrak{m})G_{kc} G_{dj}G_{jk}] A_{abcd}^{p-1} \overline A_{abcd}^{p} \bigg\vert \\\nonumber
& + \bigg\vert \E  \sum_{i,j,k} \breve\sigma_{a i} \tau_{b j}S_{ki} G_{kc} G_{dk}G_{ij}G_{ji} A_{abcd}^{p-1} \overline A_{abcd}^{p} \bigg\vert  + \bigg\vert \E  \sum_{i,j,k} \breve\sigma_{a i} \tau_{b j}S_{ki}G_{kc}G_{dj}G_{ji} \partial_{ki}(A_{abcd}^{p-1} \overline A_{abcd}^{p}) \bigg\vert\,.
\end{align}
By using \eqref{eq:bound_Y} we get for $c \neq d$
\begin{align}\label{eq:call_A1}
& \E |A_{abcd}|^{2p} \prec \Psi^5 \E |A_{abcd}|^{2p-1} + \bigg\vert \E  \sum_{i,j,k} \breve\sigma_{a i} \tau_{b j}S_{ki}G_{kc}G_{dj}G_{ji} \partial_{ki} (A_{abcd}^{p-1} \overline A_{abcd})^{p} \bigg\vert. 
\end{align}
To bound the second term on the right hand side of \eqref{eq:call_A1} we observe that
\begin{align}\label{eq:derivative_A}
\nonumber
& \partial_{ki} A_{abcd}  = - G_{ic}A_{abkd} - G_{dk}A_{abci}-\sum_{lh} \sigma_{al} \tau_{bh}G_{ld}G_{ch}G_{hk}G_{il} \\\nonumber
& \prec |G_{ic}|(\alpha + \delta_{kd}|A_{abdd}|) + |G_{dk}| (\alpha + \delta_{ci}|A_{abcc}|) + (\Psi^3 + \delta_{dl}W^{-1})(\Psi^3 + \delta_{ck}W^{-1})  \\
& \prec |G_{ic}|(\alpha + \delta_{kd}\Psi^4) + |G_{dk}| (\alpha + \delta_{ci}\Psi^4) + (\Psi^3 + \delta_{di}W^{-1})(\Psi^3 + \delta_{ck}W^{-1})
\end{align}
where we used \eqref{eq:bound_Y}, the trivial estimate $A_{abcd} \prec \Psi^4$ (where we exploited again \eqref{eq:bound_Y}) and the prior estimate \eqref{eq:apriori_A}. Note that a similar bound holds for $ \partial_{ki}\ol A_{abcd}$.
Thus, using \eqref{eq:derivative_A} and \eqref{eq:bound_Y}, we get
\begin{align*}
 \E  \sum_{i,j,k} \breve\sigma_{a i} \tau_{b j}S_{ki}G_{kc}G_{dj}G_{ji} \partial_{ki} (A_{abcd}^{p-1} \overline A_{abcd}^{p})  \prec (\alpha \Psi^5 + \Psi^{10})\E |A_{abcd}|^{2p-2},
\end{align*}
which implies, via Lemma \ref{lemma:moment_to_domination}, that for $c \neq d$ we have $
A_{abcd} \prec \Psi^5$ and in general 
\begin{align}\label{eq:A_bound}
A_{abcd} \prec \Psi^5 + \delta_{cd}\Psi^4.
\end{align}
Thus, by using \eqref{eq:A_bound} and \eqref{eq:bound_Y} in \eqref{eq:expansion_Z_pq} we get
\begin{align}\label{eq:temp_bound_Z}
\E |Z_{ab}|^{2p} \prec (\xi_{ab} + \Psi^4)\E |Z_{ab}|^{2p-1} + \Psi^8 \E |Z_{ab}|^{2p-2}. 
\end{align}
Therefore, Lemma \ref{lemma:moment_to_domination}
implies that $ Z_{ab} \prec \xi_{ab}+ \Psi^4$.
\end{proof}

\subsection{Chain estimates: proof of \eqref{eq:chain_bound}}
We will proceed by induction on $n$: thanks to \eqref{eq:bound_Y}, we know that $Y^{(1)}_{ab;u} \prec \Psi^{3} + \delta_{ab}\Psi^2$ and we assume as induction hypothesis that 
\begin{align}\label{eq:induction_hyp_Y}
Y_{ab;u_1 \cdots u_{n-1}}^{(n-1)} \prec \Psi^{2(n-1) + 1} + \delta_{ab}\Psi^{2(n-1)}.
\end{align}
In the following we will simplify the notation by setting $Y_{ab;u_1 \cdots u_{n}}^{(n)} \equiv Y_{ab}^{(n)}  $. 
For any fixed integer $p \geq 1$ Lemma \ref{lemma_self-consistent_equation_trick} implies that 
\begin{align}\label{eq:expansion_Yn}\nonumber
 \E|Y_{ab}^{(n)}|^{2p}  \prec & \,\,
\bigg\vert \E \sum_{i_2, \ldots, i_{n}} \breve \sigma_{u_1 i_2}^{(1)} \sigma_{u_2 i_2}^{(2)} \cdots \sigma_{u_n i_n}^{(n)} G_{ai_{2}} G_{i_2 i_3} \cdots G_{i_n b} (Y_{ab}^{(n)})^{p-1}(\overline Y_{ab}^{(n)})^{p} \bigg\vert \\\nonumber
& + \bigg\vert \E \sum_{i_1, \ldots, i_{n+1}}\breve \sigma_{u_1 i_1}^{(1)} \cdots \sigma_{u_n i_n}^{(n)} S_{i_1 i_{n+1}} \\\nonumber
&\qquad \times [G_{ai_1}(G_{i_{n+1}i_{n+1}}- \mathfrak{m}) G_{i_1 i_2} + G_{ai_{n+1}}(G_{i_{1}i_{1}}-\mathfrak{m}) G_{i_{n+1} i_2} ] G_{i_2 i_3} \cdots G_{i_n b} (Y_{ab}^{(n)})^{p-1}(\overline Y_{ab}^{(n)})^{p} \bigg\vert \\\nonumber
& + \bigg\vert \E \sum_{i_1, \ldots, i_{n+1}}\breve \sigma_{u_1 i_1}^{(1)} \cdots \sigma_{u_n i_n}^{(n)} S_{i_1 i_{n+1}}G_{ai_{n+1}} G_{i_1 i_2}[\partial_{i_1i_{n+1}}G_{i_2 i_3} \cdots G_{i_n b}] (Y_{ab}^{(n)})^{p-1}(\overline Y_{ab}^{(n)})^{p} \bigg\vert \\\nonumber
& + \bigg\vert \E \sum_{i_1, \ldots, i_{n+1}}\breve \sigma_{u_1 i_1}^{(1)} \cdots \sigma_{u_n i_n}^{(n)} S_{i_1 i_{n+1}}G_{ai_{n+1}} G_{i_1 i_2} G_{i_2 i_3} \cdots G_{i_n b} [\partial_{i_1i_{n+1}}(Y_{ab}^{(n)})^{p-1}(\overline Y_{ab}^{(n)})^{p}] \bigg\vert \\
& = (\mathrm{A}) + (\mathrm{B}) + (\mathrm{C}) + (\mathrm{D})\,.
\end{align}
Let us now deal with the terms $(\mathrm{A})$, $(\mathrm{B})$,  $(\mathrm{C})$ and $(\mathrm{D})$. To simplify the notation we rename $\breve\sigma^{(1)}$ by $\sigma^{(1)}$.

\begin{itemize}
\item[(A)]
Let us define the tensor $\Xi_{ab}^{c}:= W\sigma^{(1)}_{ab}\sigma^{(2)}_{cb}$. It is immediate to verify that for any $c$ we have that $\Xi^c \in \bb S$. Therefore, by using the induction hypothesis \eqref{eq:induction_hyp_Y} one has
\begin{align*}
(\mathrm{A}) & \prec 
W^{-1} \bigg\vert \E \sum_{i_2, \ldots, i_{n}} \Xi^{u_2}_{u_1 i_2} \sigma_{u_3 i_3}^{(3)} \cdots \sigma_{u_n i_n}^{(n)} G_{ai_{2}} G_{i_2 i_3} \cdots G_{i_n b} (Y_{ab}^{(n)})^{p-1}(\overline Y_{ab}^{(n)})^{p} \bigg\vert \\
& \prec (\Psi^{2}\E|Y^{(n-1)}_{ab}| |Y^{(n)}_{ab}|^{2p-1}) \prec (\Psi^{2n+1} + \delta_{ab}\Psi^{2n}) \E |Y^{(n)}_{ab}|^{2p-1}.
\end{align*}
\item[(B)] From \eqref{eq:induction_hyp_Y} we obtain 
\begin{align*}\nonumber
(\mathrm B) & = \bigg\vert \, \, \E \sum_{i_1, i_{n+1}}\sigma_{u_1 i_1}^{(1)} S_{i_1 i_{n+1}}
[(G_{i_{n+1}i_{n+1}}-\mathfrak m)G_{ai_1}Y_{i_1b}^{(n-1)} + (G_{i_{1}i_{1}}-\mathfrak m)G_{ai_{n+1}}Y_{i_{n+1}b}^{(n-1)}](Y_{ab}^{(n)})^{p-1}(\overline Y_{ab}^{(n)})^{p} \bigg\vert \\
&  \prec \Psi^{2n+1} \E |Y_{ab}^{(n)}|^{2p-1}.
\end{align*}
\item[(C)] Using \eqref{eq:induction_hyp_Y}, one has
\begin{align*}\nonumber
(\mathrm C) \leq & \, \, \bigg\vert \E \sum_{i_1, \ldots, i_{n+1}}\sigma_{u_1 i_1}^{(1)} \cdots \sigma_{u_n i_n}^{(n)} S_{i_1 i_{n+1}}G_{ai_{n+1}} G_{i_1 i_2} \sum_{k=2}^{n-1}G_{i_k i_1}G_{i_{n+1} i_{k+1}} \bigg( \prod_{\substack{l=2 \\ l \neq k}}^{n-1} G_{i_l i_{l+1}} \bigg) G_{i_n b} (Y_{ab}^{(n)})^{p-1}(\overline Y_{ab}^{(n)})^{p} \bigg\vert \\\nonumber
& + \bigg\vert \E \sum_{i_1, \ldots, i_{n+1}}\sigma_{u_1 i_1}^{(1)} \cdots \sigma_{u_n i_n}^{(n)} S_{i_1 i_{n+1}}G_{ai_{n+1}} G_{i_1 i_2} \cdots G_{i_ni_1}G_{i_{n+1} b} (Y_{ab}^{(n)})^{p-1}(\overline Y_{ab}^{(n)})^{p} \bigg\vert \\\nonumber
\prec & \, \sum_{k = 1}^{n-1} \E \sum_{i_1} |\sigma_{u_1 i_1}^{(1)}||Y^{(k)}_{i_1i_1}|| Y_{ab}^{(n-k)}| |Y_{ab}^{(n)}|^{2p-1} \prec \, (\Psi^{2n+1} + \delta_{ab}\Psi^{2n})\E|Y^{(n)}_{ab}|^{2p-1}
\end{align*} 

\item[(D)] Note that
\begin{align}\label{eq:diff_Yn1}
\partial_{i_1i_{n+1}} Y^{(n)}_{ab}& = - \sum_{k =0}^{n} Y^{(k)}_{ai_1} Y^{(n-k)}_{i_{n+1}b}, \ \ \
\partial_{i_1i_{n+1}} \overline Y^{(n)}_{ab} = - \sum_{k =0}^{n} \overline Y^{(k)}_{ai_{n+1}} \overline Y^{(n-k)}_{i_{1}b} 
\end{align}
where $Y^{(0)}_{ab} := G_{ab}$. Therefore we get for some $ \xi \in \bb S$ (ignoring the complex conjugation which does not play any role here)
\begin{align*}
(\mathrm D) & \prec \bigg\vert \E \sum_{k=0}^{n}\sum_{i_1, i_{n+1}}\sigma_{u_1 i_1}^{(1)} S_{i_1 i_{n+1}}G_{ai_{n+1}} Y_{ai_1}^{(k)} Y_{i_{n+1}b}^{(n-k)} Y_{i_{1}b}^{(n-1)} (Y_{ab}^{(n)})^{2p-2}\bigg\vert \\\nonumber
& \prec \bigg\vert \E \sum_{k=1}^{n-1}\sum_{i_1, i_{n+1}}\sigma_{u_1 i_1}^{(1)} S_{i_1 i_{n+1}}G_{ai_{n+1}} Y_{ai_1}^{(k)} Y_{i_{n+1}b}^{(n-k)} Y_{i_{1}b}^{(n-1)} (Y_{ab}^{(n)})^{2p-2} \bigg\vert \\\nonumber
& + \bigg\vert \E\sum_{i_1 i_{n+1}}\sigma_{u_1, i_1}^{(1)} S_{i_1 i_{n+1}}G_{ai_{n+1}}[ G_{ai_1} Y_{i_{n+1}b}^{(n)} + G_{i_{n+1}b} Y_{ai_1}^{(n)}] Y_{i_{1}b}^{(n-1)} (Y_{ab}^{(n)})^{2p-2} \bigg\vert \\\nonumber
& \prec (\Psi^{4n+2} + \delta_{ab}\Psi^{4n})\E |Y_{ab}^{(n)}|^{2p-2} + (\mathrm E)
\end{align*}
where
 $$(\mathrm E) = \E\sum_{i_1, \ldots, i_{n+1}}\sigma_{u_1 i_1}^{(1)}S_{i_1 i_{n+1}}G_{a i_{n+1}}[ G_{a i_1} Y_{i_{n+1}b}^{(n)} + G_{i_{n+1}b} Y_{ai_1}^{(n)}] Y_{i_{1}b}^{(n-1)} (Y_{ab}^{(n)})^{2p-2}\,.
 $$
In order to bound (E) let suppose that $a \neq b$ and that 
\begin{align}\label{eq:Y_n_apriori}
Y_{ab}^{(n)} \prec \lambda^{(n)}, \ \ \ \lambda^{(n)} \in [\Psi^{2n+1}, L^C]. 
\end{align}
Note that from \eqref{eq:induction_hyp_Y} we can easily get that $Y_{xy}^{(n)} \prec \Psi^{2n}$ for any $ x, y \in \bb T_L$, so from \eqref{eq:Y_n_apriori} we have
\begin{align*}
& (\mathrm E) \prec (\lambda^{(n)} \Psi^{2n+1} + \Psi^{4n+2})\E|Y_{ab}^{(n)}|^{2p-2}, \ \ \ a \neq b \\\nonumber
& (\mathrm E) \prec \Psi^{4n} \E|Y_{ab}^{(n)}|^{2p-2}, \ \ \ a = b.
\end{align*}
\end{itemize}

Putting together the contribution form all the terms (A), (B), (C), (D), (E) and using Lemma \ref{lemma:moment_to_domination}, we can conclude that $Y_{ab}^{(n)} \prec \Psi^{2n+1} + \delta_{ab}\Psi^{2n}$.

\subsection{Loop estimates: proof of \eqref{eq:loop_bound}}
As for \eqref{eq:chain_bound}, we will proceed by induction on $n$: from \eqref{eq:bound_Z} we know that $Z^{(2)}_{ab} \prec \Psi^{4} + \xi_{ab}$ where $\xi \in \bb S$. 
As induction hypothesis, let us assume that
\begin{align}\label{eq:induction_hyp_Z}
Z^{(n-1)}_{ab;u_3 \cdots u_{n-1}} \prec \Psi^{2(n-1)} + \xi_{ab} \Psi^{2(n-3)}, \ \ \ \xi \in \bb S.
\end{align}
We need the following auxiliary lemma. Recall that $ g^\flat = g - \E g$. 
\begin{lemma}\label{lemma:bound_B}
Let
\begin{align*}
B^{(n)}_{ab}:= \sum_{i_1, \ldots, i_n, i_{n+1}} \sigma_{a i_1}^{(1)} \sigma_{b i_2}^{(2)} \cdots \sigma_{u_n i_n}^{(n)} S_{i_1 i_{n+1}} G_{i_1 i_1}^\flat G_{i_{n+1} i_{2}} G_{i_n i_{n+1}} G_{i_2 i_3} \cdots G_{i_{n-1} i_n}
\end{align*}
with $ {\sigma}^{(1)}, \sigma^{(2)}, \ldots, \sigma^{(n)} \in \bb S$. Then
\begin{align}\label{eq:B_bound}
B^{(n)}_{ab} \prec \Psi^{2n}.
\end{align}
\end{lemma}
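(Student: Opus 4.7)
The proof proceeds by induction on $n \geq 2$. The base case $n = 2$ coincides, after renaming the $\sigma$'s, with the quantity $B_{ab}$ of \eqref{eq:B_def}, for which the bound $B_{ab} \prec \Psi^4$ was established in \eqref{eq:B_basic_bound}. For the inductive step, assume \eqref{eq:B_bound} holds for all smaller values of $n$, fix an integer $p \geq 1$, and estimate $\E|B^{(n)}_{ab}|^{2p}$ by the moment method. The key is to apply Lemma \ref{lemma_self-consistent_equation_trick} to the innermost summation $\sum_{i_{n+1}} S_{i_1 i_{n+1}} G_{i_{n+1} i_2} G_{i_n i_{n+1}}$, regarded as matching the template $\sum_i \sigma_{ai} G_{ib}G_{ci}$ with $a \mapsto i_1$, $b \mapsto i_2$, $c \mapsto i_n$, $\sigma \mapsto S$. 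The accompanying polynomial $\fra p$ collects $G^\flat_{i_1 i_1}$, the remaining $G$-factors of $B^{(n)}_{ab}$, and the moment weight $(B^{(n)}_{ab})^{p-1}(\overline B^{(n)}_{ab})^p$.

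The resulting expansion splits into three families of terms. The \emph{contracted leading term}, in which the inner loop collapses to $\xi_{i_1 i_2} G_{i_n i_2}$ with $\xi \in \bb S$, turns the residual $G$-structure into an open chain of order $n-2$ multiplied by $G^\flat_{i_1 i_1}$; combining the chain estimate \eqref{eq:chain_bound} with the $X$-estimate \eqref{eq:X_bound} applied to the summation $\sum_{i_1} (\sigma^{(1)}\xi)_{a i_2}\, G^\flat_{i_1 i_1}$ — which is legitimate since $\bb S$ is closed under matrix multiplication — yields a contribution of order $\Psi^{2n}\,\E|B^{(n)}_{ab}|^{2p-1}$. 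The \emph{stability terms}, carrying an additional $(G_{jj}-\fra m)$ factor, gain one $\Psi$ from this factor plus a row-sum or Ward-identity \eqref{eq:ward_identiy} factor and land at the same order. The \emph{derivative terms} from $\partial_{j i_{n+1}}\fra p$ are sorted by which factor of $\fra p$ is differentiated: hitting one of the $G_{i_k i_{k+1}}$ produces new chains and loops bounded by \eqref{eq:chain_bound}–\eqref{eq:loop_bound}; hitting $G^\flat_{i_1 i_1}$ converts it into a pair of off-diagonal $G$'s that get absorbed into the chain structure; and hitting the moment power produces, through the chain rule, factors $\partial_{j i_{n+1}} B^{(n)}_{ab}$ which decompose into sums of chains, loops, and $B^{(k)}$-like subexpressions for $k \leq n-1$, controlled via \eqref{eq:chain_bound}–\eqref{eq:loop_bound} and the inductive hypothesis. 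Assembling all contributions gives
\[
\E|B^{(n)}_{ab}|^{2p} \;=\; \sum_{k=1}^{2p} O_\prec\!\bigl((\Psi^{2n})^k\bigr)\,\E|B^{(n)}_{ab}|^{2p-k},
\]
and Lemma \ref{lemma:moment_to_domination} with $\varphi = \Psi^{2n}$ yields the conclusion $B^{(n)}_{ab} \prec \Psi^{2n}$.

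\textbf{Main obstacle.} The delicate part is the bookkeeping of the derivative terms, especially those produced when $\partial_{j i_{n+1}}$ acts on $(B^{(n)}_{ab})^{p-1}(\overline B^{(n)}_{ab})^p$: each chain-rule factor $\partial_{j i_{n+1}} B^{(n)}_{ab}$ further splits into many topological configurations of partial chains, partial loops, and lower-order $B^{(k)}$-pieces. Verifying that every configuration contributes at scale at most $\Psi^{2n}$, and in particular that the potentially dangerous $\xi_{ab}\Psi^{2(n-2)}$ piece of the loop bound \eqref{eq:loop_bound} is systematically absorbed either by a row-sum gain from $\bb S$ (closure under multiplication yields $W^{-1} \prec \Psi^2$) or by a $\Psi^2$ gain from Ward's identity, parallels — though in more elaborate combinatorial form — the treatment of $B_{ab}$ in \eqref{eq:expansion_B_pq}, where the analogous issue for $n = 2$ was resolved through the auxiliary quantity $A_{abcd}$.
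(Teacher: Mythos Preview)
Your plan diverges from the paper at the very first step: you apply Lemma~\ref{lemma_self-consistent_equation_trick} to the summation over $i_{n+1}$ (with $\sigma=S$), whereas the paper applies it to the summation over the \emph{last chain index} $i_n$ (with $\sigma=\sigma^{(n)}$). This is not a cosmetic difference --- your choice of expansion variable destroys the inductive structure and leaves a term that falls short of $\Psi^{2n}$.

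Concretely, consider the derivative contribution in which $\partial_{j i_{n+1}}$ hits an intermediate factor $G_{i_k i_{k+1}}$ of $\fra p$ (say $2\le k\le n-1$). The resulting $G$-product splits into two closed cycles linked through the vertex $i_{n+1}$: a cycle on $\{j,i_2,\dots,i_k\}$ of length $k$, and a cycle on $\{i_{n+1},i_{k+1},\dots,i_n\}$ of length $n-k+1$, together with the isolated $G^\flat_{i_1 i_1}$. The vertex $i_{n+1}$ now carries \emph{two} couplings, $\xi_{i_1 i_{n+1}}$ (from the expansion) and $S_{i_{n+1}j}$ (the new cumulant weight). Because both couplings sit on the same summed vertex, neither a clean $B^{(m)}$--factor nor a clean $X$--type sum can be isolated: if you try to read off $B^{(n-k+1)}$ from the second cycle plus $G^\flat_{i_1 i_1}$, the leftover $S_{i_{n+1}j}$ leaves $j$ without a weight; if instead you bound both cycles by chains ($Y^{(k-1)}_{jj}$ and $Y^{(n-k)}_{i_{n+1}i_{n+1}}$) and use $|G^\flat_{i_1 i_1}|\prec\Psi$, the outcome is only $\Psi^{2(k-1)}\cdot\Psi^{2(n-k)}\cdot\Psi=\Psi^{2n-1}$. (For $n=3$, $k=2$ one checks directly that the term is $\prec\Psi^5$, not $\Psi^6$.) Since this contributes $\Psi^{2n-1}\,\E|B^{(n)}_{ab}|^{2p-1}$ to the moment expansion, Lemma~\ref{lemma:moment_to_domination} cannot yield $B^{(n)}_{ab}\prec\Psi^{2n}$.

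The paper avoids this by expanding in a chain index. Then the original link $S_{i_1 i_{n+1}}$ stays inside $\fra p$ untouched, and after the derivative hits an intermediate $G$-factor, one of the two resulting cycles contains $i_{n+1}$ and is joined to $G^\flat_{i_1 i_1}$ through $S_{i_1 i_{n+1}}$ --- this is precisely a lower-order $B^{(m)}$ with $m\le n-1$, and the \emph{single} new weight $S_{i_{n-1}i_{n+2}}$ attaches to it via the external parameter $i_{n-1}$. The induction hypothesis on $B^{(m)}$ then gives $\Psi^{2m}$, and the remaining chain contributes $\Psi^{2(n-m)}$, yielding $\Psi^{2n}$ (this is the paper's term~(E)).

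A second, smaller issue: you invoke the loop bound \eqref{eq:loop_bound}, but in the paper's logical order the loop estimate (proved in Section~7.2, term~(C)) \emph{requires} Lemma~\ref{lemma:bound_B} as input, so citing it here is circular. The paper's proof of Lemma~\ref{lemma:bound_B} uses only the chain bound \eqref{eq:chain_bound} and the trivial consequence $Z^{(m)}\prec\Psi^{2(m-1)}$.
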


\begin{proof}
See Section \ref{sec:bound_B}. 
\end{proof}

In the following we will adopt the simplified notation $Z^{(n)}_{ab} \equiv Z^{(n)}_{ab;u_3 \cdots u_{n}} $. As before, from Lemma \ref{lemma_self-consistent_equation_trick} for any $p \geq 1$ we deduce that 
\begin{align*}\nonumber
& \E |Z^{(n)}_{ab}|^{2p} \prec \bigg\vert \E \sum_{i_2, \ldots, i_n} \breve\sigma_{a i_2}^{(1)} \sigma_{b i_2}^{(2)} \cdots \sigma_{u_n i_n}^{(n)} G_{i_2 i_3} \cdots G_{i_{n-1} i_n} G_{i_n i_2}(Z_{ab}^{(n)})^{p-1}(\overline Z_{ab}^{(n)})^{p} \bigg\vert \\\nonumber
&  +\bigg\vert \E \sum_{i_1, \ldots, i_{n+1}} \breve\sigma_{a i_1}^{(1)} \sigma_{b i_2}^{(2)} \cdots \sigma_{u_n i_n}^{(n)} S_{i_1 i_{n+1}} (G_{i_{n+1}i_{n+1}} - \mathfrak m) G_{i_ni_1}G_{i_1 i_2} G_{i_2 i_3} \cdots G_{i_{n-1} i_n} (Z_{ab}^{(n)})^{p-1}(\overline Z_{ab}^{(n)})^{p} \bigg\vert \\\nonumber
& +\bigg\vert \E \sum_{i_1, \ldots, i_{n+1}} \breve\sigma_{a i_1}^{(1)} \sigma_{b i_2}^{(2)} \cdots \sigma_{u_n i_n}^{(n)} S_{i_1 i_{n+1}}(G_{i_1 i_1} - \mathfrak m)  G_{i_n i_{n+1}} G_{i_{n+1} i_{2}}G_{i_2 i_3} \cdots G_{i_{n-1} i_n} (Z_{ab}^{(n)})^{p-1}(\overline Z_{ab}^{(n)})^{p}\bigg\vert \\\nonumber
& +\bigg\vert \E \sum_{i_1, \ldots, i_{n+1}} \breve\sigma_{a i_1}^{(1)} \sigma_{b i_2}^{(2)} \cdots \sigma_{u_n i_n}^{(n)} S_{i_1 i_{n+1}}G_{i_n i_1}G_{i_{n+1} i_2} \sum_{k = 2}^{n-1} G_{i_k i_{n+1}} G_{i_1 i_{k+1}} \bigg( \prod_{\substack{l = 2 \\ l \neq k}}^{n-1} G_{i_l i_{l+1}} \bigg) (Z_{ab}^{(n)})^{p-1}(\overline Z_{ab}^{(n)})^{p} \bigg\vert \\\nonumber
& + \bigg\vert \E \sum_{i_1, \ldots, i_{n+1}} \breve\sigma_{a i_1}^{(1)} \sigma_{b i_2}^{(2)} \cdots \sigma_{u_n i_n}^{(n)} S_{i_1 i_{n+1}} G_{i_{n+1} i_2} G_{i_2 i_3} \cdots G_{i_{n-1} i_n} G_{i_n i_1} \partial_{ i_{n+1}i_1}(Z_{ab}^{(n)})^{p-1}(\overline Z_{ab}^{(n)})^{p} \bigg\vert \\
& = (\mathrm A) + (\mathrm B) + (\mathrm C) + (\mathrm D) + (\mathrm E)\,.
\end{align*}
In the following we simplify the notation by renaming $ \breve \sigma^{(1)}$ as $\sigma^{(1)}$.
Let us now look at the different terms. 
\begin{itemize}
\item[(A)] By using \eqref{eq:chain_bound} and the properties of the matrix family $\bb S$, for some $\xi \in \bb S$ one easily gets that
\begin{align*}
(\mathrm A) \prec \Psi^{2(n-2)}\sum_{i_2}  |\sigma_{ai_2}^{(1)}| |\sigma_{bi_2}^{(2)}| \E |Z^{(n)}_{ab}|^{2p-1} \prec \xi_{a b}\Psi^{2(n-2)} \E |Z^{(n)}_{ab}|^{2p-1}.
\end{align*}

\item[(B)] We note that
\begin{align*}
(\mathrm B) \prec \bigg\vert \E \sum_{i_1 ,i_{n+1}} \sigma_{ai_1}^{(1)} S_{i_1i_{n+1}} (G_{i_{n+1}i_{n+1}} - \mathfrak m) Y^{(n-1)}_{i_1 i_1 } (Z_{ab}^{(n)})^{p-1}(\overline Z_{ab}^{(n)})^{p} \bigg\vert,
\end{align*}
thus from \eqref{eq:X_bound} and \eqref{eq:chain_bound} we get $(\mathrm B) \prec \Psi^{2n} \E |Z_{cd}^{(n)}|^{2p-1}.$

\item[(C)] 
From Lemma \ref{lemma:bound_B} and the fact that $\E G_{kk} - \mathfrak{m} \prec \Psi^2$, one can conclude that $(\mathrm C) \prec \Psi^{2n} \E |Z_{ab}^{(n)}|^{2p-1}$.

\item[(D)] We note that
\begin{align*}
(\mathrm D) & \prec \bigg\vert \E \sum_{k=2}^{n-1} \sum_{i_1} \sigma_{a i_1}^{(1)} Z^{(k)}_{i_1 b} Y^{(n-k)}_{i_1 i_1}(Z_{ab}^{(n)})^{p-1}(\overline Z_{ab}^{(n)})^{p} \bigg\vert \prec \bigg( \Psi^{2n} + \Psi^{2(n-2)}\sum_{i_1}  |\sigma_{ai_1}^{(1)}| \xi'_{i_1b} \bigg) \E |Z_{ab}^{(n)}|^{2p-1} \\
& \prec (\Psi^{2n} + \xi_{ab}\Psi^{2(n-2)})\E |Z_{ab}^{(n)}|^{2p-1}
\end{align*}
where we used \eqref{eq:induction_hyp_Z} and \eqref{eq:chain_bound}. Here $\xi_{ab} = \sum_{i} |\sigma_{ai}^{(1)}|\xi'_{ib}$ and $\xi, \xi' \in \bb S$. 

\item[(E)] Note that
\begin{align*}
& \partial_{i_{n+1}i_1} Z^{(n)}_{ab} = - \sum_{k=1}^{n} \sum_{j_1, \ldots,j_n} \sigma_{a j_1}^{(1)} \sigma_{b j_2}^{(2)} \cdots \sigma_{u_n j_n}^{(n)}  G_{j_k i_{n+1}} G_{i_1 j_{k+1}} \prod_{l = 1, l \neq k}^{n} G_{j_l j_{l+1}} \bigg \vert_{j_{n+1}= j_1} \\\nonumber
& \partial_{i_{n+1}i_1} \overline{Z}^{(n)}_{ab} 
= - \sum_{k=1}^{n} \sum_{j_1, \ldots, j_n} \sigma_{a j_1}^{(1)} \sigma_{b j_2}^{(2)} \cdots \sigma_{u_n j_n}^{(n)} 
\overline{G}_{j_k i_{1}}
 \overline{G}_{i_{n+1} j_{k+1}} \prod_{l = 1, l \neq k}^{n} \overline G_{j_l j_{l+1}} \bigg \vert_{j_{n+1}= j_1}, 
\end{align*}
thus from \eqref{eq:chain_bound} we get
\begin{align*}
(\mathrm E) \prec \E \sum_{i_i, i_{n+1}}  |\sigma_{ai_1}^{(1)}| S_{i_1 i_{n+1}} |Y_{i_1 i_{n+1}}^{(n)}|| Y_{i_i i_{n+1}}^{(n-1)}| |Z_{ab}^{(n)}|^{2p-2} \prec \Psi^{4n}\E|Z_{ab}^{(n)}|^{2p-2}.
\end{align*}
\end{itemize}

Collecting all the terms and applying Lemma \ref{lemma:moment_to_domination} we get the claim.

\subsection{Proof of Lemma \ref{lemma:bound_B}}\label{sec:bound_B}
From \eqref{eq:B_basic_bound} we know that $B^{(2)}_{ab} \prec \Psi^4$. We want to show that $B^{(n)}_{ab} \prec \Psi^{2n}$ and we will proceed by induction by setting the hypothesis
\begin{align}\label{eq:induction_hyp_B}
B^{(n-1)}_{ab} \prec \Psi^{2(n-1)}.
\end{align}
For any fixed integer $p \geq 1$ we consider the moment $ \E|B_{ab}|^{2p}$ 
and by invoking Lemma \ref{lemma_self-consistent_equation_trick}, 
we get
\begin{align*}
\E |B^{(n)}_{ab}|^{2p} \prec (\mathrm A) + (\mathrm B) + (\mathrm C) + (\mathrm D) + (\mathrm E) + (\mathrm F)
\end{align*}
where the explicit form of (A), (B), (C), (D), (E) and (F) is given below.
We proceed term by term. 
\begin{itemize}
\item[(A)] By using the induction hypothesis \eqref{eq:induction_hyp_B} and the properties of the matrices in $\bb S$ one has
\begin{align*}
 (\mathrm A) =& \,\, \bigg\vert \E \sum_{i_1, \ldots, i_{n-1}, i_{n+1}} \sigma_{a i_1}^{(1)} \sigma_{b i_2}^{(2)} \cdots \sigma_{u_{n-1} i_{n-1}}^{(n-1)} \breve\sigma_{u_n i_{n-1}}^{(n)} S_{i_1 i_{n+1}} \\\nonumber
 & \times  G_{i_1 i_1}^\flat  G_{i_{n-1} i_{n+1}} G_{i_{n+1} i_{2}} G_{i_2 i_3} \cdots G_{i_{n-2} i_{n-1}}( B^{(n)}_{ab})^{p-1} (\overline B^{(n)}_{ab})^{p} \bigg\vert \\\nonumber
 \prec & \,\, W^{-1} \Psi^{2(n-1)}\E |B_{ab}^{(n)}|^{2p-1} \prec \Psi^{2n}\E |B_{ab}^{(n)}|^{2p-1} \,.
\end{align*}
\item[(B)] From \eqref{eq:chain_bound} we have
\begin{align*}\nonumber
(\mathrm B) =& \,\, \bigg\vert \E \sum_{i_1, \ldots, i_{n}, i_{n+1}, i_{n+2}} \sigma_{a i_1}^{(1)} \sigma_{b i_2}^{(2)} \cdots \sigma_{u_{n-1} i_{n-1}}^{(n-1)} \breve\sigma_{u_n i_{n}}^{(n)} S_{i_1 i_{n+1}}S_{i_{n-1}i_{n+2}} \\\nonumber
& \times G_{i_1 i_1}^\flat  G_{i_{n} i_{n+1}} G_{i_{n+1} i_{2}} G_{i_2 i_3} \cdots G_{i_{n-3} i_{n-2}}G_{i_{n-2}i_{n-1}}G_{i_{n-1}i_n} (G_{i_{n+2}i_{n+2}}-\mathfrak m)( B^{(n)}_{ab})^{p-1} (\overline B^{(n)}_{ab})^{p} \bigg\vert \\
\prec & \,\, \E \sum_{i_1 i_{n-1} i_{n+2}} | \sigma_{a i_1}^{(1)} \sigma_{u_{n-1}i_{n-1}}^{(n-1)} S_{i_{n-1}i_{n+2}} G_{i_1 i_1}^\flat (G_{i_{n+2}i_{n+2}}-\mathfrak m)Y^{(n-1)}_{i_{n-1}i_{n-1}}| |B^{(n)}_{ab}|^{2p-1}\prec \Psi^{2n}\E|B^{(n)}_{ab}|^{2p-1} \,.
\end{align*}
\item[(C)] Similarly from \eqref{eq:chain_bound} we have
\begin{align*}\nonumber
(\mathrm C) =& \, \bigg\vert \E \sum_{i_1, \ldots, i_{n+2}} \sigma_{a i_1}^{(1)} \sigma_{b i_2}^{(2)} \cdots \sigma_{u_{n-1} i_{n-1}}^{(n-1)} \breve\sigma_{u_n i_{n}}^{(n)} S_{i_1 i_{n+1}} S_{i_{n-1}i_{n+2}}\\\nonumber
& \times G_{i_1 i_1}^\flat  G_{i_{n} i_{n+1}} G_{i_{n+1} i_{2}} G_{i_2 i_3} \cdots G_{i_{n-3} i_{n-2}}G_{i_{n-2}i_{n+2}} G_{i_{n+2}i_n}(G_{i_{n-1}i_{n-1}}-\mathfrak m)( B^{(n)}_{ab})^{p-1} (\overline B^{(n)}_{ab})^{p} \bigg\vert \\
\prec & \,\, \E \sum_{i_1, i_{n-1}, i_{n+2}} | \sigma_{a i_1}^{(1)} \sigma_{u_{n-1}i_{n-1}}^{(n-1)} S_{i_{n-1}i_{n+2}} G_{i_1 i_1}^\flat (G_{i_{n-1}i_{n-1}}-\mathfrak m)Y^{(n-1)}_{i_{n+2}i_{n+2}}| |B^{(n)}_{ab}|^{2p-1}\prec \Psi^{2n}\E|B^{(n)}_{ab}|^{2p-1} \,.
\end{align*}
\item[(D)] By \eqref{eq:chain_bound} one has
\begin{align*}\nonumber
(\mathrm D) =& \,\, \bigg\vert \E \sum_{i_1, \ldots, i_{n+2}} \sigma_{a i_1}^{(1)} \sigma_{b i_2}^{(2)} \cdots \sigma_{u_{n-1} i_{n-1}}^{(n-1)}\breve \sigma_{u_n i_{n}}^{(n)} S_{i_1 i_{n+1}} S_{i_{n-1}i_{n+2}}\\\nonumber
& \times G_{i_1 i_{n+2}} G_{i_{n+2}i_n}  G_{i_{n} i_{n+1}} G_{i_{n+1} i_{2}} G_{i_2 i_3} \cdots G_{i_{n-3} i_{n-2}}G_{i_{n-2}i_{n-1}}G_{i_{n-1}i_1} ( B^{(n)}_{ab})^{p-1} (\overline B^{(n)}_{ab})^{p} \bigg\vert \\
 \prec & \,\, \E \sum_{i_1, i_{n+2}, i_{n-1}} |\sigma_{a i_1}^{(1)} \sigma_{u_{n-1} i_{n-1}}^{(n-1)} S_{i_{n-1}i_{n+2}} G_{i_1 i_{n+2}} G_{i_{n-1}i_1}Y^{(n-1)}_{i_{n+2}i_{n-1}}( B^{(n)}_{ab})^{p-1} (\overline B^{(n)}_{ab})^{p}| \prec \Psi^{2n} \E|B_{ab}^{(n)}|^{2p-1}\,.
\end{align*}
\item[(E)] Again using \eqref{eq:chain_bound}, one gets
\begin{align*}\nonumber
(\mathrm E) = & \,\, \bigg\vert \E \sum_{i_1, \ldots, i_{n+2}} \sigma_{a i_1}^{(1)} \sigma_{b i_2}^{(2)} \cdots \sigma_{u_{n-1} i_{n-1}}^{(n-1)} \breve\sigma_{u_n i_{n}}^{(n)} S_{i_1 i_{n+1}} S_{i_{n-1}i_{n+2}}\\\nonumber
& \times G_{i_1 i_1}^\flat G_{i_{n+2}i_n} G_{i_{n-2} i_{n-1}} \bigg[\sum_{k = 0}^{n-3}G_{i_k i_{n+2}}G_{i_{n-1}i_{k+1}} \bigg(\prod_{\substack{l=0 \\ l \neq k}}^{n-3} G_{i_l i_{l+1}}  \bigg)\bigg \vert_{i_0 = i_n} \bigg]( B^{(n)}_{ab})^{p-1} (\overline B^{(n)}_{ab})^{p} \bigg\vert \\\nonumber
 \prec & \,\, \E \sum_{k=0}^{n-3}\sum_{i_{n-1}}|\sigma_{u_{n-1}i_{n-1}}^{(n-1)} B_{ab}^{(k+2)}Y^{(n-2-k)}_{i_{n-1}i_{n-1}}||B^{(n)}_{ab}|^{2p-1} \prec \Psi^{2n}\E |B^{(n)}_{ab}|^{2p-1} \,.
\end{align*}
\item[(F)] We note that
\begin{align}\nonumber
& \partial_{i_{n+2}i_{n-1}} B_{ab}^{(n)} = \sum_{j_1, \ldots, j_n, j_{n+1}} \sigma_{a i_1}^{(1)} \sigma_{b i_2}^{(2)} \cdots \sigma_{u_n i_n}^{(n)} S_{j_1 j_{n+1}} \\\nonumber
& \times \bigg[G_{j_1i_{n+2}}G_{i_{n-1}j_1} \bigg( \prod_{l=2}^{n+1} G_{j_l j_{l+1}} \bigg \vert_{j_{n+2} = j_2 } \bigg) + G_{j_1 j_1}^\flat  \sum_{k=2}^{n+1} G_{j_k i_{n+2}} G_{i_{n-1}j_{k+1}} \bigg( \prod_{l =2, l \neq k}^{n+1} G_{j_l j_{l+1}}\bigg \vert_{j_{n+2} = j_2} \bigg) \bigg] \label{eq:diff_Bn1}\\
& \partial_{i_{n+2}i_{n-1}} \overline B_{ab}^{(n)} = \sum_{j_1, \ldots, j_n, j_{n+1}} \sigma_{a i_1}^{(1)} \sigma_{b i_2}^{(2)} \cdots \sigma_{u_n i_n}^{(n)} S_{j_1 j_{n+1}} \\
& \times \bigg[\overline G_{j_1i_{n-1}} \overline G_{i_{n+2}j_1} \bigg( \prod_{l=2}^{n+1} \overline G_{j_l j_{l+1}} \bigg \vert_{j_{n+2} = j_2 } \bigg) + \overline G_{j_1 j_1}^\flat  \sum_{k=2}^{n+1} \overline G_{j_k i_{n-1}} \overline G_{i_{n+2}j_{k+1}} \bigg( \prod_{l =2, l \neq k}^{n+1} \overline G_{j_l j_{l+1}}\bigg \vert_{j_{n+2} = j_2} \bigg) \bigg] \label{eq:diff_Bn2},
\end{align}
thus from \eqref{eq:diff_Bn1} and \eqref{eq:diff_Bn2} and by using the trivial bound $ Z_{ab}^{(n)} \prec \Psi^{2(n-1)}$ we get
\begin{align*}\nonumber
(\mathrm F) =& \,\bigg\vert \E \sum_{i_1, \ldots, i_{n+2}} \sigma_{a i_1}^{(1)} \sigma_{b i_2}^{(2)} \cdots \breve\sigma_{u_n i_n}^{(n)} S_{i_1 i_{n+1}} S_{i_{n+2} i_{n-1}} \\\nonumber
& \times G_{i_1 i_1}^\flat  G_{i_{n+2} i_n} G_{i_n i_{n+1}} G_{i_{n+1} i_{2}}G_{i_2 i_3} \cdots G_{i_{n-2}i_{n-1}} \partial_{i_{n+2}i_{n-1}}( B^{(n)}_{ab})^{p-1} (\overline B^{(n)}_{ab})^{p} \bigg\vert \\\nonumber
\prec & \,\E \sum_{i_1, j_1, i_{n-1}, i_{n+2}} |\sigma_{ai_1}^{(1)} \sigma_{a j_1}^{(1)} \sigma_{u_{n-1}i_{n-1}}^{(n-1)} S_{i_{n+2} i_{n-1}} G_{i_1 i_1}^\flat | \\\nonumber
& \times (|G_{j_1 i_{n+2}} G_{i_{n-1}j_1}| + |G_{j_1 i_{n-1}} G_{i_{n+2}}j_1|)| Y^{(n-1)}_{i_{n+2} i_{n-1}} Z^{(n)}_{j_1b} | |B^{(n)}_{ab}|^{2p-2} \\\nonumber
& + \E \sum_{i_1, j_1, i_{n-1}, i_{n+2}} |\sigma_{ai_1}^{(1)} \sigma_{a j_1}^{(1)} \sigma_{u_{n-1}i_{n-1}}^{(n-1)} S_{i_{n+2} i_{n-1}} G_{i_1 i_1}^\flat G_{j_1 j_1}^\flat Y^{(n-1)}_{i_{n+2} i_{n-1}} Y^{(n)}_{i_{n+2} i_{n-1}} | |B^{(n)}_{ab}|^{2p-2} \\\nonumber
& \prec \Psi^{4n} \E |B^{(n)}_{ab}|^{2p-2}\,.
\end{align*}
\end{itemize}
In conclusion, we obtained $\E|B^{(n)}_{ab}|^{2p} \prec \Psi^{2n}\E |B^{(n)}_{ab}|^{2p-1} + \Psi^{4n}\E |B^{(n)}_{ab}|^{2p-2}$ which implies the desired estimate thanks to Lemma \ref{lemma:moment_to_domination}.

\subsection{Proof of \eqref{eq:X_bound}}
Since from \eqref{eq:bound_calG} we have that $ \E \sum_a \sigma_{ia} (G_{aa} - \mathfrak m) \prec \Psi^2$, to prove the claim we have to estimate
$$
X_i' = \sum_a \sigma_{ia} G_{aa}^\flat.
$$
We proceed as in the proof of Lemma \ref{lemma_self-consistent_equation_trick}:  consider $ z\bb E|X_i'|^{2p}= \E \sum_a \sigma_{ia}  z G_{aa}^\flat (X_i')^{p-1}(\overline X_i')^{p}$. The cumulant expansion yields

\begin{align}\label{eq:expansion_X_ji}
\nonumber
& z\bb E|X_i'|^{2p}= -\E \sum_{a,b} \sigma_{ia}S_{ab} (G_{aa} G_{bb})^\flat (X_i')^{p-1}(\overline X_i')^{p} + \E \sum_{a,b} \sigma_{ia}S_{ab} G_{ba} \partial_{ba} (X_i')^{p-1}(\overline X_i')^{p} \\\nonumber
& = - \E \sum_{a,b} \sigma_{ia}S_{ab}[G_{aa}^\flat G_{bb}^\flat - \E G_{aa}^\flat G_{bb}^\flat +  G_{aa}^\flat \E G_{bb} + G_{bb}^\flat \E G_{aa}] (X_i')^{p-1}(\overline X_i')^{p} \\\nonumber
& \quad+ \E \sum_{a,b} \sigma_{ia}S_{ab} G_{ba} \partial_{ba} (X'_i)^{p-1}(\overline X'_i)^{p} \\\nonumber
& = - \mathfrak{m} \bb E|X_i'|^{2p} - \mathfrak{m}\bb E \sum_{a,b} \sigma_{ia}  S_{ab}   G_{bb}^\flat (X_i')^{p-1}(\overline X_i')^{p} + \E \sum_{a,b} \sigma_{ia}S_{ab} G_{ba} \partial_{ba} (X_i')^{p-1}(\overline X_i')^{p} \\
& \quad - \E \sum_{a,b} \sigma_{ia}S_{ab}[G_{aa}^\flat G_{bb}^\flat - \E G_{aa}^\flat G_{bb}^\flat + G_{aa}^\flat (\E G_{bb} - \mathfrak m) + G_{bb}^\flat (\E G_{aa}-\mathfrak m)] (X_i')^{p-1}(\overline X_i')^{p}. 
\end{align}
where we used that
$$
(fg)^\flat = f^\flat g^\flat - \E f\flat g^\flat + f^\flat \E g + g^\flat \E f\,.
$$

Let $\tilde X_i\deq\sum_a S_{ia} G_{aa}^\flat$. Expanding $\bb E|\tilde{ X_i}|^{2p}$ as we did for $\E|X_i'|^{2p}$ and using the properties of the matrix $\cal L$ defined in \eqref{eq:def_L}, we get
\begin{align}\nonumber
\bb E|\tilde{ X_i}|^{2p} = & \,\E \sum_{a,b} \tau_{ia}S_{ab} G_{ba} \partial_{ba} (X_i')^{p-1}(\overline X_i')^{p} \\
& \quad - \E \sum_{a,b} \tau_{ia}S_{ab}[G_{aa}^\flat G_{bb}^\flat - \E G_{aa}^\flat G_{bb}^\flat + G_{aa}^\flat (\E G_{bb} - \mathfrak m) + G_{bb}^\flat (\E G_{aa}-\mathfrak m)] (X_i')^{p-1}(\overline X_i')^{p}\,,\label{eq:tildeX}
\end{align}
where $\tau = \cal L S$. Thus, from \eqref{eq:tildeX} and \eqref{eq:expansion_X_ji} we get \begin{align}\label{eq:X_temp_bound}
\nonumber
\bb E|X_i'|^{2p} \prec & \,\,\bigg\vert  \E \sum_{a,b} \breve\sigma_{ia}S_{ab} G_{ba} \partial_{ba} (X_i')^{p-1}(\overline X_i')^{p} \bigg\vert \\\nonumber
& + \bigg\vert \E \sum_{a,b} \breve\sigma_{ia}S_{ab}[G_{aa}^\flat G_{bb}^\flat - \E G_{aa}^\flat G_{bb}^\flat + G_{aa}^\flat (\E G_{bb} - \mathfrak m) + G_{bb}^\flat (\E G_{aa}-\mathfrak m)] (X_i')^{p-1}(\overline X_i')^{p} \bigg\vert \\
\prec & \,\, \bigg\vert  \E \sum_{a,b} \breve\sigma_{ia}S_{ab} G_{ba} \partial_{ba} (X_i')^{p-1}(\overline {X_i'})^{p} \bigg\vert + \Psi^2 \E|X_i' |^{2p},
\end{align}
where in the last line we used $ \E(G_{ii} - \mathfrak m) \prec \Psi$ in \eqref{eq:corollary1} and $ G_{ii}^\flat \prec \Psi$. To finish the proof, we compute the derivative 
\begin{align*}
\partial_{ab} X'_i = - \sum_{c} \sigma_{ic} G_{ca} G_{bc} \prec \Psi^3+ \delta_{ab} \Psi^2
\end{align*}
where we used \eqref{eq:bound_Y}. Therefore, \eqref{eq:X_temp_bound} becomes
\begin{align*}
\bb E|\tilde{ X_i}|^{2p} \prec \Psi^2 \E | X_i'|^{2p-1} + \Psi^4 \E | X_i'|^{2p-2}.
\end{align*}
By invoking Lemma \ref{lemma:moment_to_domination}, we finally get $ X_i' \prec \Psi^2$. The proof of the second estimate in \eqref{eq:X_bound}, i.e.\ $\sum_k \sigma_{ik} G_{kj} \prec \Psi^2$, is completely analogous to the one just presented for $X_i \prec \Psi^2$ and it is omitted.

\section{Extension to non-Gaussian distribution and general complex case}\label{sec:nonGaussian_generalComplex}

In this section we explain how to deal with the more general case when $ H_{ij}$ is not necessarily Gaussian distributed and when $ \E H_{ij}^2 \neq 0$. 

Let us start with the non Gaussian corrections: we point out that the way we are going to control them is insensitive to whether $\E H_{ij}^2 = 0 $ or not, therefore, for the sake of simplicity, we will assume that $ \E H_{ij}^2 = 0$. 
We recall that the Gaussian case is easier because, adopting the notation of Lemma \ref{lem:cumulant_factos_estimate}, the cumulants of order $ p+q=k \geq 3$ vanish, which implies that the cumulant expansion formula \eqref{eq:cumulant_expansion} is truncated at $ \ell =1$ with $ R_{2} =0 $. When $ H_{ij}$ is not Gaussian we have to estimate all the higher order terms in \eqref{eq:cumulant_expansion}. 

We will show explicitly that the non Gaussian terms do not modify the bounds that we got earlier in \eqref{eq:P_vy}, \eqref{eq:bound_Y} and \eqref{eq:bound_Z} since they are at most of the same magnitude as the Gaussian terms. 
Heuristically, the reason is that, as stated in Lemma \ref{lem:cumulant_factos_estimate}, we have
\begin{align}\label{eq:bound_cumulants}
\cal C^{(p,q)}(H_{ij}) = O( S_{ij}^{k/2}) \prec S_{ij}\Psi^{k -2}, \ \ \ k = p+q\,.
\end{align}
Throughout this whole Section \ref{sec:nonGaussian_generalComplex} we will assume that $\Psi$ is an admissible control parameter such that the a prior bound $\Lambda^2 \prec \Psi^2$ holds true. 

Note that \eqref{eq:bound_cumulants} implies that in the higher order contributions of the full cumulant expansion \eqref{eq:cumulant_expansion} the huge number of resolvent entries produced by the derivatives is compensated by the smallness of the high order cumulants. 

In particular, the strategy is to expand these derivatives via the Leibniz rule (see \eqref{8.3}, \eqref{eq:nonGauss_Y_full_expansion} and \eqref{eq:U_n2} below): only few terms in this expansion need to be treated carefully (see for example \eqref{eq:ell=1}, \eqref{eq:ell=0}, \eqref{eq:a} and \eqref{eq:3_cumulant}), while all the others are easily bounded by using \eqref{eq:bound_cumulants}, the Ward identity and $\Lambda^2 \prec \Psi^2$. 
 
Moreover, for \eqref{eq:bound_Y} and \eqref{eq:bound_Z} we can still employ the same self-consistent scheme that we set up in Lemma \ref{lemma_self-consistent_equation_trick} in the non Gaussian case: the only modification is that the non Gaussian terms will appear just as additional contributions on the right hand side of \eqref{eq:self_consist_trick}. 

 The same method can be applied to prove that all the other results in Proposition \ref{prop:main_estimates} and lemmata \ref{lemma:chain_loop_X_estimates}, \ref{lemma:first_induction_step} and \ref{lemma:bound_B} remain valid when $H$ is non Gaussian. 
Finally, we will briefly discuss also how to control the additional terms arising in the cumulant expansion when $ \E H_{ij}^2 \neq 0$.

\subsection{Non Gaussian terms in \eqref{eq:P_vy}}
In this section we explain the proof of \eqref{eq:P_vy} when the entries of $H$ are not Gaussian. Suppose there is some  $\lambda\in [L^{-1},L]$ such that $\cal Q_{xy} \prec \lambda$ for all $x,y \in \bb T_L$. As in Section \ref{sec:proof_main_estimates}, we would like to have a bound on $\bb E|Q_{yy}|^{2p}$. By looking at the proof in Section \ref{sec:proof_main_estimates} carefully, we see that we used two cumulant expansions in the proof: the first one is in \eqref{eq:Q^2p} and the second one is in \eqref{eq:rhs_Q}. We now want to control the additional non Gaussian terms arising from them. 

  Let us look at the estimate of \eqref{eq:rhs_Q}. Starting from the LHS of \eqref{eq:rhs_Q}, we need to consider the additional terms
\begin{equation} \label{8.1}
\begin{aligned}
&\Big(\frac{1}{\sqrt{L}}\Big)^n\sum_{\substack{w,t \geq 0 \\w+t=2}}^K \frac{1}{w! t!}\cal C^{(w,t+1)}(H_{j_{n+1}i_{n+1}}) \bb E \sum_{i_1,j_1,...,i_{n+1},j_{n+1}} \sigma^{(1)}_{i_1j_1}\cdots \sigma^{(n)}_{i_nj_n} \\
&\partial_{j_{n+1}i_{n+1}}^w \partial_{i_{n+1}j_{n+1}}^t \big(G_{j_1y}\ol G_{i_1y}V_{1,2}\cdots V_{n-1,n} S_{\b vi_{n+1}}G_{j_{n+1}j_n}\ol G_{i_{n+1}y}G_{i_ny}\cdot \cal Q_{yy}^{\alpha}\ol{\cal Q}_{yy}^{\beta}\big)+\sum_{j_{n+1},i_{n+1}}R^{j_{n+1}i_{n+1}}_{K+1}\,,
\end{aligned}
\end{equation}
where $R^{j_{n+1}i_{n+1}}_{K+1}$ is the remainder term defined similarly to $R_{K+1}$ in \eqref{eq:R_cumulant_expansion}. Following a routine verification (one may refer to the proof of Lemma 4.6(i) in \cite{HK17}), we see that for any $D>0$, there is $K=K(D) \in \bb N$ such that 
$$
\sum_{j_{n+1},i_{n+1}}R^{j_{n+1}i_{n+1}}_{K+1}=O(L^{-D})\,.
$$
Now we are left with the estimate of the first sum in \eqref{8.1}. Let us fix $(w,t)\in \bb N\times \bb N$ with $w+t=k \ge 2$. W.L.O.G we assume $w=k$ and $t=0$, and the general cases of $w,t$ follow in a similar fashion. Let us define
$$
 \sigma^{(n+1)}_{j_{n+1}i_{n+1}}=\sqrt{L}W^{(k-1)/2}S_{\b v i_{n+1}}\cal C^{(k,1)}(H_{j_{n+1}i_{n+1}})\,,
$$
which belongs to $\bb S$ thanks to \eqref{eq:bound_cumulants}.  

We would like to bound 
\begin{equation} \label{8.2}
\frac{1}{L^{(n+1)/2}W^{(k-1)/2}}\bb E \sum_{i_1,j_1,...,i_{n+1},j_{n+1}}  \sigma^{(1)}_{i_1j_1}\cdots \sigma^{(n+1)}_{i_{n+1}j_{n+1}}\partial_{j_{n+1}i_{n+1}}^k \big(G_{j_1y}\ol G_{i_1y}V_{1,2}\cdots V_{n-1,n}
G_{j_{n+1}j_{n}}\ol G_{i_{n+1}y}G_{i_{n}y}\cdot \cal Q_{yy}^{\alpha}\ol{\cal Q}_{yy}^{\beta}\big)\,.
\end{equation} 
By Leibniz's rule, we look at \eqref{8.2} by considering 
\begin{equation} \label{8.3}
\begin{aligned}
\frac{1}{L^{(n+1)/2}W^{(k-1)/2}}\bb E \sum_{i_1,j_1,...,i_{n+1},j_{n+1}} \sigma^{(1)}_{i_1j_1}\cdots \sigma^{(n+1)}_{i_{n+1}j_{n+1}}\big[\partial_{j_{n+1}i_{n+1}}^s \big(G_{j_1y}\ol G_{i_1y}V_{1,2}\cdots V_{n-1,n}
G_{j_{n+1}j_n}\ol G_{i_{n+1}y}G_{i_ny}\big)\big]
\\
\cdot  \big(\partial^{t_1}_{j_{n+1}i_{n+1}}\cal Q_{yy} \cdots \partial^{t_\ell}_{j_{n+1}i_{n+1}}\cal Q_{yy}\big) \cdot Q_{yy}^{\alpha-\ell}\ol{\cal Q}_{yy}^{\beta}\,,
\end{aligned}
\end{equation}
where $s+t_1+\cdots+t_{\ell}=k \ge 2$. Here we consider the case where the differential $\partial_{j_{n+1}i_{n+1}}$ is only applied to $Q_{yy}$, and the general case when $\partial_{j_{n+1}i_{n+1}}$ is also applied to $\ol{Q}_{yy}$ can be estimated in the similar fashion. Thus a bound on \eqref{8.3} implies the same bound on \eqref{8.2}.

{\bf Case 1.} Suppose $\ell \ge 2$. Let $\tilde{\cal T}$ be the collection of factors $G$ and $\ol{G}$ in \eqref{8.3} such that at least one of the two indices belongs to the class $\{i_1,j_1,...,i_n,j_n\}$, and we have $|\tilde{ \cal T}|=3n+1$. As in Section \ref{prop:main_estimates}, we can use Lemma \ref{lemma:chain_loop_X_estimates}(i) to show 
\begin{equation} \label{8.5}
\Big(\frac{1}{\sqrt{L}}\Big)^n\sum_{i_1,j_1,...,i_n,j_n}\sigma^{(1)}_{i_1j_1}\sigma^{(n)}_{i_nj_n} \prod_{t \in \tilde{\cal  T}}  t\prec \Big(\frac{\Psi^3}{\sqrt{\eta}}+\frac{\Psi}{\sqrt{L\eta}}\Big)^n \cdot \Psi\,.
\end{equation}
By 
\begin{equation} \label{85}
\partial_{ji} \cal Q_{yy} =-\cal Q_{jy}G_{iy}-\cal Q_{yi}\ol G_{jy}+S_{\b vj}G_{iy}\ol G_{jy}-\sum_{k,l}S_{\b vk}S_{kl}G_{lj}G_{il}|G_{ky}|^2-\sum_{k,l}S_{\b v k}S_{kl}\ol G_{ki}\ol G_{jk}|G_{ly}|^2\,,
\end{equation}
and
\begin{equation*}
\partial_{ji} \ol{\cal Q}_{yy} =-\ol{\cal Q}_{iy}\ol G_{jy}-\ol{\cal Q}_{yj} G_{iy}+\ol S_{\b vj}G_{iy}\ol G_{jy}-\sum_{k,l}\ol S_{\b vl}S_{lk} G_{lj} G_{il}|G_{ky}|^2-\sum_{k,l}\ol S_{\b vl}S_{lk} \ol G_{ki}\ol G_{jk}|G_{ly}|^2\,,
\end{equation*}
together with Ward identity, we see that for any fixed $m \ge 1$,
\begin{equation*} 
\partial^{m}_{j_{n+1}i_{n+1}} \cal Q_{yy} \prec \lambda + \frac{1}{\sqrt{L}}+\frac{\Psi^3}{\sqrt{\eta}}\,.
\end{equation*}
By using the above bound for $\ell-2$ many factors in \eqref{8.3}, together with \eqref{8.5} we have for some $p_{n+1} \in \{i_{n+1},j_{n+1}\}$ with
\begin{equation} \label{8.9}
\begin{aligned}
\eqref{8.3} \prec&\, \frac{1}{L^{1/2}W^{(k-1)/2}}\cdot \Big(\frac{\Psi^3}{\sqrt{\eta}}+\frac{\Psi}{\sqrt{L\eta}}\Big)^n \cdot \Psi \cdot \Big(\lambda + \frac{1}{\sqrt{L}}+\frac{\Psi^3}{\sqrt{\eta}}\Big)^{\ell-2}\\
&\,\cdot \bb E \big|\sum_{i_{n+1},j_{n+1}}  \sigma^{(n+1)}_{i_{n+1}j_{n+1}} \ol G_{p_{n+1}y} \partial^{t_1}_{j_{n+1}i_{n+1}}\cal Q_{yy}\partial^{t_2}_{j_{n+1}i_{n+1}}\cal Q_{yy}\big| \cdot |\cal Q_{yy}|^{\alpha+\beta-\ell}\\
\prec&\, \frac{1}{L^{1/2}W^{(k-\ell+1)/2}} \cdot \Big(\frac{\Psi^3}{\sqrt{\eta}}+\frac{\Psi}{\sqrt{L\eta}}\Big)^n \cdot \Psi \cdot \Big(\lambda\Psi + \frac{\Psi}{\sqrt{L}}+\frac{\Psi^3}{\sqrt{\eta}}\Big)^{\ell-2}\\
&\, \cdot \bb E \big|\sum_{i_{n+1},j_{n+1}}  \sigma^{(n+1)}_{i_{n+1}j_{n+1}} \ol G_{p_{n+1}y} \partial^{t_1}_{j_{n+1}i_{n+1}}\cal Q_{yy}\partial^{t_2}_{j_{n+1}i_{n+1}}\cal Q_{yy}\big| \cdot |\cal Q_{yy}|^{\alpha+\beta-\ell}\,.
\end{aligned}
\end{equation}
For $\max\{t_1,t_2\} \ge 2$, we have $k-\ell+1 \ge 2$, thus by $1/\sqrt{W} \prec \Psi$ and using Ward identity for the term $\ol G_{p_{n+1}y}$ and another $G$ hidden in $\partial^{t_1}_{j_{n+1}i_{n+1}}\cal Q_{yy}$ we have
\begin{equation*}
\begin{aligned}
\eqref{8.9} \prec&\, \frac{\Psi^2}{\sqrt{L}} \cdot \Big(\frac{\Psi^3}{\sqrt{\eta}}+\frac{\Psi}{\sqrt{L\eta}}\Big)^n \cdot \Psi \cdot \Big(\lambda\Psi + \frac{\Psi}{\sqrt{L}}+\frac{\Psi^3}{\sqrt{\eta}}\Big)^{\ell-2} \cdot \frac{1}{\eta} \Big(\lambda + \frac{1}{\sqrt{L}}+\frac{\Psi^3}{\sqrt{\eta}}\Big)^2 \bb E |\cal Q_{yy}|^{2p-n-\ell-1}\\
\prec&\,\Big(\frac{\Psi^3}{\sqrt{\eta}}+\frac{\Psi}{\sqrt{L\eta}}+\lambda\Psi\Big)^{n+\ell-2}\cdot \bigg(\frac{\Psi}{\eta^{1/6}}\lambda^{2/3}+\frac{\Psi^3}{\sqrt{\eta}}+\frac{\Psi}{\sqrt{L\eta}}\bigg)^3 \bb E |\cal Q_{yy}|^{2p-n-\ell-1}\\
\prec & \Big(\frac{\Psi^3}{\sqrt{\eta}}+\frac{\Psi}{\sqrt{L\eta}}+\lambda\Psi+\frac{\Psi}{\eta^{1/6}}\lambda^{2/3}\Big)^{n+\ell+1}\bb E |\cal Q_{yy}|^{2p-n-\ell-1}\,.
\end{aligned}
\end{equation*}
For $t_1=t_2=1$, we have $k-\ell+1 \ge 1$, thus by exploring \eqref{85} carefully and use Ward identity we have
\begin{equation*}
\begin{aligned}
\eqref{8.9} \prec&\, \frac{\Psi}{\sqrt{L}} \cdot \Big(\frac{\Psi^3}{\sqrt{\eta}}+\frac{\Psi}{\sqrt{L\eta}}\Big)^n \cdot \Psi \cdot \Big(\lambda\Psi + \frac{\Psi}{\sqrt{L}}+\frac{\Psi^3}{\sqrt{\eta}}\Big)^{\ell-2} \cdot  \Big(\frac{1}{\eta}\lambda^2 + \frac{\Psi}{\eta} \frac{1}{L}+\frac{1}{\eta}\frac{\Psi^6}{\eta}\Big)\cdot \bb E |\cal Q_{yy}|^{2p-n-\ell-1}\\
\prec & \Big(\frac{\Psi^3}{\sqrt{\eta}}+\frac{\Psi}{\sqrt{L\eta}}+\lambda\Psi+\frac{\Psi}{\eta^{1/6}}\lambda^{2/3}\Big)^{n+\ell+1}\bb E |\cal Q_{yy}|^{2p-n-\ell-1}\,.
\end{aligned}
\end{equation*}
Thus for Case 1 we have
\begin{equation*}
\eqref{8.3} \prec \sum_{\ell=1}^{2p-n-1} \Big(\frac{\Psi^3}{\sqrt{\eta}}+\frac{\Psi}{\sqrt{L\eta}}+\lambda\Psi+\frac{\Psi}{\eta^{1/6}}\lambda^{2/3}\Big)^{n+\ell+1}\bb E |\cal Q_{yy}|^{2p-n-\ell-1}\,.
\end{equation*}
{\bf Case 2.} Suppose $\ell=1$. We have $1/W^{(k-1)/2} \prec \Psi$. By \eqref{8.5} and Ward identity we have
\begin{equation}\label{eq:ell=1}
\begin{aligned}
\eqref{8.3} \prec&\,  \frac{\Psi}{\sqrt{L}}\Big(\frac{\Psi^3}{\sqrt{\eta}}+\frac{\Psi}{\sqrt{L\eta}}\Big)^n \cdot \Psi\cdot \bb E \big|\sum_{i_{n+1},j_{n+1}} \sigma^{(n+1)}_{i_{n+1}j_{n+1}} \ol G_{i_{n+1}y} \partial^{t}_{j_{n+1}i_{n+1}}\cal Q_{yy}\big| \cdot |\cal Q_{yy}|^{\alpha+\beta-1}\\
\prec&\, \Big(\frac{\Psi^3}{\sqrt{\eta}}+\frac{\Psi}{\sqrt{L\eta}}\Big)^n \cdot \frac{\Psi^2}{\sqrt{L}} \cdot \frac{1}{\eta}\Big(\lambda+\frac{1}{\sqrt{L}}+\frac{\Psi^3}{\sqrt{\eta}}\Big)\cdot \bb E |\cal Q_{yy}|^{2p-n-2}\\
\prec & \Big(\frac{\Psi^3}{\sqrt{\eta}}+\frac{\Psi}{\sqrt{L\eta}}+\frac{\Psi}{L^{1/4}\eta^{1/2}}\lambda^{1/2}\Big)^{n+2} \cdot \bb E |\cal Q_{yy}|^{2p-n-2}\,.
\end{aligned}
\end{equation}
{\bf Case 3.} Suppose $\ell=0$, and we see that similar as in Section 5.1, we can use Lemma \ref{lemma:chain_loop_X_estimates} and shown that
\begin{equation} \label{eq:ell=0}
\begin{aligned}
\frac{1}{L^{(n+1)/2}W^{(k-1)/2}} \sum_{i_1,j_1,...,i_{n+1},j_{n+1}} \sigma^{(1)}_{i_1j_1}\cdots \sigma^{(n+1)}_{i_{n+1}j_{n+1}}\partial_{j_{n+1}i_{n+1}}^k \big(G_{j_1y}\ol G_{i_1y}V_{1,2}\cdots V_{n-1,n}
G_{j_{n+1}j_n}\ol G_{i_{n+1}y}G_{i_ny}\big)\\
\prec \Big(\frac{\Psi^3}{\sqrt{\eta}}+\frac{\Psi}{\sqrt{L\eta}}\Big)^{n+1}\,,
\end{aligned}
\end{equation}
which implies
\begin{equation*}
\eqref{8.3} \prec \Big(\frac{\Psi^3}{\sqrt{\eta}}+\frac{\Psi}{\sqrt{L\eta}}\Big)^{n+1} \bb E |\cal Q_{yy}|^{2p-n-1}\,.
\end{equation*}
Thus from Cases 1-3 we have
\begin{equation*}
\begin{aligned}
\eqref{8.3} \prec&\, \sum_{k=1}^{2p-n} \Big(\frac{\Psi^3}{\sqrt{\eta}}+\frac{\Psi}{\sqrt{L\eta}}+\lambda\Psi+\frac{\Psi}{\eta^{1/6}}\lambda^{2/3}+\frac{\Psi}{L^{1/4}\eta^{1/2}}\lambda^{1/2}\Big)^{n+k}\bb E |\cal Q_{yy}|^{2p-n-k}\\
\prec &\, \sum_{k=1}^{2p} \Big(\frac{\Psi^3}{\sqrt{\eta}}+\frac{\Psi}{\sqrt{L\eta}}\Big)^{k/3}\lambda^{2k/3}\bb E |\cal Q_{yy}|^{2p-k} \,.
\end{aligned}
\end{equation*}
By estimating \eqref{eq:Q^2p} in a similar fashion and using the steps in Section \ref{sec:proof_main_estimates}, we see that 
\begin{equation*}
	\bb E |\cal Q_{yy}|^{2p} \prec  \sum_{k=1}^{2p} \Big(\frac{\Psi^3}{\sqrt{\eta}}+\frac{\Psi}{\sqrt{L\eta}}\Big)^{k/3}\lambda^{2k/3}\bb E |\cal Q_{yy}|^{2p-k}\,.
\end{equation*}
Note that we can also estimate $\bb E |\cal Q_{xy}|^{2p}$ exactly in the same way, i.e.\
$$
\bb E |\cal Q_{xy}|^{2p} \prec  \sum_{k=1}^{2p} \Big(\frac{\Psi^3}{\sqrt{\eta}}+\frac{\Psi}{\sqrt{L\eta}}\Big)^{k/3}\lambda^{2k/3}\bb E |\cal Q_{xy}|^{2p-k}\,
$$ 
whenever $\cal Q_{xy} \prec \lambda$ for all $x,y \in \bb T_L$. Hence, it suffices to apply Lemma \ref{lemma:moment_to_domination} to conclude the proof.

\subsection{Non Gaussian terms in \eqref{eq:bound_Y}}\label{sec:nonGaussian_Y}

In this section we want to estimate the non Gaussian terms in the cumulant expansion \eqref{eq:expansion_callY_pq}: using Lemma \ref{lem:cumulant_expansion}, we see that the non Gaussian terms yields the additional contributions $\cal U_{uab} + |R_{K+1}| $ to the right hand side of \eqref{eq:expansion_callY_pq} where 
\begin{align}\label{eq:nonGauss_U}
\cal U_{uab} = \bigg\vert \E  \sum_{i,j} \sigma_{ui} \sum_{\substack{w,t \geq 0 \\ w + t = 2}}^{K} \frac{1}{w! t!} \cal C^{(w, t+1)}(H_{ji})\partial_{ij}^t \partial_{ji}^w [ G_{ai} G_{jb} (Y_{uab})^{p-1} (\overline{Y}_{uab})^p ]\bigg\vert, \ \ \ \sigma \in \bb S
\end{align}
and $ R_{K+1}$ is given by \eqref{eq:R_cumulant_expansion}. As in the previous section, from Lemma 3.4 (iii) in \cite{HKR17}, we know that, given a large constant $D$, then one can choose $K = K(D)$ such that $ R_{K+1} = O(L^{-D})$. 

Let us now focus on $\cal U_{uab}$: 
since we are going to use very rough estimates where the complex conjugation does not play any role, in the following we will simplify the notation by neglecting it: i.e.\ we will replace $\ol Y_{uab}$ by $ Y_{uab}$ and also $\partial_{ji} = \partial/\partial H_{ji} = \partial/\partial \ol H_{ji} $ by $ \partial_{ij} = \partial/\partial H_{ij}$. 

Using these notation conventions and \eqref{eq:bound_cumulants}, we can write
\begin{align}
\cal U_{uab} \prec \sum_{n = 2}^{K} U^{(n)}_{uab}, \ \ \ U^{(n)}_{uab} := \E  \sum_{i,j} |\sigma_{ui}|  S_{ij}^{(n+1)/2}  |\partial^{n}_{ij}  G_{ai} G_{jb} Y_{uab}^{2p-1}|\,.
\end{align}
Applying the Leibniz rule for derivatives we have
\begin{align*}
\partial^{n}_{ij}  G_{ai} G_{jb} Y_{uab}^{2p-1} = \sum_{r=0}^n (\partial^{r}_{ij}  G_{ai} G_{jb} ) (\partial^{n-r}_{ij}  Y_{uab}^{2p-1} ),
\end{align*}
thus a second application of the Leibniz rule allows us to estimate $ U^{(n)}_{uab}$ as a sum of contributions of the form
\begin{align}\label{eq:nonGauss_Y_full_expansion}
 \E  \sum_{i,j} |\sigma_{ui}| S_{ij}^{(n+1)/2}  \bigg\vert (\partial^{r}_{ij}  G_{ai} G_{jb} )  \bigg( \prod_{t=1}^{h} \partial^{\ell_t}_{ij} Y_{uab} \bigg)Y_{uab}^{2p-1-h} \bigg\vert
\end{align}
where the sum runs over the integers $ h = 0, \ldots, (n-r) \wedge (2p-1)$ and $ \ell_1, \ldots, \ell_h \geq 1$ with $ \ell_1 + \cdots + \ell_h = n-r$.
Let us split $ U^{(n)}$ in three terms: 
\begin{itemize}
\item[(a)] the one corresponding to $r = 0$ and $ h = n$ (so that $\ell_1 = \ell_2 = \cdots = \ell_n = 1$) is stochastically dominated by
\begin{align}\label{eq:a}
(\mathrm a) := \,\E  \sum_{i,j} S_{ij}^{(n+1)/2}  |\sigma_{ui}| |G_{ai} G_{jb} | 
|\partial_{ij}  Y_{uab} |^n |Y_{uab}|^{2p-n-1},
\end{align}
\item[(b)] the one where $ r = 0 $ and $ h \leq (n-1) \wedge (2p-1)$ is dominated by
\begin{align*}
(\mathrm b) := 
\E  \sum_{i,j}S_{ij}^{(n+1)/2} |\sigma_{ui} G_{ai} G_{jb}| \sum_{h=0}^{(n-1) \wedge (2p-1)}
\bigg \vert \prod_{t=1}^{h} 
\partial^{\ell_t}_{ij}Y_{uab}  \bigg\vert |Y_{uab}|^{2p-1-h},
\end{align*}
\item[(c)] the one where $ r \geq 1$ is bounded by
\begin{align*}
(\mathrm c) :=  \E  \sum_{i,j} S_{ij}^{(n+1)/2}|\sigma_{ui}| \sum_{r=1}^{n} |\partial^{r}_{ij}   G_{ai} G_{jb}| \sum_{h=0}^{(n-r) \wedge (2p-1)}\bigg \vert \prod_{t=1}^{h} 
\partial^{\ell_t}_{ij}Y_{uab}  \bigg\vert |Y_{uab}|^{2p-1-h}.
\end{align*}
\end{itemize}
Note that, since $ K$ in \eqref{eq:nonGauss_U} is big but fixed, we have that 
\begin{align}\label{eq:bound_Un}
\cal U_{uab} \prec (\mathrm a)+(\mathrm b)+(\mathrm c).
\end{align}
We are now going to establish more explicit bounds for (a), (b) and (c). For (a), set the prior estimate $ Y_{uab} \prec \lambda$ with $\lambda \in [\Psi^3, L^C]$ for $ a \neq b$. By using \eqref{eq:self_similar_derivative_Y} and the trivial bound $ Y_{uab} \prec \Psi^2$, we get
\begin{align}\label{eq:first_diff_Y}
\partial_{ij} Y_{uab} & \prec \lambda\Psi + \lambda(\delta_{ai} + \delta_{bi} +\delta_{aj} +\delta_{bj}) + \Psi^2(\delta_{ai}\delta_{bj} + \delta_{aj}\delta_{bi}).
\end{align}
 Therefore, from \eqref{eq:first_diff_Y} and \eqref{eq:a} we get that
\begin{align}\label{eq:u_{rs}}
(\mathrm a) & \prec  \, (\Psi^{4n+1} + \lambda^n \Psi^{n+2} + \Psi^{3n+3}) \E|Y_{uab}|^{2p-n-1} \\\nonumber
& \prec \, (\Psi^{3} \lambda^2)^{\frac{n+1}{3}} \E|Y_{uab}|^{2p-(n+1)},
\end{align}
where in the last line we used that $ n \geq 2$.

Let us now deal with (b) and (c): a simple induction shows that for any $ r \geq 0$ and $r' \geq 1$ and $ \tau, \omega \in \bb S$
\begin{align}\label{eq:diff_GG}
& \sum_{i,j} \tau_{ui} \omega_{ij} |\partial^{r}_{ij} G_{ai} G_{jb} | \prec \Psi^2, \\
\label{eq:diff_GG_deltas}
& \sum_{i,j} \tau_{ui} \omega_{ij} |\partial^{r}_{ij} G_{ai} G_{jb} |(\delta_{ai} + \delta_{bi} + \delta_{aj} + \delta_{bj} + \delta_{ij}) \prec \Psi^3, \\
\label{eq:multiple_derivative_Y}
& \partial^{r'}_{ij} Y_{uab} \prec \Psi^3 + (\delta_{ai} + \delta_{bi} + \delta_{aj} + \delta_{bj} + \delta_{ij}) \Psi^2\,. 
\end{align}
By \eqref{eq:multiple_derivative_Y}, \eqref{eq:diff_GG}, \eqref{eq:diff_GG_deltas} and \eqref{eq:u_{rs}} we get
\begin{align}
\nonumber
(\mathrm b)  & \prec   
\E  \sum_{i,j}S_{ij}^{(n+1)/2} |\sigma_{ui} G_{ai} G_{jb}| \sum_{h=0}^{(n-1) \wedge (2p-1)}
(\Psi^{3h} + (\delta_{ai} + \delta_{bi} + \delta_{aj} + \delta_{bj} + \delta_{ij}) \Psi^{2h}) |Y_{uab}|^{2p-1-h} \\\nonumber
& \prec \Psi^{n+1} \sum_{h=0}^{2p-1} \Psi^{3h} \E |Y_{uab}|^{2p-1-h} + \Psi^{n+2} \sum_{h=0}^{(n-1) \wedge (2p-1)} \Psi^{2h} \E |Y_{uab}|^{2p-1-h} \\
& \prec \Psi^{n-2} \sum_{h'=1}^{2p} \Psi^{3h'} \E |Y_{uab}|^{2p-h'} + \Psi^{n} \sum_{h'=1}^{n \wedge 2p} \Psi^{2h'} \E |Y_{uab}|^{2p-h'} \prec \sum_{h'=1}^{2p} \Psi^{3h'} \E |Y_{uab}|^{2p-h'},\label{eq:nonGaussian_Y}
\end{align}
where in the last passage we used that $ n \geq 2$, $n \geq h'$ and $ \Psi \leq 1$. The same argument shows that for (c) the same bound holds. Therefore, recalling \eqref{eq:bound_Un} and the definition of $ \lambda$, we have
\begin{align*}
\cal U_{uab} \prec  (\Psi^{3} \lambda^2)^{\frac{n+1}{3}} \E|Y_{uab}|^{2p-(n+1)} + \sum_{l=1}^{2p} \Psi^{3l} \E |Y_{uab}|^{2p-l} \prec \sum_{l=1}^{2p} (\Psi^{3} \lambda^2)^{l/3} \E |Y_{uab}|^{2p-l}.
\end{align*}
This implies that in the general case the bound \eqref{eq:high_moment_callY} becomes
\begin{align*}
\E|Y_{uab}|^{2p} \prec \lambda \Psi^3 \E |Y_{uab}|^{2p-2} + \sum_{l=1}^{2p} (\Psi^{3} \lambda^2)^{l/3} \E |Y_{uab}|^{2p-l} \prec  \sum_{l=1}^{2p} (\Psi^{3} \lambda^2)^{l/3} \E |Y_{uab}|^{2p-l}.
\end{align*}
Applying Lemma \ref{lemma:moment_to_domination} with $ q=2/3$, $ \vartheta = \lambda$ and $ \varphi = \Psi^3$ concludes the proof for $a \neq b$. For $a = b$ the bound $ Y_{uaa} \prec \Psi^2$ is trivial.

\subsection{Non Gaussian terms in \eqref{eq:bound_Z}}
First, we note that the same method used to treat the non Gaussian terms in $Y_{uab}$ can be employed to show that the bounds \eqref{eq:A_bound} for $A_{abcd}$ and \eqref{eq:B_basic_bound} for $B_{ab}$ remain valid in the non Gaussian case. 

Here we will focus on the additional terms arising from the expansion \eqref{eq:expansion_Z_pq}:
\begin{align}\label{eq:nonGauss_Z_original}
\cal U_{ab} = \bigg\vert \E \sum_{i,j,k}\sigma_{ai}\tau_{bj} \sum_{w + t = 2}^{K} \frac{1}{w! t!} \cal C^{(w, t+1)}(H_{ik})\partial_{ik}^t \partial_{ki}^w(G_{kj}G_{ji} Z_{ab}^{p-1} \overline Z_{ab}^{p} ) \bigg\vert\,, \ \ \ \sigma, \tau \in \bb S\,.
\end{align}

By using the same notation simplification adopted for the non Gaussian terms of \eqref{eq:bound_Y} in the previous Section \ref{sec:nonGaussian_Y}, we have
\begin{align*}
\cal U_{ab} \prec \sum_{n = 2}^{K} U^{(n)}_{ab}, \ \ \ U^{(n)}_{ab} := \E  \sum_{i,j,k} |\sigma_{ai}\tau_{bj}|  S_{ki}^{(n+1)/2}  |\partial^{n}_{ki}  G_{kj}G_{ji} Z_{ab}^{2p-1}|\,.
\end{align*}
As before, a double application of the Leibniz rule implies that $ U^{(n)}_{ab} $ is bounded by a sum of terms of the form
\begin{align}\label{eq:U_n2}
\E \sum_{i,j,k} |\sigma_{ai}\tau_{bj}|  S_{ki}^{(n+1)/2} \bigg\vert (\partial^{r}_{ki} G_{kj}G_{ji}) \bigg( \prod_{t=1}^{h} \partial^{\ell_t}_{ki}Z_{ab}\bigg) Z_{ab}^{2p-1-h} \bigg\vert
\end{align}
where the sum runs over the integers $r=0, \ldots, n$, $ h = 0, \ldots, (n-r) \wedge (2p-1)$ and $ \ell_1, \ldots, \ell_h \geq 1$ with $ \ell_1 + \cdots + \ell_h = n-r$.
An induction argument involving \eqref{eq:bound_Y} yields for $ \ell \geq 1$, $ r \geq 0$ and $\sigma, \tau, \omega \in \bb S$ that
\begin{align}\label{eq:GG}
& \E \sum_{i,j,k} \sigma_{ai}\tau_{bj}\omega_{ki}| \partial^r_{ki} G_{kj}G_{ji}| \prec \Psi^2, \\
\label{eq:diff_Z} 
& \partial^{\ell}_{ki} Z_{ab} \prec \Psi^4.
\end{align}
Thus, by using \eqref{eq:U_n2}, \eqref{eq:GG} and \eqref{eq:diff_Z}, we get
\begin{align}\label{eq:nonGaussian_Z}
\cal U_{ab} & \prec \Psi^{n+1} \sum_{h = 0}^{2p-1} \Psi^{4h}\E |Z_{ab}|^{2p-(h+1)} \prec \Psi^{n-3} \sum_{h' = 0}^{2p-1} \Psi^{4h'}\E |Z_{ab}|^{2p-h')} \prec \sum_{h' = 1}^{2p} \Psi^{4h'}\E |Z_{ab}|^{2p-h'} 
\end{align}
where in the last passage we used that $ n \geq 3$ and $ \Psi \leq 1$. 
To complete the proof we examine in more detail what happens when $ n=2$ which corresponds to the third order cumulant. We come back to the original expression \eqref{eq:nonGauss_Z_original} and we observe that the terms where $ w + t = 2$ are stochastically dominated by
\begin{align}\label{eq:3_cumulant}\nonumber
& \bigg\vert \E \sum_{i,j,k}\sigma_{ai}\tau_{bj} \sum_{w + t = 2} \frac{1}{w! t!} \cal C^{(w, t+1)}(H_{ik})( \partial^t_{ik} \partial^w_{ki} G_{kj}G_{ji} ) Z_{ab}^{p-1} \overline Z_{ab}^{p} \bigg\vert \\\nonumber
& + \bigg\vert \E  \sum_{i,j,k}\sigma_{ai}\tau_{bj} \sum_{w + t = 2} \frac{1}{w! t!} \cal C^{(w, t+1)}(H_{ik}) G_{kj}G_{ji} (\partial^w_{ki} \partial^t_{ik}Z_{ab}^{p-1} \overline Z_{ab}^{p}) \bigg\vert \\\nonumber
& +\bigg\vert \E \sum_{i,j,k}\sigma_{ai}\tau_{bj}  \cal C^{(1, 2)}(H_{ik}) (\partial_{ki} G_{kj}G_{ji}) (\partial_{ik}Z_{ab}^{p-1} \overline Z_{ab}^{p}) \bigg\vert  \\\nonumber
& +\bigg\vert \E  \sum_{i,j,k}\sigma_{ai}\tau_{bj}  \cal C^{(1, 2)}(H_{ik})(\partial_{ik} G_{kj}G_{ji}) (\partial_{ki}Z_{ab}^{p-1} \overline Z_{ab}^{p}) \bigg\vert  \\\nonumber
& +\bigg\vert \E  \sum_{i,j,k}\sigma_{ai}\tau_{bj}  \cal C^{(0, 3)}(H_{ik}) (\partial_{ik} G_{kj}G_{ji}) (\partial_{ik}Z_{ab}^{p-1} \overline Z_{ab}^{p}) \bigg\vert  \\
& +\bigg\vert \E  \sum_{i,j,k}\sigma_{ai}\tau_{bj}  \cal C^{(2, 1)}(H_{ik}) (\partial_{ki} G_{kj}G_{ji}) (\partial_{ki}Z_{ab}^{p-1} \overline Z_{ab}^{p})  \bigg\vert  =  (\mathrm a) + (\mathrm b) + (\mathrm c),
\end{align}
where $(\mathrm c)$ denotes the sum of the last four terms on the left hand side of \eqref{eq:3_cumulant}.
Let us proceed term by term.
\begin{itemize}
\item[(a)] We note that $ \partial^t_{ik} \partial^w_{ki} G_{kj}G_{ji}$ can only generate of the form $ G_{ii}G_{kk}G_{kj}G_{ji}$ and $ G_{ki}G_{kk}G_{ij}G_{ji}$, up to switching $ i$ to $k$. Therefore, plugging the first type of contribution in $(\mathrm a)$ and using \eqref{eq:bound_cumulants} and \eqref{eq:bound_Y} for the summation over $j$, we get
\begin{align*}
& \bigg\vert \E  \sum_{i,j,k}\sigma_{ai}\tau_{bj} \sum_{w + t = 2} \frac{1}{w! t!} \cal C^{(w, t+1)}(H_{ik})G_{ii}G_{kk}G_{kj}G_{ji} Z_{ab}^{p-1} \overline Z_{ab}^{p} \bigg\vert \\\nonumber
& \prec \Psi \E \sum_{i,k}|\sigma_{ai}S_{ik}| (\Psi^3 + \delta_{ik}\Psi^2) |Z_{ab}|^{2p-1}  \prec \Psi^4 \E |Z_{ab}|^{2p-1}.
\end{align*}
The second type of contribution has three non-diagonal entries of $ G$, so we trivially get the same bound as for the former term. This implies that $ (\mathrm a) \prec \Psi^4 \E |Z_{ab}|^{2p-1}$.
\item[(b)] By summing over $ j$ and using \eqref{eq:diff_Z} we get
\begin{align*}
(\mathrm b) & \prec \bigg\vert \E  \sum_{i,j,k}\sigma_{ai}\tau_{bj} \sum_{w + t = 2} \frac{1}{w! t!} \cal C^{(w, t+1)}(H_{ik}) G_{kj}G_{ji} (\partial^w_{ki} \partial^t_{ik}Z_{ab}^{p-1} \overline Z_{ab}^{p}) \bigg\vert \\\nonumber
& \prec \Psi \, \E \sum_{i,k}|\sigma_{ai} S_{ik}|(\Psi^3 + \delta_{ik}\Psi^2) |\partial^w_{ki} \partial^t_{ik}Z_{ab}^{p-1} \overline Z_{ab}^{p}| \\\nonumber
& \prec \Psi \,  \sum_{i,k}|\sigma_{ai} S_{ik}|(\Psi^3 + \delta_{ik}\Psi^2) (\Psi^4 \E |Z_{ab}|^{2p-2} +  \Psi^8 \E |Z_{ab}|^{2p-3}) \prec \Psi^8 \E |Z_{ab}|^{2p-2} +  \Psi^{12} \E |Z_{ab}|^{2p-3}.
\end{align*} 
\item[(c)] The three terms in $ (\mathrm c)$ have the same structure, so we will examine only the first one. According to \eqref{eq:A_bound}, which is valid also in the non Gaussian case, the derivative of $ Z$ is controlled as follows 
\begin{align}
A_{abcd}  = \partial_{cd} Z_{ab} \prec \Psi^5 + \delta_{cd} \Psi^4.
\end{align}
Thus, from \eqref{eq:bound_cumulants} we have
\begin{align*}
(\mathrm c) \prec \Psi \E \sum_{i,j,k} |\sigma_{ai} \tau_{bj} S_{ik}| |G_{kk}G_{ij}G_{ji} + G_{kj}G_{ik}G_{ij}||A_{abki} | |Z_{ab}|^{2p-2} \prec \Psi^8 \E |Z_{ab}|^{2p-2}.
\end{align*}
\end{itemize}
By putting together all the contributions and invoking Lemma \ref{lemma:moment_to_domination}, one concludes the proof.

\subsection{General complex case}\label{sec:complex_general_case}
Here we explain how our results extend to the case when 
\begin{align}\label{eq:complex_general_case}
\E H_{ij}^2 \neq 0.
\end{align}
Since the non-Gaussian terms in this case are treated exactly as when $ \E H_{ij}^2 = 0 $, we will assume that $ H_{ij}$ is Gaussian. 

The bounds in Proposition \ref{prop:main_estimates} revolve around the self consistent equation trick implemented in Lemma \ref{lemma_self-consistent_equation_trick}, so we need to examine how \eqref{eq:complex_general_case} affects this lemma. 
From the cumulant expansion formula in Lemma \ref{lem:cumulant_expansion} we see that equation \eqref{eq:test_expansion_0} becomes
\begin{align}\label{eq:add_term_general_complex}
z D_{abc}= - \E \sum_i \sigma_{ai} \delta_{ib} G_{ci} \mathfrak p (G, \overline G) + \E \sum_i \sigma_{ai} (S_{ij} \partial_{ji} + \cal C^{(2,0)}(H_{ij})\partial_{ij})(G_{jb}G_{ci} \mathfrak p (G, \overline G)). 
\end{align}
Note that $ \cal C^{(2,0)}(H_{ij})= \E H_{ij}^2 \in \bb S$ since $ |\cal C^{(2,0)}(H_{ij})| \leq S_{ij}$ and $ \partial_{ij}G_{jb}G_{ci} = - 2G_{ji} G_{jb} G_{ci}$: this means that the additional term never generates any diagonal entry of $G$ and therefore it is always smaller than the contribution proportional to $ \E |H_{ij}|^2 = S_{ij}$.

As an example, let us write down in details the estimate of the additional terms for the estimate of the elementary chain $ Y_{ab;u}^{(1)} \equiv Y_{uab}$ (Lemma \ref{lemma:chain_loop_X_estimates}). From Lemma \ref{lemma_self-consistent_equation_trick} modified according to \eqref{eq:add_term_general_complex}, we get 
\begin{align}\label{eq:Y_complex}\nonumber
& \E|Y_{uab}|^{2p} \prec \bigg \vert \E \breve\sigma_{ua} G_{ab} (Y_{uab})^{p-1} (\overline{Y}_{uab})^p \bigg\vert + \bigg\vert \E \sum_{i,j} \breve\sigma_{ui} \omega_{ij} G_{ai} G_{jb} \partial_{ji} (Y_{uab}^{p-1} \overline{Y}_{uab}^p) \bigg\vert \\\nonumber
& + \bigg\vert \E \sum_{i,j} \breve\sigma_{ui} S_{ij} ((G_{jj}-\mathfrak{m}) G_{ai} G_{ib} + G_{aj} G_{jb} (G_{ii}-\mathfrak{m}) ) Y_{uab}^{p-1} \overline{Y}_{uab}^p \bigg \vert \\
& + \bigg\vert \E \sum_{i,j} \breve\sigma_{ui} \tau_{ij} G_{ji} G_{ai} G_{jb} Y_{uab}^{p-1} \overline{Y}_{uab}^p \bigg \vert 
\end{align}
where $ \tau_{ij} = \cal C^{(2,0)}(H_{ij})$, $ \omega = S + \tau$ and $\tau, \omega,  \breve\sigma \in \bb S$. Note the last term on the right hand side of \eqref{eq:Y_complex} is the only new additional contribution with respect to the former estimate \eqref{eq:expansion_callY_pq}. By using the hypothesis $ \Lambda \prec \Psi$, we can easily control it:
$$
 \E \sum_{i,j} \breve\sigma_{ui} \tau_{ij} G_{ji} G_{ai} G_{jb} Y_{uab}^{p-1} \overline{Y}_{uab}^p  \prec \Psi^3 \E|Y_{uab}|^{2p-1}.
$$ 
Thus, \eqref{eq:expansion_callY_pq} (and consequently \eqref{eq:bound_Y}) actually holds true even when $ \E H_{ij}^2 \neq 0$. Analogous straightforward arguments show that also all the other bounds in Proposition \ref{prop:main_estimates} and lemmata \ref{lemma:chain_loop_X_estimates}, \ref{lemma:first_induction_step} and \ref{lemma:bound_B} remain valid. 

\section{High dimensions}\label{sec:high_dim}
Fix $d \geq 2$ and recall that $N = L^d$ and $M \asymp W^d$. 
From Proposition 2.8 in \cite{EKYY13} for $ d \geq 2$ we have
\begin{align}\label{eq:bound_Theta_high_dim}
\Theta_{xy} \leq C \tilde \Phi^2\,,
\end{align}
where the explicit expression for $\Theta$ is given in Lemma 8.2 in \cite{EKYY13} and
\begin{align*}
\tilde \Phi^2 :=  \frac{1}{M}  + \frac{1}{N \eta} \,.
\end{align*}
In this setting we can show the analogues of Theorem \ref{th:local_law_d=1}, Corollary \ref{cor:deloc_d=1} and Theorem \ref{th:diffusion_d=1} in \cite{EKYY13}. 

\begin{theorem}[High dimensions]\label{th:high_dim_local_law}
Let $d \geq 2$ and assume \eqref{decay of f}. 
\begin{itemize}
\item[(i)] Suppose that
\begin{align}\label{eq:cond_high_dim_local_law}
L \ll W^{1+d/2}\,, \ \ \ \  \eta \gg \frac{L}{W^{1+d}}\,.
\end{align}
Then for $z\in \f S$ we have
\begin{equation*}
   \Lambda^2 \;\prec\; \tilde \Phi^2\,.
\end{equation*}

\item[(ii)] If  $L \ll W^{1+ \frac{d}{d+1}}$, then the eigenvectors of $H$ are completely delocalized in the sense of Proposition 7.1 in \cite{EKYY13}.

\item[(iii)] 
Assume that 
$$
L \ll W^{1 + d/3}, \ \ \ (W/L)^2 \leq \eta \leq 1\,.
$$
Then
\begin{equation}\label{eq:T_diff_2}
 T_{xy} -\Theta_{xy} \;\prec\; \frac{1}{M^{3/2} \eta}\,.
\end{equation}
Moreover, the analogues of \eqref{Tfin} and \eqref{Tfin1} hold with the explicit expression for $ \Upsilon_{xy}$ given in Theorem 8.6 in \cite{EKYY13}.

\end{itemize}

\end{theorem}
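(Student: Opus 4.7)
My plan is to mimic the three-step strategy of the $d=1$ proofs: a Fourier-space decomposition of $\tilde{\cal E}$ in the spirit of Proposition \ref{prop:fourier_analysis}, the isotropic cumulant estimates for $\cal P$ and $\cal R$ of Proposition \ref{prop:main_estimates}, and the self-improving bootstrap of Lemma \ref{lemma:self-improving_bound}. The entire combinatorial engine of Lemma \ref{lemma:chain_loop_X_estimates} and Section \ref{sec:nonGaussian_generalComplex} depends only on the abstract algebraic structure of the matrix family $\bb S$ together with the Ward identity, and is genuinely dimension-independent provided one redefines $\bb S$ in $d$ dimensions to require $\sup_{x,y}|\sigma_{xy}|\prec M^{-1}$ in place of $W^{-1}$. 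Since $\max S_{xy}\asymp M^{-1}$ in every dimension, this is the natural generalisation, and the chain, loop and $X$-bounds \eqref{eq:chain_bound}--\eqref{eq:X_bound} transcribe verbatim. The self-consistent equation trick of Lemma \ref{lemma_self-consistent_equation_trick} also carries over unchanged, as its only quantitative input is $\norm{\cal L}_{\infty\to\infty}\prec 1$, proved in all dimensions in \cite{EKY13}.

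The genuinely new ingredient is the Fourier analysis. Letting $\bb V$ denote the set of $\b v \in \C^N$ with $\|\b v\|_2 = O(1)$ and $\|\b v\|_\infty = O(N^{-1/2})$, and splitting $\tilde{\cal E}$ by a smooth cutoff at $|p|\asymp W^{-1}$ exactly as in Proposition \ref{prop:fourier_analysis}, the small-mode contribution is bounded by $N^{-1/2}$ times the $d$-dimensional momentum sum $\sum_{0 \ne p,\,|p|\lesssim W^{-1}}(\eta + W^2|p|^2)^{-1}$, which a standard Riemann-sum comparison bounds by $\prec N/M$ (with a $\log(1/\eta)$ factor when $d=2$, absorbed into $\prec$). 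Combining this with the natural $d$-dimensional analogues
\begin{equation*}
\cal P_{\b v y} \prec \frac{\Psi^3}{\sqrt{\eta}} + \frac{\Psi}{\sqrt{N\eta}}, \qquad
\cal P_{xy} \prec \frac{\Psi^3}{M}\sqrt{\frac{N}{\eta}} + \frac{\Psi}{M\sqrt{\eta}}
\end{equation*}
of Proposition \ref{prop:main_estimates} (with similar bounds for $\cal R$, obtained by replacing $L\mapsto N$ and $W\mapsto M$ wherever these quantities encoded $\|\b v\|_\infty$ or $\|S_{x\cdot}\|_\infty$) then yields the $d$-dimensional analogue of \eqref{eq:error}.

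At this point, (i) follows by applying Lemmas \ref{lemma:improv_bound} and \ref{lemma:self-improving_bound} with $\Omega = \tilde\Phi$ and the a priori input $\tilde\Psi^2 = (M\eta)^{-1}$ from Lemma \ref{lm:lsc}; the two quantitative conditions in \eqref{eq:cond_high_dim_local_law} arise precisely from verifying hypothesis \eqref{eq:a_coeff_condition}. Statement (ii) then follows from (i) together with \eqref{large eta estimate} by choosing $\eta$ just below $M/N$, which forces the compatibility $L/W^{1+d} \ll M/N$, i.e.\ $L \ll W^{1+d/(d+1)}$. For (iii), I would repeat the argument of Section \ref{sec:diff_d=1} in the diffusive regime $(W/L)^2 \le \eta \le 1$, using the explicit form of $\Theta$ from \cite[Lemma 8.2]{EKYY13} in place of \eqref{prof1}. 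The main obstacle I anticipate is that the Fourier sum depends qualitatively on $d$ (bounded for $d\ge 3$, logarithmic for $d=2$, power-like for $d=1$), so the self-improving iteration must be carried out with some care in the marginal case $d=2$ to ensure that the logarithmic factor truly is harmless; however, since it enters $\prec$ multiplicatively, this bookkeeping should not affect the leading polynomial scaling of the final bound.
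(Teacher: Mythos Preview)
Your outline is correct for $d=2$ and essentially reproduces the paper's argument in that case (the paper itself remarks that any cutoff exponent $\alpha\in[0,1/2]$ works when $d=2$). For $d\geq 3$, however, the straightforward transcription you propose does \emph{not} recover the stated conditions \eqref{eq:cond_high_dim_local_law}, and this is a genuine gap.

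The issue is the $a_3$-coefficient in the bootstrap. With the cutoff at $|p|\asymp W^{-1}$ your low-mode sum is indeed $\prec N/M$, so your error bound reads
\[
\tilde{\cal E}_{xy}\;\prec\;\frac{\sqrt N}{M}\bigg(\frac{\Psi^3}{\sqrt\eta}+\frac{\Psi}{\sqrt{N\eta}}+\frac{\Psi^2}{\eta\sqrt N}\bigg)+\sup_{x,y}|\cal E_{xy}|\,,
\]
and the dominant cubic term forces $a_3\Psi=\sqrt N/(M^{3/2}\eta)\ll 1$, i.e.\ $\eta\gg N^{1/2}M^{-3/2}=L^{d/2}W^{-3d/2}$. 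For $d=2$ this coincides with $\eta\gg L/W^{1+d}$, but for $d\geq 3$ it is strictly stronger (since $L\geq W$), and feeding it into the delocalization criterion \eqref{large eta estimate} yields only $L\ll W^{5/3}$ for every $d\geq 3$ instead of the claimed $L\ll W^{1+d/(d+1)}$.

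The paper's remedy is twofold. First, it replaces your fixed cutoff by a tunable one at $|p|\asymp (W^{1-\alpha}L^{\alpha})^{-1}$ for $\alpha\in[0,1)$; this shrinks the low-mode prefactor to $N^{-1/2}(L/W)^2 (L/W)^{(d-2)(1-\alpha)}$ at the cost of inflating the high-mode multiplier to $(\eta+(W/L)^{2\alpha})^{-1}$. Second, and crucially, to absorb this larger multiplier the paper invokes the sharper entrywise estimate $\sup_{x,y}|\cal E_{xy}|\prec \Psi^4+\Psi^2 M^{-1/2}$ from the fluctuation-averaging machinery of \cite{EKY13} (the paper notes this should also follow from iterated cumulant expansions of $\cal P_{xy},\cal R_{xy}$, but does not carry this out). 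Your entrywise bound $\cal P_{xy}\prec \Psi^3 M^{-1}\sqrt{N/\eta}+\Psi M^{-1}\eta^{-1/2}$ is too weak here: the extra factor $\sqrt{N/(M\eta)}$ relative to $\Psi^4$ is precisely what spoils the bootstrap for $d\geq 3$. Optimising the resulting conditions over $\alpha$ then singles out $\alpha=1/2$ and produces \eqref{eq:cond_high_dim_local_law}. Once (i) is established in this sharper form, your sketches of (ii) and (iii) are fine and coincide with the paper's.
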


\begin{remark}[Comparison with the analogous results in \cite{EKYY13}]
Theorems \ref{th:high_dim_local_law} improves Theorem 8.4 and Theorem 8.6 in \cite{EKYY13}.
In fact, in Theorem 8.4 \cite{EKYY13} it is assumed that $L \ll W^{1 + d/4}$ and $\eta \gg \frac{L^{2}}{W^{d+2}}$,
in Theorem 8.6 that $L \ll W^{1+d/4}$. 

Furthermore, in Corollary 8.5 in \cite{EKYY13} the stated condition, i.e. $L \ll W^{1+d/4}$, is wrong: it should be $L \ll W^{1+\frac{d}{d+2}} $ because one must have that $ \frac{L^{2}}{W^{d+2}} \ll \eta \leq N/M$ in order to apply Proposition 7.1. Thus Theorem \ref{th:high_dim_local_law}(iii) improves also Corollary 8.5 in \cite{EKYY13}.
\end{remark}

\begin{proof}
\begin{itemize}
\item[(i)] For $d > 1$ we treat the error term $\tilde{ \cal E}$ of equation \eqref{self-const intro} very similarly to what we did in Proposition \ref{prop:fourier_analysis} for $d=1$. We define the analogue of the matrix $Q$ as $Q^{(\alpha)}$ such that $\widehat q^{(\alpha)}(p) = 1-\chi(p W^{1-\alpha}L^{\alpha})$ for $p \in \bb T_L^d$, where $q^{(\alpha)}_x = Q^{(\alpha)}_{x0}$ for $x \in \bb T_L^d$ and $\chi$ is a smooth bump function as defined in the proof of Proposition \ref{prop:fourier_analysis}. Here we set $\alpha \in [0, 1)$ since for $\alpha \geq 1$, $ \chi(p W^{1-\alpha}L^{\alpha})=0$ for $p \neq 0$, thus we would be back to the old analysis performed in \cite{EKYY13}. We will tune $\alpha$ in order to get the optimal conditions for the local law.

Let $w_x := (\Pi \cal E)_{xy}$, then 
$$
\sup_{x,y}|\tilde{ \cal E}_{xy}| \prec \sup_y \Big \Vert \frac{Q_{\alpha}}{I- |\fra m|^2 S} \b w \Big \Vert_{\infty} + \sup_y \Big \Vert \frac{I -Q_{\alpha}}{I- |\fra m|^2 S} \b w \Big \Vert_{\infty}
$$
where
\begin{align}\nonumber
& \sup_y \Big \Vert \frac{I - Q_{\alpha}}{I- |\fra m|^2 S} \b w \Big \Vert_{\infty} = \sup_y \bigg \Vert \sum_{p \in (\bb T_L^d)^* , p \neq 0} \frac{\chi(pW^{1+\alpha})}{1-|\mathfrak{m}|^2 \widehat{s}(p)} \b e(p) \langle \b e(p), \cal E \rangle_y \bigg \Vert_{\infty} \\
& 
 \prec \frac{1}{\sqrt{N}} \min \Big\{ \frac{1}{\eta}, \Big(\frac{L}{W} \Big)^2
\Big\} \Big( \frac{L}{W} \Big)^{(d - 2)(1-\alpha)}\sup_y \sup_{p \neq 0} |\langle \b e(p), \cal E \rangle_y|\label{eq:I-Q_alpha}
\end{align} 
and
$$
\sup_y \bigg \Vert \frac{ Q_{\alpha}}{1-|\mathfrak{m}|^2S} \b w \bigg \Vert_{\infty} 
\prec \frac{1}{\eta + (W/L)^{2 \alpha }} \sup_{x,y} |\cal E_{xy}|\,.
$$
Hence, for $\alpha < 1$
\begin{align}\label{eq:err_high_dim}
\sup_{x,y}|\tilde{ \cal E}_{xy}| \prec \frac{1}{\sqrt{N}} \min \Big\{ \frac{1}{\eta}, \Big(\frac{L}{W} \Big)^2
\Big\} \Big( \frac{L}{W} \Big)^{(d - 2)(1-\alpha)} \sup_y \sup_{p \neq 0} |\langle \b e(p), \cal E \rangle_y| + \frac{1}{\eta + (W/L)^{2 \alpha }}\sup_{x,y} |\cal E_{xy}|\,,
\end{align}
while for $\alpha \geq 1$ we get the same bound as in \cite{EKYY13}, i.e.\ 
\begin{align}\label{eq:err_old_bound}
\sup_{x,y}|\tilde{ \cal E}_{xy}| \prec \frac{1}{\eta + (W/L)^2} \sup_{x,y} |\cal E_{xy}|\,.
\end{align}
Moreover, from \cite{EKYY13} we know that 
\begin{align}\label{eq:old_bound}
\sup_{x,y} |\cal E_{xy}| \prec \Psi^4 + \Psi^2 M^{-1/2}.
\end{align}
Note that \eqref{eq:old_bound} is derived by using the fluctuation averaging bounds in \cite{EKY13}, but we believe that it can be obtained also by the cumulant expansion method by performing nested expansions of $\cal P_{xy}$ and $\cal R_{xy}$ defined in \eqref{eq:def_P} and \eqref{eq:def_R}. 

 By combining \eqref{eq:old_bound} with the \eqref{eq:P_vy} and \eqref{eq:R_vy} in Proposition \ref{prop:main_estimates} we get
$$
\sup_{x,y}|\tilde{ \cal E}_{xy}| \prec \Big( \frac{L}{W} \Big)^{2 \alpha} \Psi^2 \Big[\Big(\frac{N}{M} \Big)^{1-\alpha}\Big( \frac{1}{N \eta} + \frac{\Psi^{-1}}{N \sqrt \eta} + \frac{\Psi}{\sqrt{N \eta}} \Big) + \Psi^2 + M^{-1/2}\Big]\,.
$$
Recall that initially, thanks to Lemma \ref{lm:lsc}, $\Psi^2 = (M\eta)^{-1}$ and that $\Psi^{-2} \leq N \eta + M$. In order to apply Lemma \ref{lemma:self-improving_bound}, we need to have
\begin{align}
 \ \ \ L \ll W^{1 + \frac{d}{4\alpha+d(1-2\alpha)}}\,,\ \ \ \eta\gg \frac{L^{2\alpha+d/2-d\alpha}}{W^{2\alpha+3d/2-d\alpha}}\,, \ \ \ \eta \gg \frac{L^{2 \alpha}}{W^{2\alpha + d}}\,,\ \ \ L \ll W^{1+d/4\alpha}\,.
\end{align}
By comparing the above conditions, it is easy to see that the optimal conditions are attained for $\alpha = 1/2$, i.e.\
$$
L \ll W^{1+d/2}, \ \ \ \eta \gg \frac{L}{W^{1+d}}.
$$
Observe that from \eqref{eq:err_high_dim} we see that for $d=2$ we can choose any $\alpha \in [0, 1/2]$ because the sum over the moments is only logarithmically divergent:
$$
\sum_{\substack{p \in (\bb T_L^2)^* \\ p \neq 0}} \frac{\chi(pW^{1+\alpha})}{1-|\mathfrak{m}|^2 \widehat{s}(p)} \leq C \min \Big\{\frac{1}{\eta}, \frac{L^2}{W^2}  \Big\} \sum_{\substack{j \in \Z^2 \\ 0 < |j| \leq L/W^{1+\alpha}}} |j|^{-2} = \min \Big\{\frac{1}{\eta}, \frac{L^2}{W^2}  \Big\} O \Big(\log \frac{L}{W^{1+\alpha}} \Big).
$$
\item[(ii)]
From \eqref{large eta estimate} and (i) we know that for the eigenvector delocalization we need $\eta < M/N$ and  $\eta \gg L/W^{1+d}$, which is true when $L \ll W^{1 + \frac{d}{d+1}}$. 
\item[(iii)] From (i) we know that $\Lambda^2 \prec M^{-1} + (N\eta)^{-1} $ when $L \ll W^{1+d/2}$ and $\eta \gg L / W^{1+d}$. Here we want $\eta \geq (W/L)^2$, therefore, in order to use (i), we need to require
$$
\frac{L}{W^{1+d}} \ll \frac{W^2}{L^2},
$$
i.e.\ $L \ll W^{1+d/3}$. Note also that $\eta \geq (W/L)^2$ implies $\eta \geq M/N$ for $d \geq 2$, thus $\Lambda^2 \prec M^{-1}$. This means that \eqref{eq:err_high_dim} holds with $\Psi = M^{-1/2}$. In this setting it is easy to check from \eqref{eq:err_high_dim} and \eqref{eq:err_old_bound} that \eqref{eq:T_diff_2} and the analogues of \eqref{Tfin} and \eqref{Tfin1} are valid for any $\alpha \geq 1/2$.
\end{itemize}

\end{proof}


\appendix

\section{Proof of Lemma \ref{lemma:improv_bound}}

We recall that Lemma \ref{lemma:improv_bound} basically coincides with Corollary 5.4 in \cite{EKYY13}. Here we give a proof which does not rely on the averaging fluctuations estimate in \cite{EKY13}. 

%
We will assume that $ H$ is Hermitian with Gaussian entries and such that $ \E H_{ij}^2 =0$, but the result holds also in the general complex case and the additional terms are treated as we saw in Section \ref{sec:nonGaussian_generalComplex}.

To get the desired bounds, we consider the expectation $ \cal F_{ab} := \E|F_{ab}|^{2p}$ where $F_{ab}:=G_{ab} - \fra m \, \delta_{ab} $ and $p$ is an arbitrary strictly positive integer. 
The cumulant expansion yields
\begin{align*}
& z \cal F_{ab} = \E(zG_{ab}-z \fra m \, \delta_{ab})F_{ab}^{p-1} \ol F_{ab}^{p} = \E \bigg(\sum_i H_{ai}G_{ib}-(1+z \fra m) \, \delta_{ab}\bigg)F_{ab}^{p-1} \ol F_{ab}^{p} \\\nonumber
& = \fra m^2 \delta_{ab}\E F_{ab}^{p-1} \ol F_{ab}^{p} - \E \sum_i S_{ai} \partial_{ia} G_{ib} F_{ab}^{p-1} \ol F_{ab}^{p} \\\nonumber
& = \fra m^2 \delta_{ab} \E F_{ab}^{p-1} \ol F_{ab}^{p} - \E \sum_i S_{ai} G_{ii}G_{ab} F_{ab}^{p-1} \ol F_{ab}^{p} + \E \sum_i S_{ai}  G_{ib} \partial_{ia}F_{ab}^{p-1} \ol F_{ab}^{p}
\end{align*}
where we used \eqref{eq:m_equation}. Using the trivial identities $ G_{ii} = G_{ii} - \fra m + \fra m$ and $ G_{ab} = F_{ab} + \fra m \delta_{ab}$, we get
\begin{align*}
 (z + \fra m) \cal F_{ab} =& - \E \sum_i S_{ai} (G_{ii}-\fra m)F_{ab} F_{ab}^{p-1} \ol F_{ab}^{p} - \fra m \delta_{ab} \E \sum_i S_{ai} (G_{ii}-\fra m) F_{ab}^{p-1} \ol F_{ab}^{p} \\\nonumber
&+ \E \sum_i S_{ai}  G_{ib} \partial_{ia}F_{ab}^{p-1} \ol F_{ab}^{p}\,.
\end{align*}
Using \eqref{eq:X_bound}, \eqref{eq:m_equation} and that $|F_{ab}| \leq \Lambda \prec \Psi$, one obtains
\begin{align}\label{eq:F_temp_bound}
\cal F_{ab} = O_{\prec}(\Psi^2 \E| F_{ab}|^{2p-1}) + O_{\prec}\bigg(\bigg\vert \E \sum_i S_{ai}  G_{ib} \partial_{ia}F_{ab}^{p-1} \ol F_{ab}^{p} \bigg\vert \bigg).
\end{align}
Let us examine the second term on the right hand side of \eqref{eq:F_temp_bound}: we note that
\begin{align*}
\partial_{ia} F_{ab} = - G_{ai}G_{ab}, \ \ \ \partial_{ia} \ol F_{ab} = - \ol G_{aa} \ol G_{ib},
\end{align*}
therefore \eqref{eq:F_temp_bound} becomes
\begin{align}\nonumber
\cal F_{ab} =& \, O_{\prec}(\Psi^2 \E| F_{ab}|^{2p-1}) + O_{\prec}\bigg(\bigg\vert \E \sum_i S_{ai}  G_{ib} G_{ai}G_{ab}F_{ab}^{p-2} \ol F_{ab}^{p} \bigg\vert \bigg) \\
& + O_{\prec}\bigg(\bigg\vert \E \sum_i S_{ai} | G_{ib}|^2 \ol G_{aa} F_{ab}^{p-1} \ol F_{ab}^{p-1} \bigg\vert \bigg).
\label{eq:temp_bound_F}
\end{align}
In the second term on the right hand side of \eqref{eq:temp_bound_F}, when $a \neq b$ we can use \eqref{eq:bound_Y} so that
$$
\cal F_{ab} \prec \Psi^2 \E| F_{ab}|^{2p-1} + \E(\Psi^4 + T_{ab})| F_{ab}|^{2p-2},
$$
while, when $a = b$, the trivial inequality $|xy| \leq (|x|^2 + |y|^2)/2$ yields
$$
\sum_i S_{ai}G_{ib} G_{ai} \prec \sum_i S_{ai}(|G_{ai}|^2 + |G_{ia}|^2) = T_{aa} + T_{aa}',
$$
so that
$$
\cal F_{aa} \prec \Psi^2 \E| F_{aa}|^{2p-1} + \E(T_{aa} + T'_{aa})| F_{aa}|^{2p-2}.
$$
Thus, we finally have
\begin{align*}
\cal F_{ab} & \prec \Psi^2 \E| F_{ab}|^{2p-1} + (\Psi^4 + \Omega_{ab})\E| F_{ab}|^{2p-2}  \prec (\Psi^2 + \Omega_{ab})\E| F_{ab}|^{2p-1} + (\Psi^4 + \Omega^2_{ab}) \E| F_{ab}|^{2p-2}.
\end{align*}
Lemma \ref{lemma:moment_to_domination} completes the proof. 

\begin{remark}
The estimate in Lemma \ref{lemma:improv_bound} cannot be improved in the sense that $F_{ab}$ will always be bounded at least by $T_{ab} = \sum_i S_{ai}|G_{ib}|^2$ because $T_{ab}$ appears explicitly in the third term on the right hand side of \eqref{eq:temp_bound_F}.
\end{remark}

\end{document}